\newcommand{\Rst}{\mathbb{R}}
\newcommand{\Zst}{\mathbb{Z}}
\newcommand{\Nst}{\mathbb{N}}
\newcommand{\Rdst}{\mathbb{R}^d}
\newcommand{\set}[2]{\big\{ \, #1 \, \big| \, #2 \, \big\}}
\newcommand{\norm}[1]{\lVert#1\rVert}
\def\supp{\operatorname{supp}}
\def\sinc{\operatorname{sinc}}
\def\clD{{\mathcal{M}}}
\def\clP{{\mathcal{P}}}
\def\bI{{\mathbb I}}
\def\bom{\mathcal{B}_\Omega}
\def\pd{path density }
\def\nom{\textit{Nyq}_\Omega}
\def\nomab{\textit{Nyq}_\Omega^{A,B}}
\def\Hom{\textit{Hom}_\Omega}
\def\Par{\textit{Par}_\Omega}
\DeclareMathOperator{\littleo}{o}
\newcommand{\length}{\ell}
\newcommand{\SP}{Q}
\newcommand{\sett}[1]{\ensuremath{\left \{ #1 \right \}}}
\newcommand{\abs}[1]{\ensuremath{\left| #1 \right| }}
\newcommand{\ip}[2]{\ensuremath{\left<#1,#2\right>}}
\newtheorem{definition}{Definition}[section]
\newtheorem{thm}{Theorem}[section]
\newtheorem{lemma}[thm]{Lemma}
\newtheorem{proposition}[thm]{Proposition}
\newtheorem*{proposition*}{Proposition}
\newtheorem{remark}[thm]{Remark}
\def\lrd{L^2(\mathbb{R}^d)}
\newcommand{\rd}{\mathbb{R}^d}
\newcommand{\bound}{{\frac{1}{2}+ \frac{d \pi^{2d-2}}{2^{2d-1}}\frac{B}{A}}}
\newcounter{con}
\newenvironment{condition}{\begin{list}{{ \bf (C\arabic{con}) \ }}{\usecounter{con}
\setlength{\leftmargin}{14pt}
\setlength{\rightmargin}{12pt}
\setlength{\itemindent}{-25 pt}
\setlength{\labelsep}{-3 pt}
\setlength{\itemsep}{5 pt}
}}{\end{list}}
\def\Con#1{\textbf{(C\ref{con:#1})}}
\newcounter{conAB}
\newenvironment{conditionAB}{\begin{list}{{ \bf (C\arabic{conAB}$^{A,B}$) \ }}{\usecounter{conAB}
\setlength{\leftmargin}{14pt}
\setlength{\rightmargin}{12pt}
\setlength{\itemindent}{-25 pt}
\setlength{\labelsep}{-3 pt}
\setlength{\itemsep}{5 pt}
}}{\end{list}}
\def\Re{\Rst}
\newlength{\noteWidth}
\title{On Minimal Trajectories for Mobile Sampling of Bandlimited Fields}
\author[K. Gr\"{o}chenig]{Karlheinz Gr\"{o}chenig}
\address{Faculty of Mathematics, University of Vienna, Austria}
\email{karlheinz.groechenig@univie.ac.at }
\author[J. L. Romero]{Jos\'{e} Luis Romero}
\email{jose.luis.romero@univie.ac.at }
\author[J. Unnikrishnan]{Jayakrishnan Unnikrishnan}
\address{Audiovisual Communications Laboratory,
School of Computer and Communication Sciences,
Ecole Polytechnique F\'{e}d\'{e}rale de Lausanne (EPFL),
Switzerland}
\email{jay.unnikrishnan@epfl.ch}
\author[M. Vetterli]{Martin Vetterli}
\email{martin.vetterli@epfl.ch}
\subjclass[2010]{94A20,94A12}
\keywords{
Spatial field sampling, bandlimited field sampling, mobile sensing,
sensor trajectories, path density, convex spectrum, Beurling density.
}
\begin{document}
\begin{abstract}
We study the design of sampling trajectories for  stable sampling and
the
reconstruction of bandlimited spatial fields using
mobile sensors. The spectrum  is assumed to be a symmetric convex set.
As a performance metric we use the path density of the
set of sampling trajectories that is  defined as the
total distance traveled by the moving sensors per unit spatial volume of the spatial region being monitored. Focussing
first on parallel lines, we identify the set of parallel lines with minimal path density that contains a set of stable
sampling for fields bandlimited to a known set. We then show that the problem becomes ill-posed when the optimization is
performed over all trajectories by demonstrating a feasible trajectory set with arbitrarily low path density. However,
the problem becomes well-posed if we explicitly specify the stability margins. We demonstrate this by obtaining a
non-trivial lower bound on the path density of an arbitrary  set of
trajectories that contain a sampling set with explicitly
specified stability bounds.
\end{abstract}
\maketitle

\section{Introduction}
The reconstruction of a function from given measurements is a
fundamental task in data processing and occupies numerous directions
of research in mathematics and engineering. A typical problem
requires the reconstruction or approximation of a physical field from
pointwise measurements. A field may be a distribution of temperatures
or water pollution or a solution to a diffusion equation, in
mathematical terminology a field is simply a smooth function of
several variables. The standard assumption on the smoothness is that
the field is bandlimited to a compact spectrum. If  the spectrum is
a  fundamental domain of a lattice in $\rd$ or a symmetric convex
polygon in $\mathbb{R}^2$, then there exist precise reconstruction formulas from
sufficiently many samples in analogy to the Shannon-Whittaker-Kotelnikov sampling theorem~\cite{LR00,petmid62}.

Let
\begin{equation}
\widehat{f}(\omega) = \int_{\Re^d} f(r) e^{-{2\pi \sf i}\langle
  \omega, r\rangle } dr, \qquad \omega \in \Re^d \, ,
\label{eqn:fourtran}
\end{equation}
be the Fourier transform~\footnote{Note that in  \cite{unnvet11Allerton} and
\cite{unnvet13TIT} the Fourier transform was defined without the $2 \pi$ in the exponent.}
 of $f\in L^1(\rd ) $ or $f\in \lrd $, where
${\sf i}$ denotes the imaginary unit  and $\langle \omega ,r
\rangle$ denotes the scalar product between vectors $\omega $
and $r$ in $\Re^d$. We say that $f$ is bandlimited to the  closed set
$\Omega \subset \Re^d$, if its  Fourier transform $\hat{f}$ is supported on $\Omega$. In this case we write
\begin{equation}
\bom: = \{f \in L^2(\Re^d): \widehat{f}(\omega) = 0 \mbox{ for almost every } \omega \notin \Omega\}\label{eqn:bmodefn}
\end{equation}
for  the space  of fields
with finite energy bandlimited to the spectrum  $\Omega$. In the
context of field estimation we always assume that the spectrum is a
compact, symmetric, convex set.

The classical theory of sampling and reconstructing of such high-dimensional bandlimited fields dates back to Petersen
and Middleton \cite{petmid62} in signal analysis and to
Beurling~\cite{beurling66} in harmonic analysis. Both  identified conditions for reconstructing such
fields from their point
measurements in $\Re ^d$.
Further research on non-uniform  sampling generated more results on conditions for perfect
reconstruction from samples taken at non-uniformly distributed spatial
locations. See ~\cite{Gro92b,GS01}  and the  survey \cite{aldgro01}.
Previous  work deals primarily   with the problem of reconstructing
the field from measurements
taken by a collection of static sensors distributed in space, like
that shown in Figure \ref{fig:sampR2a}. In this case the
performance metric for  quantifying  the efficiency of a sampling
scheme is the spatial density of samples. This
is  the average  number of sensors  per unit volume  required for
the stable sampling of the monitored
region.

\begin{figure}
\centering
\hspace*{\fill}
\subfigure[Static sampling on points]{
\includegraphics[width=2.1in]
{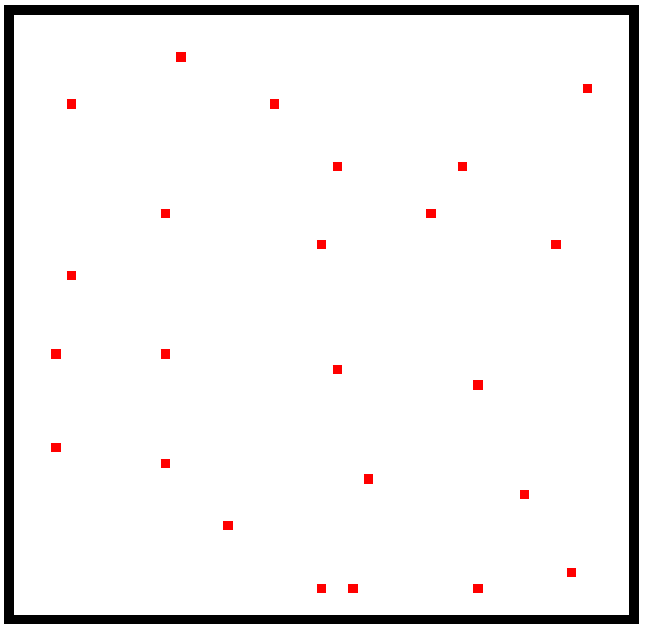}
\label{fig:sampR2a}
}
\hfill
\subfigure[Mobile sampling on a curve]{
\includegraphics[width=2.1in]
{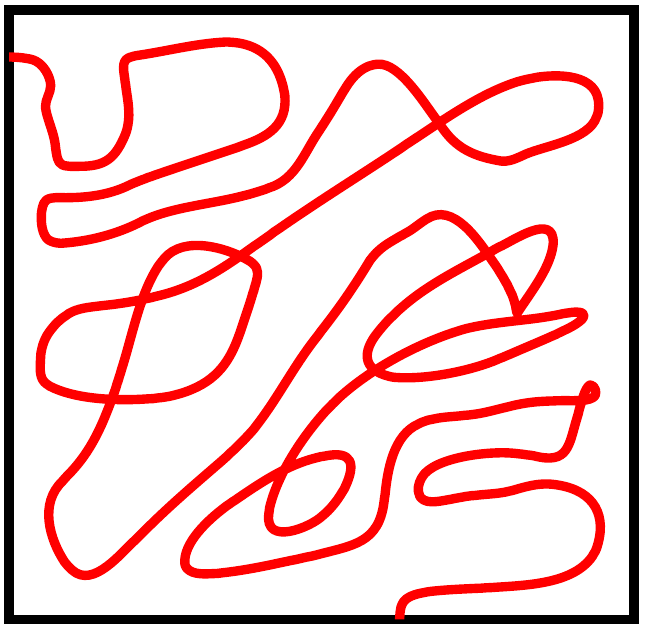}
\label{fig:sampR2b}
}
\hspace*{\fill}
\caption[Two approaches for sampling a field in $\Re^2$]{Two approaches for sampling a field in $\Re^2$}
\end{figure}

In this paper we investigate a different method for the acquisition
of the samples, which we call \emph{mobile sampling}. The samples are
taken by a mobile sensor that moves  along a continuous path,  as is
shown in Figure \ref{fig:sampR2b}. In such a case  it is often
relatively inexpensive to increase the spatial
sampling rate along the sensor's path while the main cost of the sampling scheme comes from the total distance that
needs to be traveled by the moving sensor. Hence it is reasonable to assume that the sensor can record the field values
at an arbitrarily high but finite resolution on its path.

The new method for the acquisition of samples changes the mathematical
nature of the problem completely. When using samples from static
sensors, we need to establish a sampling inequality with
evaluations of the form
\begin{equation}
A \|f\|^2 \leq \sum_{\lambda \in \Lambda} |f(\lambda)|^2 \leq B
\|f\|^2, \quad \mbox{ for all }f \in \bom \, ,
\label{eqn:frameboundsa}
\end{equation}
for constants $A,B>0$ independent of $f$.

For mobile sampling, we need to establish a ``continuous'' sampling  inequality of the form
\begin{equation}
  \label{eq:c2}
A \|f\|^2 \leq \int_{\rd }  |f(r)|^2 \, d\mu (r) \leq B \|f\|^2, \mbox{ for
  all }f \in \bom \, ,
  \end{equation}
where $\mu $ is the sum of line integrals along the paths.

Again, the performance metric should reflect the cost required
for the data acquisition. For \eqref{eqn:frameboundsa} the appropriate
metric is the average number of sensors, i.e., samples, per unit
volume. For \eqref{eq:c2} some of us have argued
in~\cite{unnvet11Allerton} and in \cite{unnvet13TIT} that  the
relevant metric  should be the average path length  traveled by the
sensors  per unit  volume (or area, if  $d =2$). We call this metric
the \textit{path density}. Such a metric is directly  relevant in applications
like environmental monitoring using moving sensors
\cite{unnvet13TSP}, \cite{sinnowram06}.
In retrospect this metric is also useful in designing $k$-space trajectories for Magnetic
Resonance Imaging (MRI) \cite{benwu00}, where the path density
can be used as a proxy for the total scanning time per unit area in $k$-space.

The continuous sampling inequality~\eqref{eq:c2} raises many
interesting questions both for engineers and for mathematicians. On
the   mathematical side are  the abstract construction of continuous
frames in the sense of ~\cite[Chaps.~3 and 5]{AliAnt2000} or~\cite{FR05} or the analysis of
sampling measures and their properties, see~\cite{Lue00,OC98} for
a theory of sampling measures for Fock spaces and for Bergmann spaces.
On the engineering  side, we need to  design  concrete, realizable  trajectories with
a small path density for bandlimited fields with convex spectrum. This
problem was introduced
by some of us  in \cite{unnvet11Allerton} and \cite{unnvet13TIT} and
answered for the special case of trajectory sets that consist   of a
union of uniformly spaced  lines.

The contribution in this article is twofold.
 First, we study arbitrary trajectory sets of parallel lines and
 derive a necessary condition for the minimal path density in the
 style of Landau's  famous result  in ~\cite{lan67}. Extending  the results in
\cite{unnvet11Allerton,unnvet13TIT, unnvet13SAMPTA} we  show, in Theorem \ref{th:parallel}, that the minimal path
density achievable by sampling along  trajectories
 of arbitrary parallel lines
 is  exactly   the area of the maximal hyperplane section of the spectrum. We work under the standard assumption
that the spectrum of the signals is convex and symmetric (although some results
holds for more general spectra, see Section \ref{sec:conc}).

At first glance, the sampling along parallel lines seems to be an easy
generalization of point sampling, because it can be reduced to the  sampling problem in
smaller dimensions.   However, even this case offers some interesting and challenging
problems that we did not envision before. For instance, in Section~3
we will use the existence of universal samplings sets as established
by Olevsky and Ulanovsky~\cite{olul08} and by Matei and Meyer~\cite{mame10} in
order  to prove that the frame bounds  are uniform for sections of
convex sets. In addition,  this result enriches our knowledge about the
properties of universal sampling sets. For another  crucial argument  we need
the Brunn-Minkowski inequality~\cite{ga02}. 

Of course, the mathematician's immediate instinct is to study more
general sets of trajectories and try to prove a result analogous to
Landau's necessary condition for the path density. We show in
Proposition~\ref{prop:infzero} that such a
result cannot hold by constructing stable trajectory sets  with arbitrarily
small path density. Thus in a sense there is no optimal configuration
of paths and the problem of optimizing the path density is
ill-posed.  This answers a question raised in
\cite{unnvet13TIT}. However, as soon as we minimize over trajectory
sets with given stability parameters $A,B$ (uniform frame bounds) the
optimization problem becomes well-posed. Our main density  result
(Theorem~\ref{th:infab})  shows  that  the path density for a stable set of 
trajectories  is bounded below by an expression involving the
stability parameters and the geometry of the spectrum.

This is a report on a successful and fertile collaboration between
engineers and mathematicians. We, the mathematicians, are intrigued by
the questions that motivate mobile sensing. Although the mathematical
literature has investigated generalizations of sampling (the theorems
of Sereda-Logvinenko and the theory of sampling measures) for the sake
of generalization, we would never have dreamt of the particular
conditions on the paths that are imposed by practical considerations
(see condition ~\Con{path3}). We, the engineers, are intrigued by the
mathematical subtleties that popped up at every corner and
subsequently led to an extended theory of path sampling.

The paper is organized as follows. In Section \ref{sec:probstat} we describe the formal problem statement. Then, in
Section \ref{sec:optresults}, we characterize the minimal density of sampling trajectories consisting of
parallel lines. Section \ref{sec:disc} treats the problem of optimizing over arbitrary trajectories, and
Section \ref{sec:conc} presents some conclusions. The proofs of some technical lemmas needed throughout the article are
postponed to Section \ref{sec_tec}, so as not to obstruct the flow of the article.

\bigskip

\noindent {\bf Notation}. We use $\langle \cdot  , \cdot  \rangle$ to
denote the canonical inner product on $\rd $ and $\lrd $, and $e_k$ to denote the
unit vector along the $k$-th coordinate axis. For $u \in \Re^d$ we
denote the hyperplane orthogonal to $u$ through the
origin by $u^\perp = \{x\in\Re^d: \langle x, u \rangle = 0\}$,  and
$\clP_{u^\perp}S$ denotes the orthogonal projection of a set
$S\subset \Re ^d$ onto the
hyperplane $u^\perp$. For a
set $S \subset \Re^d$ we use $|S|$ to denote the
volume of $S$ with respect to Lebesgue measure.
 By  $B_a^d(x)$ we denote the  closed  Euclidean ball of radius $a$ centered at
 $x \in \Re^d$  and $B_a^d = B_a^d(0)$, and by $Q_a(x) =
 x+[-a,a]^d$ we denote the cube of width $2a$ centered at $x$.
The cardinality of a finite set $\Lambda$ is denoted by  $\# \Lambda $.

Let  $I= [a,b] $ be a bounded   interval and  $\gamma : I \to \rd $ is a
curve in $\rd $. We say that $\gamma $ is  rectifiable, if  $\ell
(\gamma ) = \sup \sum _{k=0} ^{n-1} 
\|\gamma (t_{k+1}) - \gamma (t_k)\| $ is finite, where the supremum is
taken over all finite partitions $a = t_0 < t_1 < \dots < t_{n-1}<t_n
=b $. In this case  $\ell (\gamma )$ is called  the arc length of
$\gamma$. Every piecewise differentiable curve is rectifiable. 

If
quantities $X$ and $Y$ satisfy the condition that there exist $A,B > 0$ with
\[
A Y \leq X \leq B Y,
\]
we write  $X \asymp Y$. We also use the notation $X \lesssim Y$ to indicate that there exists
$B>0$ such that $X \leq B Y$.
To compare the size of functions $g,h :\Re \to \Re ^+$, we use the
Landau notation $O$ and $o$. The symbol
 $h = O(g(x))$ means that  there exist $k>0$ and $y$ such that for all
 $x > y$, we have $|h(x)| \leq k g(x)$, and  $h = o(g(x))$ if $\lim
 \frac{h(x)}{g(x)} = 0$.

We say that a set of points $\Lambda \subset \Re^d$ is \emph{uniformly
  discrete} or \emph{separated} if  $\inf\{\|x-y\|: x,y \in 
\Lambda, x \neq y\} > 0$, i.e., there exists $r > 0$ such that for any two distinct points $x,y \in \Lambda$ we have
$\|x-y\| > r$. For example, a  lattice  in $\Re^d$ is  uniformly discrete, but a sequence in $\Re^d$ converging to
a point in $\Re^d$ is not. The lower and upper Beurling densities of $\Lambda \subseteq \Rdst$ are
\begin{align*}
D^{-}(\Lambda) := \liminf_{a \to \infty} \inf_{x \in \Re^d}\frac{\#(\Lambda \cap B_a^d(x))}{\abs{B_a^d}},
\\
D^{+}(\Lambda) := \limsup_{a \to \infty} \sup_{x \in \Re^d}\frac{\#(\Lambda \cap B_a^d(x))}{\abs{B_a^d}}.
\end{align*}
 For every  compact set $K \subseteq \Rdst$ with non-empty interior and whose boundary has
 measure zero,
the lower density can be also calculated as:
\[
D^{-}(\Lambda) = \liminf_{a \to \infty} \inf_{x \in \Re^d}\frac{\#(\Lambda \cap (aK+x))}{a^n\abs{K}},
\]
and a similar statement holds for the upper density~\cite[Lemma 4]{lan67}.

The \emph{covering constant} of a set $\Lambda \subseteq \Rdst$ is
\begin{align*}
\nu = \nu (\Lambda )  = \sup_{x\in \rd } \# (\Lambda  \cap  Q_{1/2}(x) ).
\end{align*}
A set is called \emph{relatively separated} if it has a finite
covering constant, which  holds if and only if it  has finite upper Beurling density.

A set $E \subseteq \Rdst$ is called a \emph{convex body} if it is convex, compact and has non-empty interior. A convex
body is called \emph{centered} if $0 \in E^\circ$ and symmetric if $E=-E$. The following fact will be frequently used
in approximation arguments.

\begin{lemma}[Dilation of centered convex bodies]
\label{lemma_conv_1}
Let $E \subseteq \Rdst$ be a centered convex body. Let $\delta\in(0,1)$, then $E \subseteq
(1+\delta)E^\circ$ and $(1-\delta)E \subseteq E^\circ$.
\end{lemma}

\section{Trajectory sets and sampling} \label{sec:probstat}
A \textit{trajectory} $p$ in $\Re^d$ is  the image of
curve $\gamma : \Re  \to  \Re^d$, i.e., $p = \gamma (\Re )$ such that
the restriction of $\gamma $ to any finite interval is rectifiable.
A \textit{trajectory set} $P$ is defined as a countable collection of trajectories:
\begin{equation}
P = \{p_i: i\in \bI\} \label{eqn:trajset} \, ,
\end{equation}
where $\bI$ is a countable set of indices and every  $p_i$ is a
trajectory.
In analogy to the Beurling density we define the lower and upper
  \textit{\pd} of a trajectory set $P$  as follows:
  \begin{definition}
    Let  $\clD^P(a,x)$ be  the total arc-length of the trajectories in
   $P \cap  B_a^d(x)$. Then the \emph{lower path density} and the
   upper path density are 
\begin{align}
\label{eqn:pddefnlow}
\ell^{-}(P) := \liminf_{a \to \infty} \inf_{x\in\Re^d} \frac{\clD^P(a,x)}{|B_a^d|},
\\
\label{eqn:pddefnup}
\ell^{+}(P) := \limsup_{a \to \infty} \sup_{x\in\Re^d} \frac{\clD^P(a,x)}{|B_a^d|}.
\end{align}
If $\ell^{-}(P)=\ell^{+}(P)$, then $P$  is said to possess the
homogeneous path density  $\ell (P) = \ell ^{\pm} (P) $.
\end{definition}
\begin{figure}
\centering
\subfigure[The Beurling density of a lattice used in classical
sampling theory quantifies the number of samples per unit area (or per
volume  in higher dimensions).]{ 
\includegraphics[width=2.6in]
{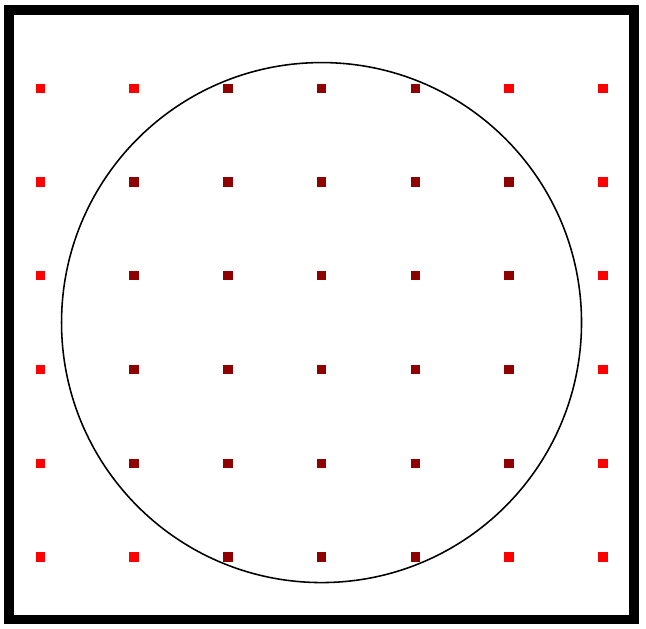}
\label{fig:beur}
}\hfill
\subfigure[The path density of a trajectory set used in mobile
sampling quantifies the total length of the paths per unit area (or per
volume  in higher dimensions).]{ 
\includegraphics[width=2.6in]
{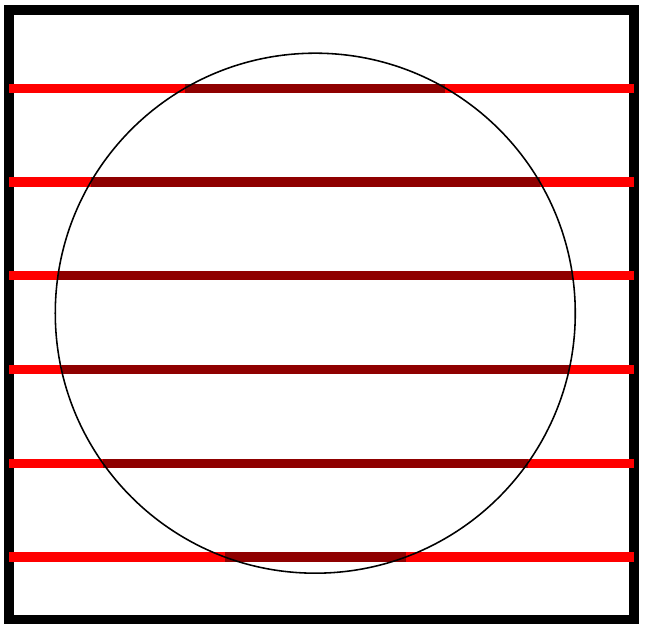}
\label{fig:pd}
}
\caption{Illustration of Beurling and path densities for classical sampling on a lattice and mobile sampling on a set of
equispaced parallel lines (uniform set) in $\Re^2$.}
\label{fig:beurlingandpd}
\end{figure}  
An illustration comparing Beurling and path densities is provided in Figure \ref{fig:beurlingandpd}. 
As with Beurling's density, the path density does not depend on the particular choice of the
Euclidean ball. More precisely, we have the following result.
\begin{lemma}
\label{lemma_dens_tile}
Let $K\subset \rd$ be a compact set with non-empty interior and with a boundary of
measure zero and let $\clD^P(a,x,K)$ be the total arc-length of trajectories from $P$ located in
$x+aK$. Then $\ell^{-}(P) := \liminf_{a \to \infty} \inf_{x\in\Re^d}
\frac{\clD^P(a,x,K)}{a^n\abs{K}}$ and $\ell^{+}(P) := \limsup_{a \to \infty} \sup_{x\in\Re^d}
\frac{\clD^P(a,x,K)}{a^n\abs{K}}$.
\end{lemma}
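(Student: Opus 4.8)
The statement is the path-density counterpart of Landau's \cite[Lemma~4]{lan67}, and the plan is to follow the same route, replacing the counting measure of a point configuration by the \emph{arc-length measure} $\mu$ of $P$: for a Borel set $S\subseteq\rd$ I let $\mu(S)$ be the total arc-length of the trajectories of $P$ inside $S$, so that $\mu$ is a monotone, countably additive, possibly $+\infty$-valued Borel measure with $\clD^P(a,x)=\mu(B_a^d(x))$ and $\clD^P(a,x,K)=\mu(x+aK)$. Writing $\ell^-_K(P)$ and $\ell^+_K(P)$ for the two quantities in the statement, the goal is to show that they do not depend on the body $K$ (among compact sets with non-empty interior and null boundary), for then the choice $K=B_1^d$ reduces them to the $\ell^{\pm}(P)$ of the definition via Euclidean balls. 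Everything is compared with cubes: I fix a grid of side $L>0$ consisting of \emph{half-open} cubes — they tile $\rd$ exactly, so $\mu$ is genuinely additive over them — and set $m_L:=\inf_{y}\mu\big(y+[0,L)^d\big)$ and $M_L:=\sup_{y}\mu\big(y+[0,L)^d\big)$.

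The core is a two-sided squeeze on $\ell^-_K(P)$ (the argument for $\ell^+_K(P)$ is symmetric). For the \emph{lower} bound, $x+aK$ contains every grid cube lying inside it, and the number $N^{\mathrm{in}}_{a,x}$ of these satisfies
\[
N^{\mathrm{in}}_{a,x}\,L^d\ \ge\ |x+aK|-\big|(x+a\partial K)+B_{L\sqrt d}^d\big|\ =\ a^d\big(|K|-\varepsilon_a\big),
\]
where $\varepsilon_a:=\big|\partial K+B_{L\sqrt d/a}^d\big|\to|\partial K|=0$ as $a\to\infty$ (because $\partial K$ is compact of measure zero), uniformly in $x$; hence $\mu(x+aK)\ge N^{\mathrm{in}}_{a,x}m_L$, and dividing by $a^d|K|$, taking $\inf_x$ and then $\liminf_{a\to\infty}$ gives $\ell^-_K(P)\ge m_L/L^d$ for every $L$, i.e.\ $\ell^-_K(P)\ge\sup_{L>0}m_L/L^d$. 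For the \emph{upper} bound I average over translates: fix $a$, pick $T=T(a)$ with $T/a\to\infty$, choose a cube $C$ of side $T$ with $\mu(C)\le m_T+1$, and average $\mu(x+aK)$ over $\mathcal X:=\{x:x+aK\subseteq C\}$, which has volume $\ge T^d(1-o(1))$; by Tonelli,
\[
\inf_{x}\mu(x+aK)\ \le\ \frac{1}{|\mathcal X|}\int_{\mathcal X}\mu(x+aK)\,dx\ =\ \frac{1}{|\mathcal X|}\int_{C}\big|\mathcal X\cap(z-aK)\big|\,d\mu(z)\ \le\ \frac{a^d|K|\,(m_T+1)}{|\mathcal X|},
\]
since $|\mathcal X\cap(z-aK)|\le a^d|K|$ and the integrand vanishes off $C$. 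Dividing by $a^d|K|$ and letting $a\to\infty$ (say $T(a)=a^2$) yields $\ell^-_K(P)\le\liminf_{t\to\infty}m_t/t^d$. As $\sup_{L>0}m_L/L^d\ge\limsup_{t}m_t/t^d\ge\liminf_{t}m_t/t^d$, the two bounds force $\ell^-_K(P)=\liminf_{t\to\infty}m_t/t^d$, which involves only $\mu$ and not $K$; the same scheme (outer grid cubes for one bound, averaging inside a near-maximal cube for the other) gives $\ell^+_K(P)=\limsup_{t\to\infty}M_t/t^d$.

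The step I expect to be the real obstacle is the one that is absent from Landau's point-counting setting: $\mu$ is \emph{singular} with respect to Lebesgue measure, so a trajectory can put positive mass on a Lebesgue-null set such as a piece of a hyperplane. This is exactly why the cubes must be taken half-open — then the grid partitions $\rd$, $\mu$ is perfectly additive over it, and no mass is lost or double-counted — and the hypothesis $|\partial K|=0$ is used only to bound the \emph{number} of grid cubes straddling $\partial(x+aK)$, which is a genuine Lebesgue-measure estimate. Two further points need attention. First, $\mu$ may fail to be uniformly locally finite; if $\ell^+(P)=\infty$ the identity is immediate once one checks both sides are $+\infty$, which follows by sandwiching $x+aK$ between an inscribed ball of radius $\rho a$ and a circumscribed ball of radius $Ra$ (giving $\tfrac{\rho^d|B_1^d|}{|K|}\,\ell^{\pm}(P)\le\ell^{\pm}_K(P)\le\tfrac{R^d|B_1^d|}{|K|}\,\ell^{\pm}(P)$), while if $\ell^+(P)<\infty$ then $m_L,M_L<\infty$ for every $L$ and all the quantities above are finite. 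Second, in the $\ell^+$ version the averaging set must be taken slightly larger than the cube $C$, so one must check that the $\mu$-mass of the boundary layer of $C$ is $o(M_T)$; this again holds because $M_L/L^d$ is bounded (by $\ell^+(P)<\infty$). The remaining work — uniformity in $x$ of the error terms and the elementary estimate $|\mathcal X|\ge T^d(1-o(1))$ — is routine and identical in spirit to \cite[Lemma~4]{lan67}.
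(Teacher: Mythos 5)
Your proof is correct and follows exactly the route the paper itself takes: the paper gives no details, merely stating that the lemma ``can be proved by following Landau's proof'' of \cite[Lemma 4]{lan67}, and your write-up is a faithful adaptation of that argument to the arc-length measure (disjoint half-open grid cubes for additivity, the $|\partial K|=0$ estimate to control straddling cubes, and the Tonelli translation-averaging step to match the inner and outer cube bounds). The only cosmetic wrinkle is that the degenerate case for the $\ell^{-}$ identity should be split on whether $\ell^{-}(P)=\infty$ rather than $\ell^{+}(P)=\infty$ (so that $m_T<\infty$ is available exactly when the averaging step needs it), but the sandwich inequality you record already supplies this.
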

Lemma \ref{lemma_dens_tile} can be proved by following Landau's proof of the analogous result for Beurling's
density \cite[Lemma 4]{lan67}. We refer the reader to that article.

The simplest example of a trajectory set in $\Re^2$ is
a sequence of equispaced parallel lines in  $\Re^2$ (e.g., see Figure \ref{fig:unifset2d}). We call
such a trajectory set a \textit{uniform set in $\Re^2$}. Such a uniform set has a path density equal to
$\frac{1}{\Delta}$, where $\Delta$ is the spacing between the lines
(see \cite[Lemma 2.2]{unnvet13TIT}). Similarly a
\textit{uniform set in $\Re^d$} is defined as a collection of parallel lines in $\Re^d$ such that the cross-section
forms a $(d-1)$-dimensional lattice, see Figure \ref{fig:unifset3d}.
\begin{figure}[ht]
\centering
 \hspace*{\fill}
\subfigure[Uniform set in $\Re^2$]{
\includegraphics[width=2.1in]
{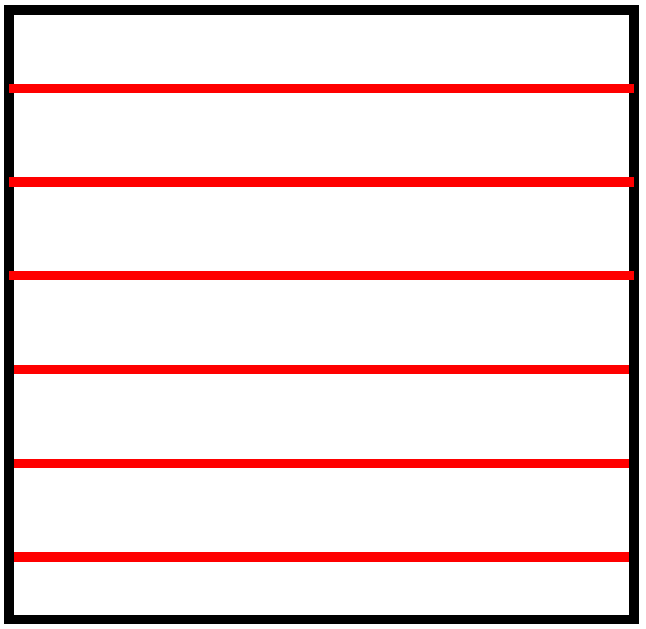}
\label{fig:unifset2d}
}
\hfill
\subfigure[Uniform set in $\Re^3$]{
\includegraphics[width=2.1in]
{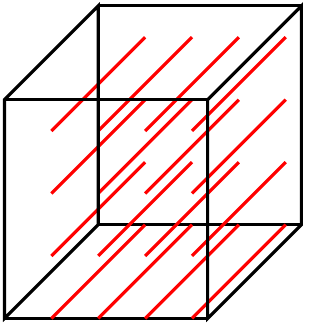}
\label{fig:unifset3d}
}
\hspace*{\fill}
\caption{Examples of uniform sets in $\Re^2$ and $\Re^3$.}
\end{figure}
Recall that for static sampling with fixed sensors the appropriate notion of
stability was the sampling inequality~\eqref{eqn:frameboundsa}. 
For mobile sampling along trajectory sets we require similar
conditions for the stability and are led to the following
definition.

\begin{definition} \label{trajset}
A trajectory set $P$ of the form (\ref{eqn:trajset}) is called a \emph{stable Nyquist trajectory set} for $\bom$ if $P$ 
satisfies the following conditions:
\begin{quote}
\nobreak
\begin{condition}

\item \label{con:recon2} \emph{[Nyquist]} There exists a uniformly discrete set $\Lambda$ of points on the trajectories
in $P$, $\Lambda \subset P$,   such that $\Lambda$ forms a set of
stable sampling for $\bom$.

\item \label{con:path3} \emph{[Non-degeneracy]} There exists a function $\delta : \Re ^+ \to \Re ^+$ such that
  $\delta(a) = \littleo(a^d)$ with the following property:  For every  $x \in
\Re^d$ and every $a \gg 1$, there is a rectifiable curve
$\alpha:[0,1] \to \Rdst, $ (depending on $x$ and $a$),  such that (i)
$\length(\alpha)=\clD^P(a,x) +\delta(a)$, and
\mbox{(ii) $P \cap B_a^d(x) \subset \alpha ([0,1])$}, i.e., the curve $\alpha$ contains the
portion of the trajectory set $P$ that is located within $B_a^d(x)$.

\end{condition}
\end{quote}
\end{definition}
\noindent For brevity, we will denote  the collection of all stable Nyquist
trajectory sets for $\Omega$ by  $\nom$.

Condition~\Con{path3} is a regularity condition motivated by the model of mobile sensors.
It ensures that for all $x \in \Re^d$ a single sensor moving along a rectifiable curve with total length $\clD^P(a,x)
+\delta(a)$ can cover the portions of the trajectories in the ball $B_a^d(x)$.
An illustration of such a curve for the trajectory set in Figure
\ref{fig:pd} is shown in Figure \ref{fig:c2condn}. 
Thus although there may be a countable collection of paths in $P$, a single sensor can be used to cover the portions of
$P$ inside $B_a^d(x)$, without affecting the total distance traveled per unit area.
This means, in particular, that the path density does indeed capture the total distance per unit area covered by a
single moving sensor using the trajectories in $P$.
\begin{figure}
\centering
\includegraphics[width=2.5in]{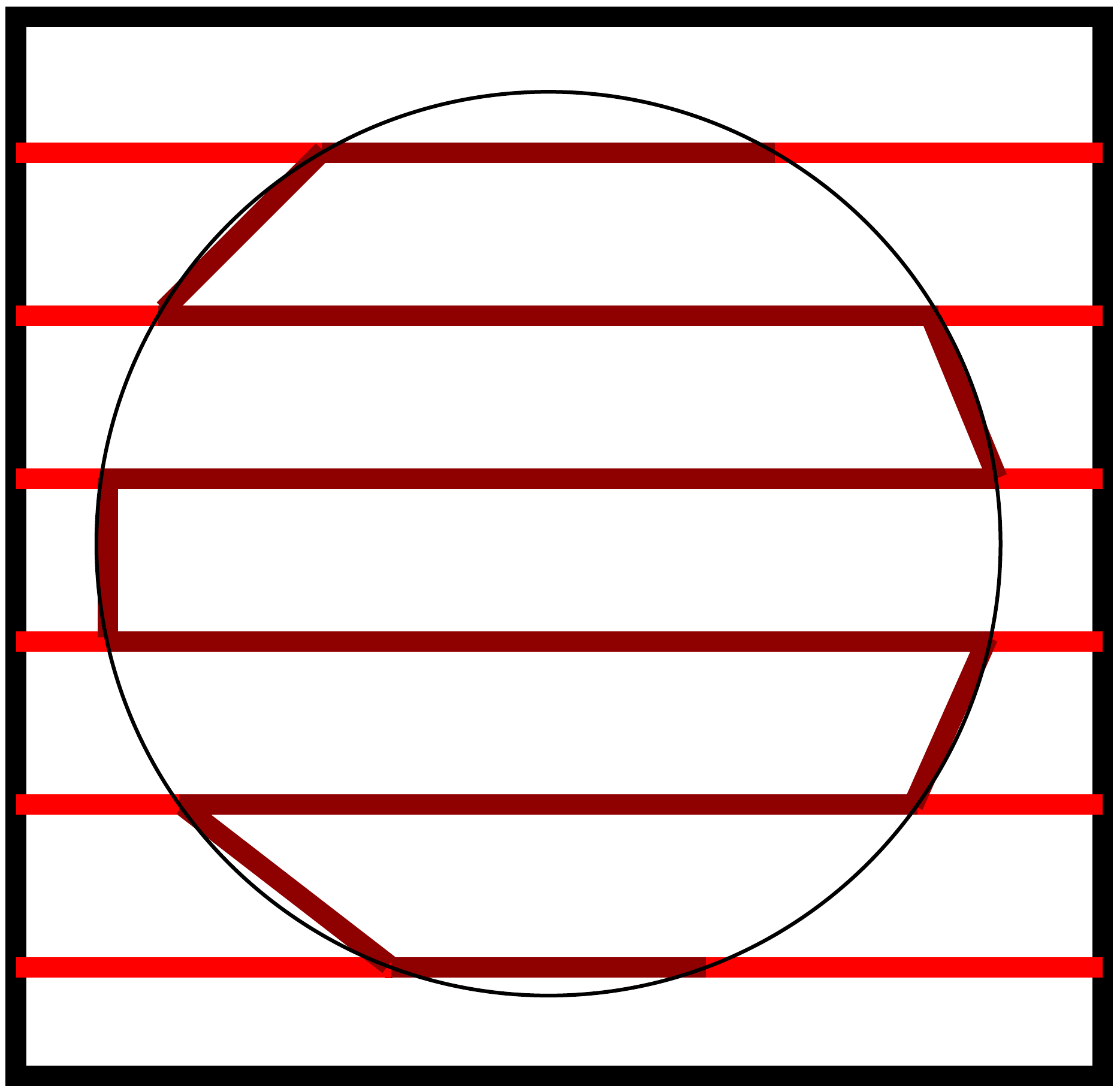}
\caption{An illustration of a curve (shown in dark) that satisfies condition~\Con{path3} for the trajectory set in
Figure \ref{fig:pd}.}
\label{fig:c2condn}
\end{figure}

Definition~\ref{trajset} is related to the concept of sampling
measures. A (positive Radon) measure $\mu $ on $\Re ^d $ is called a
sampling measure for $\mathcal{B}_\Omega$, if there exist $A,B>0$ such that
$$
A\|f\|_2^2 \leq  \int _{\Re ^d} |f(r)|^2 \, d\mu ( r)  \leq B\|f\|_2^2
\qquad \text{ for all } f \in \mathcal{B}_\Omega \, .
$$
If $\mu $ is a sum of point measures,  $\mu = \sum _{\lambda \in \Lambda } \delta _\lambda $, then one
recovers \eqref{eqn:frameboundsa}. For a trajectory set $P$ one can
define a natural measure, namely the sum of line integrals along the
trajectories $p_i$. Precisely, if each $p_i$ is parametrized by $\gamma _i
: I_i \to \Re ^d$ for some (finite or infinite) interval $I_i$, then
the corresponding path sampling measure is defined as
$$
\int _{\Re ^d} f(r) \,  d\mu (r) = \sum _{i\in \bI} \int _{I_i} f(\gamma
_i(t)) \gamma ' _i (t)\, dt
$$
(for $C^1$-curves $p_i$, otherwise we use the Riemann-Stieltjes
integrals $d\gamma _i(t)$).

An interesting line of research in complex analysis investigates
sampling measures and their characterizations for spaces of complex
functions, such as Fock space, Bergmann spaces, and also spaces of
bandlimited functions. See~\cite{Lue00,OC98} for a representative list of
contributions. Roughly speaking, if $\mu$ is a sampling measure, then
its support contains a set of sampling, although the
precise formulations are much more delicate and technical.

We prefer Definition~\ref{trajset} to the abstract definition
of sampling measures,  because our definition models
faithfully the acquisition of data  by mobile sampling: the samples
are taken along a path, possibly with high density, whence condition
~\Con{recon2}. The requirement of a realistic motion of the sensors  in  $\Re ^d$ leads
to the regularity condition ~\Con{path3}.

For a more quantitative version of stable trajectory sets we restrict
the range of the stability parameters $A,B$.
\begin{definition}
A trajectory set $P$ of the form (\ref{eqn:trajset}) is called a
\emph{stable Nyquist trajectory set for $\bom$ with stability parameters $A$ and $B$ } if $P$
satisfies condition~\Con{path3} and the modified condition
\begin{quote}
\nobreak
\begin{conditionAB}
\item \label{conAB:reconAB}\emph{[Nyquist]} There exists a uniformly discrete set $\Lambda$ of points on the
trajectories
in $P$, $\Lambda \subset P$,   such that $\Lambda$ forms a set of
stable sampling for $\bom$ with fixed stability parameters $A$ and $B$.
\end{conditionAB}
\end{quote}
\end{definition}
\noindent We denote  the collection of all stable Nyquist
trajectory sets  for $\Omega$ with stability parameters $A,B$ by
$\nomab$. Then $ \nom = \bigcup_{0 < A,B < \infty} \nomab.$

The sampling theory for mobile sensing is primarily concerned with
identifying suitable trajectory sets in $\nom$ and $\nomab$. The
key optimization problem is the  identification and description of
trajectory sets with minimal path density from these
classes:
\begin{equation}
\inf_{P \in \nom} \ell^{+}(P),\label{eqn:minpd}
\end{equation}
and
\begin{equation}
\inf_{P \in \nomab} \ell^{+}(P).\label{eqn:minpdAB}
\end{equation}

In \cite{unnvet13TIT} and \cite{unnvet12ISIT} we identified various examples of trajectory sets in $\nom$, and
obtained partial solutions to (\ref{eqn:minpd}) for restricted
classes of trajectories, for instance
uniform sets and unions of uniform sets. In this paper we derive a
lower bound of the path density for  the entire class of trajectories
consisting of arbitrary parallel lines, and we
study  the well-posedness of the optimization problem in both $\nom$ and $\nomab$.

\section{Optimal stable sampling sets composed of parallel lines}\label{sec:optresults}
In this section we consider a trajectory set composed of parallel lines in $\Re^d$. For these trajectories,
the path density coincides with the Beurling density of a cross-section.
\begin{lemma}
\label{lemma_dens_parallel}
Let $P$ be a trajectory set consisting  of lines parallel to a vector $q \in \Rdst \setminus\{0\}$ and let
$\Lambda := P \cap q^\perp$ be the intersection of $P$ with the hyperplane orthogonal to $q$. Then
$D^{-}(\Lambda)=\ell^{-}(P)$ and $D^{+}(\Lambda)=\ell^{+}(P)$.

In particular $P$ is homogeneous if and only if $D^{-}(\Lambda)=D^{+}(\Lambda)$
and in this case
\begin{align*}
\ell(P)=D^{-}(\Lambda)=D^{+}(\Lambda)=\lim_{a \longrightarrow +\infty} \frac{\#(\Lambda \cap
B_a^{d-1}(x))}{\abs{B^{d-1}_a}},
\mbox{ for all }x \in q^\perp \cong \mathbb{R}^{d-1}.
\end{align*}
\end{lemma}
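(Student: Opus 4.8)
The plan is to relate the arc length of the parallel lines inside a Euclidean ball to the number of cross-section points inside a slightly smaller ball of one lower dimension, and then invoke Landau's lemma (the version allowing arbitrary compact bodies, cited in the excerpt) to pass to the limit. After an orthogonal change of coordinates we may assume $q = e_d$, so the lines are vertical, $q^\perp = \mathbb{R}^{d-1} \times \{0\}$, and each line is $\{(\xi, t) : t \in \mathbb{R}\}$ for $\xi$ ranging over $\Lambda \subset \mathbb{R}^{d-1}$. The key geometric observation is that a vertical line through $(\xi, 0)$ meets the ball $B_a^d(x)$ in a chord whose length is $2\sqrt{a^2 - \|\xi - x'\|^2}$ when $\|\xi - x'\| \le a$ (and is empty otherwise), where $x = (x', x_d)$ and here I take the ball centered so that $x_d$ is the height; more precisely for a general center the chord length is $2\sqrt{a^2 - \|\xi - x'\|^2}$ independent of $x_d$. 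Hence
\[
\clD^P(a,x) = \sum_{\xi \in \Lambda,\ \|\xi - x'\| \le a} 2\sqrt{a^2 - \|\xi - x'\|^2}.
\]

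For the upper bound I would simply estimate each chord length by its maximum $2a$, giving $\clD^P(a,x) \le 2a \cdot \#(\Lambda \cap B_a^{d-1}(x'))$; dividing by $|B_a^d| = a^d |B_1^d|$ and using $|B_1^d| = \frac{2}{d}|B_1^{d-1}|$ (integrating the $(d-1)$-ball cross-sections of the unit $d$-ball, or just citing the standard formula) yields
\[
\frac{\clD^P(a,x)}{|B_a^d|} \le \frac{\#(\Lambda \cap B_a^{d-1}(x'))}{a^{d-1}|B_1^{d-1}|} \cdot \frac{2a|B_1^{d-1}|}{a^d |B_1^d|} = \frac{\#(\Lambda \cap B_a^{d-1}(x'))}{a^{d-1}|B_1^{d-1}|} \cdot d \cdot \frac{1}{d} \cdot \ldots
\]
— so up to the harmless constant the normalized arc length is dominated by the normalized $(d-1)$-dimensional point count, giving $\ell^{+}(P) \le D^{+}(\Lambda)$; symmetrically, bounding below by restricting to $\xi$ in the smaller ball $B_{a/2}^{d-1}(x')$, where each chord has length at least $2a\cdot\frac{\sqrt{3}}{2} = \sqrt{3}\,a$, and comparing with $D^{-}$ via a $a \mapsto a/2$ rescaling shows $\ell^{-}(P) \ge c\, D^{-}(\Lambda)$ — but a crude constant is not good enough, we need equality. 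The clean way to get sharp constants is to recognize that $\clD^P(a,x)$ is, up to the $o(a^d)$ fluctuation of a Riemann-sum-type argument, equal to $\int_{B_a^{d-1}(x')} 2\sqrt{a^2 - \|\xi - x'\|^2}\,d\xi$ times the density of $\Lambda$; but the rigorous route avoiding this is to use Landau's Lemma (\cite[Lemma 4]{lan67}, as quoted in the excerpt) directly with the compact body $K = B_1^d$ in $\mathbb{R}^d$: the arc-length sum $\clD^P(a,x)$ equals the integral over the vertical line parameter of the point count in horizontal slices, and one estimates $\clD^P(a,x)$ from above and below by $\#(\Lambda \cap B_{(1\pm\epsilon)a}^{d-1}(x'))$ times the corresponding truncated volume, letting $\epsilon \to 0$ after $a \to \infty$.

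The cleanest self-contained argument, and the one I would actually write: fix $\epsilon \in (0,1)$. For the ball $B_a^d(x)$, slice it by the value of the last coordinate; the slice at height $x_d + s$ (for $|s| \le a$) is the $(d-1)$-ball $B_{\sqrt{a^2-s^2}}^{d-1}(x')$. The trajectory set restricted to this slab is exactly $\Lambda$ (viewed in $\mathbb{R}^{d-1}$), and $\clD^P(a,x) = \int_{-a}^{a} \#\bigl(\Lambda \cap B_{\sqrt{a^2-s^2}}^{d-1}(x')\bigr)\,ds$ is false as an identity — rather, $\clD^P(a,x)$ is the sum over $\xi \in \Lambda$ of the length of $\{s : \|\xi - x'\| \le \sqrt{a^2-s^2}\}$, which is the displayed formula above. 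Now split the sum at radius $(1-\epsilon)a$: the outer part contributes at most $2a \cdot \bigl(\#(\Lambda \cap B_a^{d-1}(x')) - \#(\Lambda \cap B_{(1-\epsilon)a}^{d-1}(x'))\bigr)$, and the inner part is squeezed between $2a\sqrt{1-(1-\epsilon)^2}\cdot\#(\Lambda \cap B_{(1-\epsilon)a}^{d-1}(x'))$ from below and $2a\cdot\#(\Lambda \cap B_{(1-\epsilon)a}^{d-1}(x'))$ from above — still not sharp. The genuinely sharp estimate comes from comparing the sum to the integral $\int 2\sqrt{a^2-r^2}\,d(\text{counting measure})$ against $\int_{B_a^{d-1}} 2\sqrt{a^2-\|y\|^2}\,dy = |B_a^d|$: by Landau's argument the counting measure of $\Lambda$ restricted to $B_a^{d-1}(x')$, rescaled, converges to $D^{\pm}(\Lambda)$ times Lebesgue measure in the appropriate averaged sense, uniformly in $x'$; feeding this into the arc-length formula and using $\int_{B_1^{d-1}} 2\sqrt{1-\|y\|^2}\,dy = |B_1^d|$ produces exactly $\ell^{\pm}(P) = D^{\pm}(\Lambda)$.

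The main obstacle is precisely getting the \emph{sharp} constant rather than a constant bounded away from $0$ and $\infty$: naive chord bounds only give $\ell^{\pm}(P) \asymp D^{\pm}(\Lambda)$. Overcoming it requires the uniform (in the center $x'$) equidistribution statement underlying Landau's Lemma — that $\#(\Lambda \cap (aK + x'))/(a^{d-1}|K|) \to D^{\pm}(\Lambda)$ uniformly for a large class of bodies $K$ — applied to the family of shrunken balls $K_s = B_{\sqrt{1-s^2}}^{d-1}$, $s \in [-1,1]$, combined with a dominated-convergence / uniform-Riemann-sum argument to integrate over $s$. Once that uniformity is in hand, the identity $\int_{-1}^{1} |B_{\sqrt{1-s^2}}^{d-1}|\,ds = |B_1^d|$ closes the computation. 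The final "in particular" statement about homogeneity, and the limit formula, then follow immediately since $\ell^{+}(P) = \ell^{-}(P)$ iff $D^{+}(\Lambda) = D^{-}(\Lambda)$, and in that case the common value equals the stated limit by Landau's characterization of Beurling density (quoted in the excerpt) applied in $\mathbb{R}^{d-1}$.
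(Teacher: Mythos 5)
There is a genuine gap. The decisive tool you invoke to get the sharp constant --- that $\#(\Lambda \cap (aK+x'))/(a^{d-1}\abs{K})$ converges to $D^{\pm}(\Lambda)$ \emph{uniformly in the center} $x'$ --- is false. Beurling densities are a $\liminf$ of an infimum and a $\limsup$ of a supremum; for a general separated set the normalized counts at a fixed center need not converge to anything, let alone to $D^{-}(\Lambda)$ (take $\bZ^{d-1}$ with extra points planted in sparse regions: $D^{-}=1$ but most centers see a larger count along suitable scales). What \emph{does} follow from the definitions is one-sided and uniform: for every $\varepsilon>0$ and all sufficiently large $\rho$, $\inf_{x'}\#(\Lambda\cap B^{d-1}_\rho(x'))\geq (D^{-}-\varepsilon)\abs{B^{d-1}_\rho}$ and $\sup_{x'}\#(\Lambda\cap B^{d-1}_\rho(x'))\leq (D^{+}+\varepsilon)\abs{B^{d-1}_\rho}$. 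Feeding these into your slicing identity (which, incidentally, you dismiss as ``false'' but is in fact correct by Fubini: $\clD^P(a,x)=\int_{-a}^{a}\#\bigl(\Lambda\cap B^{d-1}_{\sqrt{a^2-s^2}}(x')\bigr)\,ds$, the thin slices near $\abs{s}=a$ contributing only $o(a^d)$) yields $\ell^{-}(P)\geq D^{-}(\Lambda)$ and $\ell^{+}(P)\leq D^{+}(\Lambda)$. The two reverse inequalities are exactly the ones your argument does not reach: to show, say, $\ell^{-}(P)\leq D^{-}(\Lambda)$ you must convert a ball in $\Rst^{d-1}$ witnessing a low count into a ball in $\Rdst$ witnessing a low arc length with the \emph{same} normalized value, and the chord lengths $2\sqrt{a^2-\|\xi-x'\|^2}$ vary across the ball, so the naive bound loses the dimensional factor $2\abs{B_1^{d-1}}/\abs{B_1^d}>1$ you already identified. (Your identity $\abs{B_1^d}=\tfrac{2}{d}\abs{B_1^{d-1}}$ is also wrong; for $d=2$ it would give $2$ instead of $\pi$.)

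The fix is the one the paper uses, and it makes the whole lemma a one-line computation: by Lemma \ref{lemma_dens_tile} the path density may be computed with \emph{any} compact body in place of the Euclidean ball, so choose a cube (or cylinder) $K$ with one axis parallel to $q$. Then every line of $P$ meeting $x+aK$ crosses it in a chord of length exactly $2a$ (up to the normalization of $K$), so $\clD^P(a,x,K)/(a^{d}\abs{K})$ equals $\#(\Lambda\cap(x'+aK'))/(a^{d-1}\abs{K'})$ identically, where $K'$ is the cross-section of $K$; taking $\liminf\inf$ and $\limsup\sup$ and using Landau's Lemma 4 in $\Rst^{d-1}$ gives both equalities at once, with no $\varepsilon$ and no equidistribution claim. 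You cite Landau's Lemma 4 but only apply it to $\Lambda$ in $\Rst^{d-1}$; the essential application is to the path density in $\Rdst$, replacing the ball by a body adapted to the direction $q$. Without that step your argument, where correct, only gives $\ell^{\pm}(P)\asymp D^{\pm}(\Lambda)$, which is precisely the non-sharp conclusion you set out to avoid.
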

\begin{proof}
The lemma is clear if in the definition of Beurling and path density we use cubes with sides aligned to $q^\perp$
instead of Euclidean balls. Lemma \ref{lemma_dens_tile} allows us to make this choice.
\end{proof}

Most practically useful parallel trajectory sets such as uniform sets, approximately uniform sets (e.g., with bounded
offsets) and their finite unions are homogeneous. To formalize the optimization problem we introduce some classes of
trajectories. For $\Omega \subset \Re^d$ we define:
\begin{itemize}
\item $\Par^q$: the class of all Nyquist trajectories consisting of lines parallel to $q$, ($q \in \Rdst
\setminus\{0\}$ a direction parameter).
\item $\Hom^q$: the class of all homogeneous Nyquist trajectories consisting of lines parallel to
$q$, ($q \in \Rdst \setminus\{0\}$ a direction parameter).
\item $\Par$: the union of the classes $\Par^q, q \not= 0$; that is, the collection of all trajectories consisting of
parallel lines.
\item $\Hom$: the union of the classes $\Hom^q, q \not= 0$; that is, the collection of all homogeneous trajectories
consisting of parallel lines.
\end{itemize}
The following is our main result about sampling along parallel lines.
\begin{thm}
\label{th:parallel}
Let $\Omega \subset \Re^d$ be a centered symmetric convex body. Then
\begin{align}
\label{eq:homq}
\inf_{P\in \Par^q} \ell^{-}(P) =
\inf_{P\in\Hom^q} \ell(P) = |\Omega \cap q^\perp|.
\end{align}
In particular, by optimizing over all $q \in \Re^d \setminus\{0\}$,  it follows that
\begin{align}
\inf_{P\in \Par} \ell^{-}(P) =
\inf_{P\in\Hom}\ell(P) = \min_{q \in\Re^d \setminus \{0\}} |\Omega \cap q^\perp|  .\label{eqn:inf}
\end{align}
\end{thm}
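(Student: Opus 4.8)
The plan is to prove the two-sided bound in \eqref{eq:homq} by separating it into a lower bound valid for every Nyquist trajectory set of lines parallel to $q$, and a matching upper bound realized by an explicit homogeneous family. By Lemma~\ref{lemma_dens_parallel} the path density of $P\in\Par^q$ equals the Beurling density of the cross-section $\Lambda=P\cap q^\perp\subset q^\perp\cong\Re^{d-1}$, so after choosing coordinates with $q=e_d$ the whole statement reduces to a sampling problem in $\Re^{d-1}$. Concretely, a function $f\in\bom$ restricted to a line $\{(\omega',t): t\in\Re\}+\lambda$ with $\lambda\in q^\perp$ has one-dimensional Fourier content supported in a bounded set, so the values of $f$ along a single line carry, via a one-variable sampling/reconstruction step (finitely many uniform samples per unit length along the line suffice because of the boundedness of $\Omega$), the full ``$t$-profile'' of $f$ on that line. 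What remains genuinely two-dimensional (i.e.\ $(d-1)$-dimensional) is: for which $\Lambda\subset\Re^{d-1}$ is the family of line-profiles $\{f|_{\ell_\lambda}:\lambda\in\Lambda\}$ a stable set for $\bom$? I expect this to be equivalent to $\Lambda$ being a set of sampling for the space of functions on $\Re^{d-1}$ bandlimited to the projection $\clP_{q^\perp}\Omega$ — the key point being that the ``slices'' of $\bom$ in the $q$-direction, as $\omega'$ ranges over $\clP_{q^\perp}\Omega$, are nonempty exactly on $\clP_{q^\perp}\Omega$, and the relevant spectrum for the transverse problem is this projection, whose $(d-1)$-volume is $|\Omega\cap q^\perp|$ precisely when $\Omega$ is symmetric and convex (this is where symmetry/convexity enters: for a symmetric convex body the maximal central hyperplane section orthogonal to $q$ has the same $(d-1)$-measure relationship with the projection that makes the Landau density come out right — more carefully, $|\clP_{q^\perp}\Omega|$ and $|\Omega\cap q^\perp|$ need not be equal, so one must be careful, and I suspect the correct statement routed through is that the transverse sampling problem has Nyquist density $|\Omega\cap q^\perp|$ rather than $|\clP_{q^\perp}\Omega|$, which forces the argument below).

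For the lower bound $\inf_{P\in\Par^q}\ell^-(P)\ge|\Omega\cap q^\perp|$: take $P$ with $\Lambda=P\cap q^\perp$ a set of sampling for $\bom$ after the reduction. The strategy is to plug in, for a fixed $\omega_0'\in\relint(\clP_{q^\perp}\Omega)$, test functions $f$ whose Fourier transforms concentrate near the line $\{\omega_0'\}\times\Re$ intersected with $\Omega$; by the symmetry and convexity of $\Omega$, the \emph{longest} such chord is attained at $\omega_0'=0$, i.e.\ it is $\Omega\cap q^\perp$ itself, of length (one-dimensional measure) equal to $2h(q)$ for the appropriate half-width. Using wave-packet test functions adapted to a small transverse cube times this maximal chord, the continuous sampling inequality along the lines of $P$ forces $\Lambda$ to have enough points in every large transverse ball to resolve the one-dimensional content of length $2h(q)$ per coordinate; running Landau's comparison argument (invoking the cube version of Beurling density from Lemma~\ref{lemma_dens_tile}) in $\Re^{d-1}$ yields $D^-(\Lambda)\ge$ the $(d-1)$-Nyquist density associated with these packets. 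The bookkeeping must be arranged so that this Nyquist density equals $|\Omega\cap q^\perp|$; I expect this to be the delicate computation, and it is where the Brunn--Minkowski inequality advertised in the introduction likely gets used, to relate volumes of slabs/sections across the family of transverse directions. One should also reduce first to homogeneous $P$ — or rather, the $\inf$ over $\Par^q$ is bounded below by the same quantity as the $\inf$ over $\Hom^q$ by a standard thinning/regularization argument, so it suffices to treat homogeneous sets for the lower bound too, but the cleanest route is the direct Landau-type argument above which does not presuppose homogeneity.

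For the upper bound: exhibit, for every $\varepsilon>0$, a \emph{uniform set} $P$ of lines parallel to $q$ whose transverse cross-section $\Lambda$ is a lattice of covolume just below $1/|\Omega\cap q^\perp|$ (so $D^+(\Lambda)=\ell(P)<|\Omega\cap q^\perp|+\varepsilon$) and which still yields a stable trajectory set. Here is where the \emph{universal sampling sets} of Olevsky--Ulanovsky and Matei--Meyer enter: one needs the uniformity of the frame bounds for the one-parameter family of one-dimensional spectra obtained as chords $(\{\omega'\}\times\Re)\cap\Omega$, $\omega'\in\clP_{q^\perp}\Omega$ — these chords are intervals of varying length $\le 2h(q)$, and a universal sampling set of rate $2h(q)$ samples simultaneously, with uniform bounds $A,B$, the $L^2$ spaces bandlimited to all these sub-intervals. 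Combining: along each line of $P$ use a fixed sufficiently dense (but finite-rate) sample set chosen via the 1-D universal set, and transversally use the lattice $\Lambda$; the condition~\Con{path3} is immediate for a uniform set (the finitely many line-segments inside $B_a^d(x)$ can be concatenated with $o(a^d)$ extra length). The resulting $\Lambda\subset P$ is uniformly discrete and, by the uniform-frame-bound lemma for sections promised in the introduction (proved there via universal sampling sets), forms a set of stable sampling for $\bom$; hence $P\in\Hom^q$ and $\ell(P)=D^+(\Lambda)\to|\Omega\cap q^\perp|$. This establishes $\inf_{P\in\Hom^q}\ell(P)\le|\Omega\cap q^\perp|$, and since $\Hom^q\subset\Par^q$ we get $\inf_{\Par^q}\ell^-\le\inf_{\Hom^q}\ell\le|\Omega\cap q^\perp|\le\inf_{\Par^q}\ell^-$, closing the chain and giving \eqref{eq:homq}. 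Finally, \eqref{eqn:inf} follows by taking the infimum over $q\neq0$ of \eqref{eq:homq}: the $\inf$ commutes with the union defining $\Par$ and $\Hom$, and $\inf_{q\neq0}|\Omega\cap q^\perp|=\min_{q\neq0}|\Omega\cap q^\perp|$ because $\Omega$ is compact so $q\mapsto|\Omega\cap q^\perp|$ is continuous on the sphere of directions and attains its minimum. The main obstacle I anticipate is the transverse Nyquist-density computation in the lower bound — making precise that the right density is the \emph{section} volume $|\Omega\cap q^\perp|$ and not the \emph{projection} volume, which is exactly the subtlety the authors flag as needing Brunn--Minkowski.
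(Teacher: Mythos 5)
Your overall architecture matches the paper's (a lower bound valid for all of $\Par^q$ via a reduction to Landau's theorem in $\Re^{d-1}$; an upper bound via an explicit homogeneous set built from a universal sampling set with uniform frame bounds; then optimization over $q$), and you correctly flag the section-versus-projection subtlety. But both substantive steps have genuine gaps. For the lower bound you explicitly defer the ``delicate computation,'' and your sketch does not close it: wave packets concentrated near a single vertical chord carry only one-dimensional information, and you conflate the hyperplane section $\Omega\cap q^\perp$ (a $(d-1)$-dimensional set) with the chord $\Omega\cap\Re q$ when you call it a set ``of length $2h(q)$.'' The paper's mechanism is a tensor-product test function: shrink the central section to $\Omega_0^\varepsilon=\{x\in\Omega_0: d((x,0),\partial\Omega)\ge\varepsilon\}$, observe that $\Omega_0^\varepsilon\times[-\varepsilon,\varepsilon]\subset\Omega$, and feed $f(x)=g(x_1,\dots,x_{d-1})\,\sinc(\varepsilon x_d)$ with $g$ bandlimited to $\Omega_0^\varepsilon$ into the sampling inequality; summing $|\sinc(\varepsilon t)|^2$ over the samples on each line (controlled by the covering constant of the sampling set $\Gamma$) shows that $\Lambda$ samples $\mathcal{B}_{\Omega_0^\varepsilon}$, and Landau in $\Re^{d-1}$ gives $D^{-}(\Lambda)\ge|\Omega_0^\varepsilon|\to|\Omega\cap q^\perp|$. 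Contrary to your guess, Brunn--Minkowski plays no role in the lower bound.

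For the upper bound your construction is misassembled: you place the universality in the one-dimensional sampling along each line (where it is unnecessary --- the fibers are intervals of uniformly bounded length, and any sufficiently dense lattice in $\Re$ handles them with uniform bounds, cf.\ the last statement of Proposition~\ref{prop_sections}), and you use a plain lattice transversally. The transverse set is exactly where the construction breaks: by Proposition~\ref{prop_sections} the set $\Lambda\subset\Re^{d-1}$ must be a sampling set, with bounds uniform in $t$, for \emph{every} section $\Omega_t$, and these sections are in general not contained in translates of $\Omega_0$ (the paper's octahedron example), so a lattice of density $|\Omega\cap q^\perp|+\varepsilon$ adapted to $\Omega_0$ will generically fail for $t\neq 0$. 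The correct construction takes $\Lambda$ itself to be an $\eta$-universal sampling set in $\Re^{d-1}$ with $\eta>|\Omega\cap q^\perp|$; Brunn--Minkowski enters precisely here, to guarantee $\max_t|\Omega_t|=|\Omega_0|$ so that a single such $\eta$ dominates all sections, and the uniformity of the frame bounds over $t$ is then obtained by a compactness argument from the continuity of $t\mapsto\Omega_t$ (Lemma~\ref{lemma_sections} and Proposition~\ref{prop_unif_bounds}).
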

This  result shows that the lowest path density of a  set of parallel trajectories that admits stable
sampling of a field bandlimited to a convex, compact and symmetric set $\Omega$ is given by the volume of the smallest
section of $\Omega$ through the origin. 
Furthermore, this density can be almost attained by a homogeneous trajectory set.
This result is in the spirit of Landau's result \cite{lan67} on the minimum sampling density for stable
pointwise sampling, as illustrated in Figure \ref{fig:fundlimits}. In the rest of this section we present arguments that
build up to this result.

\begin{figure}
\centering
\subfigure[Classical pointwise sampling: Minimum sampling density $\propto \mbox{Vol}(\Omega)$.]{
\includegraphics[height=2.15 in]{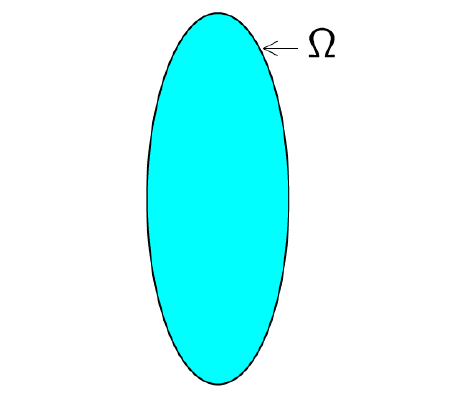}
\label{fig:omvol}
}
\subfigure[Sampling on parallel lines: Minimum path density $\propto$ Volume of minimum section through the center of
$\Omega$.]{
\centering
\includegraphics[height=2.15 in]{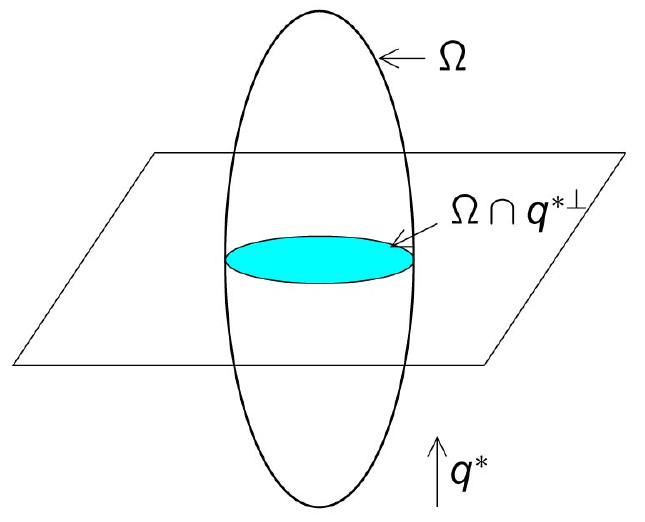}
\label{fig:sampR2c}
}
\caption{Fundamental sampling limits for a convex symmetric body $\Omega$.}
\label{fig:fundlimits}
\end{figure}

Sampling along parallel lines has been studied early on as an 
extension of non-uniform sampling theorems from 1-D to 2-D under the
name of \emph{line sampling}~\cite{BH89,GS01}. In particular, in
~\cite{GS01} sampling sets of the form $(x_i, y_{ik})$ are studied with
non-uniformly spaces lines at $x_i$ and non-uniformly spaced samples
$y_{ik}$ along each line. Let us emphasize that  our objective is
rather  different, as we try to  understand the  relation between the path
density and the sampling pattern consisting of parallel lines. This is
not about a particular set of parallel lines (as in the literature on
line sampling), but about all sets of parallel lines. Note that in
Theorem~\ref{th:parallel} we characterize  the optimal direction in which the
sensors have to move. This is an entirely  new aspect of the sampling problem.
\subsection{Regularity of paths of parallel lines}
The first step towards the proof of Theorem \ref{th:parallel} is showing that the trajectory sets
consisting of parallel lines based on a set with finite upper Beurling density do satisfy the regularity
condition \Con{path3}. To this end we need  the following lemma on the length of the shortest path that passes through
a given set of points (for a proof see \cite{beahalham59}).
\begin{lemma}
\label{lemma_short_path}
For every $d \geq 2$ there exists a constant $C_d>0$ with the following property: let $a>1$,
and $x_1, \ldots, x_n \in  B^d_a(0)$. Then there exists a continuous curve
$\alpha:[0,1] \to B^d_a(0)$ consisting of $n-1$ concatenated line segments that contains each point
$x_i$, $i=1, \ldots n$,  and
\begin{align*}
\length(\alpha) \leq C_d \,  a \,  n^{\frac{d-1}{d}}.
\end{align*}
\end{lemma}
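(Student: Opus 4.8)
The statement to prove is Lemma~\ref{lemma_short_path}: for every $d\geq 2$ there is a constant $C_d>0$ such that any finite set $x_1,\dots,x_n\in B_a^d(0)$ with $a>1$ is contained in the image of a polygonal curve $\alpha:[0,1]\to B_a^d(0)$ (a concatenation of $n-1$ segments) with $\length(\alpha)\leq C_d\, a\, n^{(d-1)/d}$. The plan is to reduce to the scale-invariant statement by dilation and then to produce the curve by a greedy grid sweep.

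\smallskip

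\emph{Reduction to the unit ball and a grid decomposition.} First I would rescale: replacing $x_i$ by $x_i/a$ puts all points in $B_1^d(0)\subset[-1,1]^d$, and any bound $\length(\tilde\alpha)\leq C_d\, n^{(d-1)/d}$ for the rescaled curve $\tilde\alpha$ gives $\length(\alpha)\leq C_d\, a\, n^{(d-1)/d}$ after scaling back, and the rescaled curve stays in $B_1^d(0)$ iff the original stays in $B_a^d(0)$ (one can instead work in the cube and note $[-1,1]^d\subset B_{\sqrt d}^d(0)$, absorbing $\sqrt d$ into $C_d$; I would keep the ball by a final projection step, see below). So it suffices to handle $n$ points in $[-1,1]^d$. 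Partition $[-1,1]^d$ into $m^d$ axis-parallel subcubes of side $2/m$, where $m:=\lceil n^{1/d}\rceil$, so $m^d \asymp n$ and each subcube has diameter $2\sqrt d/m \lesssim n^{-1/d}$.

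\smallskip

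\emph{Building the tour by a boustrophedon (snake) sweep.} I would order the $m^d$ subcubes in a serpentine fashion: enumerate them so that consecutive subcubes in the list are geometrically adjacent (share a face), sweeping back and forth along the first coordinate within each line, then stepping in the second coordinate, and so on recursively in dimension. This is the standard space-filling ordering of a grid; the key combinatorial fact is that the \emph{centers} of the subcubes, visited in this order, form a polygonal path of total length $\lesssim m^{d-1}\cdot(m\cdot \tfrac{1}{m}) = m^{d-1}\lesssim n^{(d-1)/d}$ — intuitively, the path makes $m^{d-1}$ sweeps each of length $O(1)$. Now, within the (at most $m^d$) subcubes that actually contain points, visit the points of each such subcube consecutively; entering a subcube of diameter $O(n^{-1/d})$ and touching all its points costs at most (number of points in that subcube) $\times O(n^{-1/d})$ extra length. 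Summing the "local" detours over all subcubes gives $\sum_i (\#\text{points in cube }i)\cdot O(n^{-1/d}) = n\cdot O(n^{-1/d}) = O(n^{(d-1)/d})$. Adding the global sweep length $O(n^{(d-1)/d})$ and noting the total number of segments is $O(n)$ (which one can then trim down to exactly $n-1$ by merging collinear pieces or by simply re-parametrizing — the statement only asks for "consisting of $n-1$ concatenated line segments", so after removing empty subcubes one concatenates exactly the visited points and their entry/exit points, of which there are $O(n)$; a constant blow-up in the number of segments is harmless and I would phrase the lemma's conclusion to allow $\le C_d n$ segments, or absorb the bookkeeping), we obtain $\length(\tilde\alpha)\leq C_d\, n^{(d-1)/d}$.

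\smallskip

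\emph{Staying inside the ball, and the main obstacle.} The snake curve built inside $[-1,1]^d$ might poke slightly outside $B_1^d(0)$ near the corners; to fix this I would apply the radial retraction $\pi(x)=x/\max(1,\|x\|)$ onto $\overline{B_1^d(0)}$, which is $1$-Lipschitz, hence does not increase arc length and keeps all the $x_i$ (already in the ball) fixed. Alternatively, and more cleanly, run the whole construction inside the inscribed cube $[-1/\sqrt d,\,1/\sqrt d]^d\subset B_1^d(0)$ after first retracting all points into that cube — but that moves the points, so the retraction-after-construction is the right order. I expect the main technical nuisance, rather than a deep obstacle, to be the careful accounting that the serpentine ordering of an $m\times\cdots\times m$ grid has the adjacency property in all dimensions and total sweep length $O(m^{d-1})$ (a clean induction on $d$: a $d$-dimensional snake is $m$ copies of a $(d-1)$-dimensional snake stacked along the last axis, alternately reversed, connected by $m-1$ unit steps, giving $L_d(m)=m\,L_{d-1}(m)+(m-1)\cdot\tfrac{1}{m}$ in suitable units, whence $L_d(m)\lesssim m^{d-1}$), together with the segment-count bookkeeping. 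Since the paper cites \cite{beahalham59} for the proof, I would in fact just invoke that reference; the sketch above is the self-contained argument one would write if the citation were unavailable.
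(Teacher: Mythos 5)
The paper does not actually prove this lemma; it simply cites Beardwood--Halton--Hammersley \cite{beahalham59}, so there is no ``paper proof'' to compare against. Your grid-and-serpentine sketch is the standard argument behind that citation and is essentially correct: the sweep through the $m^d$ subcube centers costs $m^d\cdot O(1/m)=O(m^{d-1})=O(n^{(d-1)/d})$, and the local detours cost $n\cdot O(n^{-1/d})=O(n^{(d-1)/d})$, which is the right bookkeeping. Two of your patches, however, are the wrong fixes, and both are repaired by the same simpler observation. First, the radial retraction $\pi(x)=x/\max(1,\norm{x})$ is indeed $1$-Lipschitz and length-nonincreasing, but it does not map line segments to line segments (a segment outside the ball is sent to an arc of the sphere), so it destroys exactly the polygonal structure the lemma asserts. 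Second, ``merging collinear pieces'' will not generally reduce your $O(n)$ segments to $n-1$, since consecutive segments of the tour are typically not collinear. The clean resolution of both issues is to use the serpentine construction only to produce an \emph{ordering} $x_{\sigma(1)},\ldots,x_{\sigma(n)}$ of the given points (by serpentine order of the nonempty subcubes, arbitrary order within each subcube), and then take $\alpha$ to be the direct concatenation of the $n-1$ segments $[x_{\sigma(k)},x_{\sigma(k+1)}]$. By the triangle inequality each direct segment is no longer than the corresponding detour through the intermediate (possibly empty) subcube centers, so the length bound survives; and since $B_a^d(0)$ is convex and contains all the $x_i$, these segments automatically stay inside the ball, making the retraction unnecessary. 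With that adjustment your argument is a complete, self-contained proof; alternatively, citing \cite{beahalham59} as the paper does is of course also legitimate.
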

We can now prove the following.
\begin{lemma}
\label{lem:c2forQ}
Let $P$ be a trajectory set consisting of lines parallel to a vector $q \in \Rdst \setminus\{0\}$ and let
$\Lambda := P \cap q^\perp$ be the intersection of $P$ with the hyperplane orthogonal to $q$.
Assume that $D^{+}(\Lambda) < \infty$. Then the trajectory set $P$ satisfies condition~\Con{path3}.
\end{lemma}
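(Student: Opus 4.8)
The plan is to reduce condition~\Con{path3} to the following claim: there is a function $g$ with $g(a) = \littleo(a^d)$, \emph{independent of $x$}, such that for every $x \in \Rdst$ and every sufficiently large $a$ one can find a polygonal curve whose image contains $P \cap B_a^d(x)$ and whose length is at most $\clD^P(a,x) + g(a)$. Granting this, put $\delta(a) := g(a)$ for large $a$ (the condition only concerns $a \gg 1$, so the values of $\delta$ elsewhere are immaterial); for a given $x$, enlarge the curve by an extra back-and-forth segment so that its total length becomes exactly $\clD^P(a,x) + \delta(a)$ --- possible because its length already lies in $[\clD^P(a,x),\, \clD^P(a,x) + g(a)]$ --- and reparametrize it on $[0,1]$. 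The resulting curve is rectifiable and witnesses~\Con{path3}. Thus everything reduces to the claim.

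Since $P$ is exactly the family of lines in direction $q$ through the points of $\Lambda = P \cap q^\perp$, we have $P \cap B_a^d(x) = \bigcup_\lambda S_\lambda$, the union being over $\lambda \in \Lambda$ with $\rho_\lambda := \|\lambda - \clP_{q^\perp}x\| \le a$, where $S_\lambda$ is the chord that $B_a^d(x)$ cuts from the line through $\lambda$; this chord has length $2 h_\lambda$ with $h_\lambda := \sqrt{a^2 - \rho_\lambda^2}$, is centered at the point of that line nearest $x$, and its two endpoints project orthogonally onto $\lambda$. Hence $\clD^P(a,x) = \sum_\lambda \length(S_\lambda)$ and, by $D^{+}(\Lambda) < \infty$, the number of chords equals $\#(\Lambda \cap B_a^{d-1}(\clP_{q^\perp}x)) \le (D^{+}(\Lambda)+1)\,|B_a^{d-1}| = O(a^{d-1})$ once $a$ is large (uniformly in $x$). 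Fix a parameter $\varepsilon = \varepsilon(a) \in (0,1)$, to be chosen, and split the chords into a \emph{bulk} part $\mathcal{B} = \{\lambda : \rho_\lambda \le (1-\varepsilon)a\}$ and an \emph{annular} part $\mathcal{A} = \{\lambda : (1-\varepsilon)a < \rho_\lambda \le a\}$. On the convex ball $\{\rho \le (1-\varepsilon)a\}$ the function $\lambda \mapsto h_\lambda$ has gradient of norm $\rho_\lambda/h_\lambda \le \varepsilon^{-1/2}$, hence is $\varepsilon^{-1/2}$-Lipschitz there. Order the bulk points $\lambda_1, \dots, \lambda_m$ in the order in which a curve provided by Lemma~\ref{lemma_short_path} --- applied inside $q^\perp \cong \mathbb{R}^{d-1}$ when $d \ge 3$, and by the left-to-right order on the line when $d = 2$ --- visits them, so that $\sum_j \|\lambda_j - \lambda_{j+1}\| \le C_{d-1}\, a\, m^{(d-2)/(d-1)} = O(a^{d-1})$. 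Traverse the bulk chords in alternating up-and-down fashion: up $S_{\lambda_1}$, then the straight segment from the top of $S_{\lambda_1}$ to the top of $S_{\lambda_2}$, down $S_{\lambda_2}$, then the straight segment between the bottoms of $S_{\lambda_2}$ and $S_{\lambda_3}$, up $S_{\lambda_3}$, and so on. Because the chords are parallel translates (along $q$) of segments sitting over the $\lambda$'s, truncated symmetrically by the sphere, the distance between like-named endpoints of $S_{\lambda_j}$ and $S_{\lambda_{j+1}}$ is $\sqrt{\|\lambda_j - \lambda_{j+1}\|^2 + (h_{\lambda_j} - h_{\lambda_{j+1}})^2} \le (1 + \varepsilon^{-1/2})\|\lambda_j - \lambda_{j+1}\|$, so the total length of the bulk linking segments is $(1 + \varepsilon^{-1/2})\, O(a^{d-1})$. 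Finally, append the annular chords one at a time, joining each to the previous one (and the first of them to the bulk part) by the straight segment between endpoints, of length at most $2a$ (the diameter of $B_a^d(x)$); since $\Lambda$ has finite covering constant and the annulus has $(d-1)$-volume $O(\varepsilon a^{d-1})$, we get $\#\mathcal{A} = O(\varepsilon a^{d-1})$, and these links contribute $O(\varepsilon a^d)$ in excess.

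Summing up, the length of the curve just constructed exceeds $\clD^P(a,x) = \sum_\lambda \length(S_\lambda)$ by at most $(1 + \varepsilon^{-1/2})\,O(a^{d-1}) + O(\varepsilon a^d) + O(a)$, with constants depending only on $d$ and $D^{+}(\Lambda)$, not on $x$. Choosing $\varepsilon(a) = a^{-1/2}$ makes this $O(a^{d-1/2}) = \littleo(a^d)$, which establishes the claim with $g(a) = O(a^{d-1/2})$ and hence proves~\Con{path3}. The delicate point --- the reason a cruder estimate fails --- is that near the equator of $B_a^d(x)$ the chords are long ($h_\lambda \approx a$) while there are still $\Theta(a^{d-1})$ of them, so routing the curve through a single hyperplane orthogonal to $q$ (for instance a face of a circumscribed cube) would waste $\Theta(a^d)$ of length; it is precisely the Lipschitz bound for $h_\lambda$ on the bulk, combined with the near-shortest ordering from Lemma~\ref{lemma_short_path}, that keeps the bulk links at $\littleo(a^d)$, while the thin annular region, where the Lipschitz bound degenerates, must be handled crudely but is thin enough to cost only $\littleo(a^d)$.
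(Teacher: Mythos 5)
Your proof is correct, but it takes a genuinely different route from the paper's. The paper works with the actual intersection points of the lines with the sphere $\partial B_a^d(x)$: it applies the shortest-path lemma (Lemma~\ref{lemma_short_path}) in the full $d$-dimensional ball to the upper-hemisphere endpoints $A^+$, obtaining a connecting path of length $O(a\,N^{(d-1)/d})=O(a^{d+1/d-1})=\littleo(a^d)$, and then exploits the mirror symmetry $y\mapsto y^-$ to reuse the same ordering for the lower endpoints; the zig-zag $a_1,a_1^-,a_2^-,a_2,\dots$ then traverses each chord once and, between consecutive chords, uses exactly one of the (equal-length) top or bottom links, so the vertical variation of the chord endpoints is absorbed automatically into the $d$-dimensional shortest-path bound and never has to be estimated. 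You instead apply the shortest-path lemma in the $(d-1)$-dimensional cross-section to the points of $\Lambda$, and must then control the height difference $|h_{\lambda_j}-h_{\lambda_{j+1}}|$ of consecutive chord endpoints by hand; this forces the bulk/annulus split, the Lipschitz bound $\|\nabla h\|\le\varepsilon^{-1/2}$ on $\{\rho\le(1-\varepsilon)a\}$, and the crude $O(\varepsilon a^d)$ treatment of the equatorial annulus, with the final optimization $\varepsilon=a^{-1/2}$ giving excess $O(a^{d-1/2})$ (versus the paper's $O(a^{d+1/d-1})$, which is sharper for $d\ge 3$, though both are $\littleo(a^d)$ and that is all that matters). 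Your approach is more laborious but makes visible exactly where the difficulty sits (the long, densely packed chords near the equator), and your opening reduction --- padding the curve to achieve the exact equality $\length(\alpha)=\clD^P(a,x)+\delta(a)$ with $\delta$ independent of $x$ --- is a point the paper glosses over. All the individual estimates (the uniformity in $x$ via $D^+(\Lambda)<\infty$, the count $\#\mathcal{A}=O(\varepsilon a^{d-1})$ from the finite covering constant, the $d=2$ case handled by the linear order on $q^\perp\cong\Re$) check out.
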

\begin{proof}
For every ball $B_a^d(x) \subset \rd $, we need to construct a single path
$\alpha$  containing all line segments of $P \cap B_a^d(x)$, but  without
increasing  the path length significantly. For this we need to connect
the points of intersection of $P \cap \partial B_a^d(x)$  on each
hemisphere by a short path. 
Such a choice in dimension $d=2$ is plotted  in
Figure~\ref{fig:ext}. In higher dimensions, we resort to
Lemma~\ref{lemma_short_path}. 

For a rigorous argument, we may 
 assume without loss of generality that the lines in $P$ are parallel to $e_d= (0,\ldots,0,1)$.
Let $x \in \Re^d$ be arbitrary.
Let $H^+$ and $H^-$ denote the half-spaces
\[
H^+ = \{y \in \Re^d: y_d > x_d\} \quad \mbox{ and } \quad H^- = \{y
\in \Re^d: y_d < x_d\} \, ,
\]
and $H$  the hyperplane
\[
H = \{y \in \Re^d: y_d = x_d\}.
\]
Let $A^+ := P \cap \partial B_a^d(x) \cap H^+$,$A^- := P \cap \partial B_a^d(x) \cap H^-$
and $A := P \cap \partial B_a^d(x) \cap H$.
To each point $y=(y_1,y_2,\ldots,y_d) \in A^+$ corresponds a symmetric
point $y^- =  (y_1,y_2,\ldots,y_{d-1},2x_d-y_d) \in A^-$.
Let us further denote $N := \#A^+=\#A^-$ and $M:=\# A$. Since $D^+(\Lambda) < +\infty$,
it follows that
\begin{align}
\label{eq_ns}
N, M =O(a^{d-1}).
\end{align}
By Lemma \ref{lemma_short_path}, there exists a path $\SP$ contained in $B_a^d(x)$ consisting of $N-1$ line segments,
that passes through all the points in $A^+$, and such that
$\length(\SP) = O(a) N^{\frac{d-1}{d}} \asymp a a^{\frac{(d-1)^2}{d}}=a^{d+\tfrac{1}{d}-1}=o(a^d)$.

Let the sequence $a_1,a_2,\ldots,a_N$ denote the order in which points in $A^+$ appear in $\SP$.
By symmetry, the sequence of line segments connecting the points $a_1^-,a_2^-,\ldots, a_N^-$ is a path contained
in $B_a^d(x)$ that connects all points in $A^-$ and has length $O(a) N^{\frac{d-1}{d}}=o(a^d)$.

\begin{figure}
\centering
\includegraphics[scale=0.3]{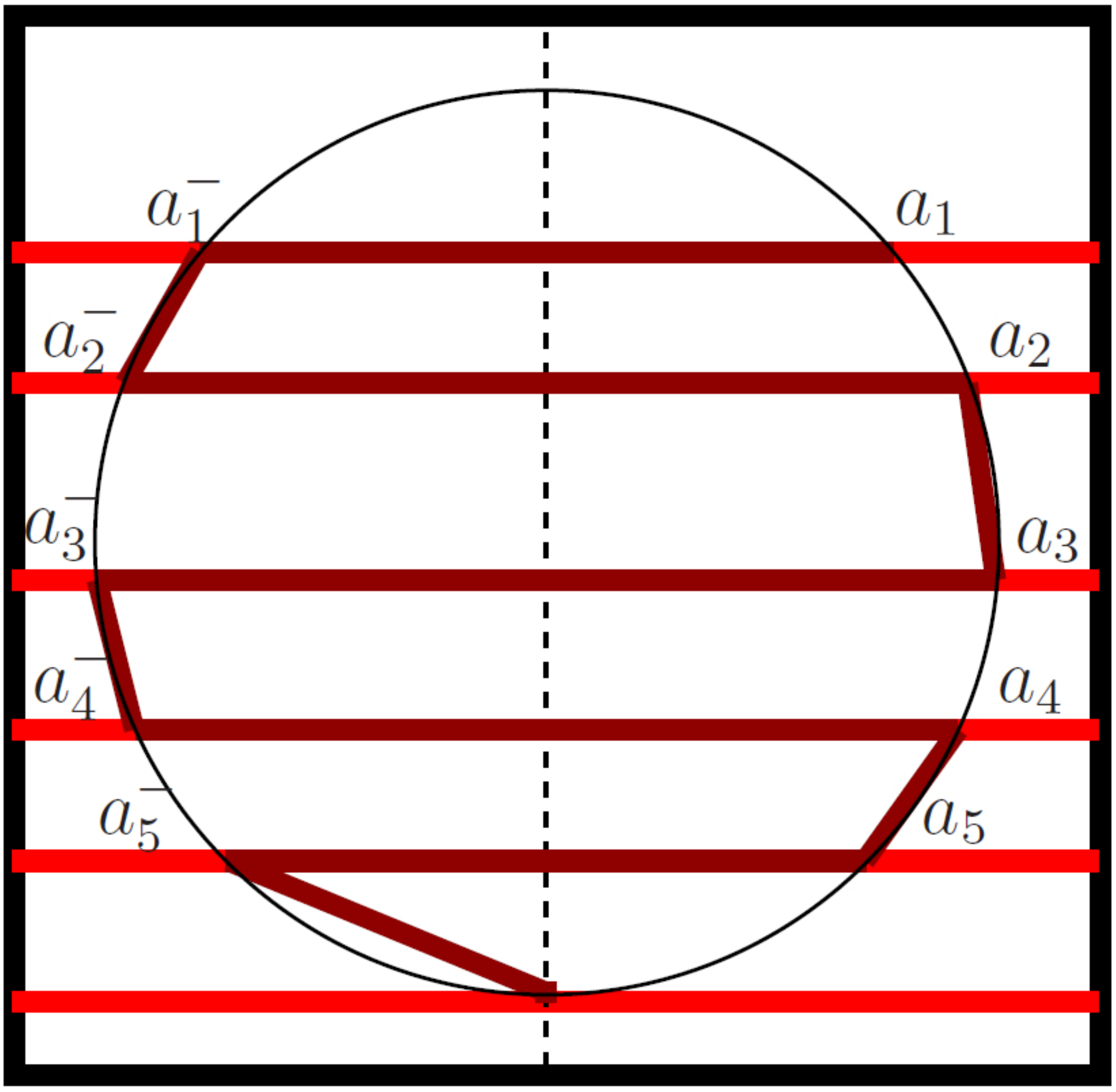}
\caption{The intersection of $P$ and ball is prolongated to a rectifiable path with negligible additional length. }
\label{fig:ext}
\end{figure}
We construct a rectifiable curve $\alpha$ containing $P \cap B_a^d(x)$ as follows.
Let $\beta$ denote the curve comprising the sequence of line segments connecting the points
\begin{align*}
&a_1,a_1^-,a_2^-,a_2,a_3,a_3^-,a_4^-, a_4,a_5,\ldots, a_N^-,a_N,
\quad \mbox{if $N$ is even},
\\
&a_1,a_1^-,a_2^-,a_2,a_3,a_3^-,a_4^-, a_4,a_5, \ldots, a_N,a_N^-,
\quad \mbox{if $N$ is odd}.
\end{align*}
Since for all $1\leq i \leq N$, the curve $\beta$ contains the line segment connecting $a_i$ and $a_i^-$ exactly once,
it follows that $\beta$ contains $(P \cap B_a^d(x)) \setminus H$.
Furthermore for all $1\leq i \leq N-1$, the curve $\beta$ contains either the line segment connecting $a_i$ and
$a_{i+1}$ or that connecting $a_i^-$ and $a_{i+1}^-$.
Thus counting all line segments in $\beta$ we obtain
\begin{align}
\label{eq_beta}
\length(\beta) = \length(\SP) + \clD^P(a,x).
\end{align}
Invoking again Lemma \ref{lemma_short_path} we obtain a curve $\beta'$ contained in $B^d_a(x)$ that goes through each
point in point $A$ and has length
$O(a) M^{\frac{d-1}{d}} \asymp a^{d+\tfrac{1}{d}-1}=o(a^d)$.
Finally we form $\alpha$ by linking $\beta'$ to $\beta$ by means of a line segment
contained in $B^d_a(x)$ (of length at most $a$).

The curve $\alpha$ is completely contained in $B^d_a(x)$, it contains $P \cap B_a^d(x)$
and it is rectifiable since it consists of a finite number of line segments. In addition, from the length estimates
above we conclude that $\length(\alpha) = \clD^P(a,x) + o(a^d)$, as desired. (Note that in all the estimates, the
implicit constants depend on the set $\Lambda$ but not on the center of the ball $x$.)
\end{proof}
\begin{remark}
{\rm
For the proof of  Lemma \ref{lem:c2forQ} we do not need the full
strength of  Lemma~\ref{lemma_short_path}. If we accept (without proof) that
in condition~\Con{path3} instead of balls $B_a^d(x)$ one may use cubes $Q_a$ with side length $2a$
 and aligned  parallel to lines in $P$,
 then Lemma~\ref{lemma_short_path} can be replaced by the following,  more  elementary
 argument. If a cube $Q_a$ is parallel to $P$, then
 $P\cap \partial Q_a$ contains two copies of $P\cap q^\perp$. As $\Lambda = P \cap q^\perp $ is relatively separated, it
 can be approximated by a finite union of lattices isomorphic to
 $\mathbb{Z}^{d-1}$ (with asymptotically small error). It is now
 elementary to connect the lattice points in $\mathbb{Z}^{d-1} \cap
 [-a,a]^{d-1}$ by a path of length at most $(2a)^{d-1}$. The proof of
 Lemma~\ref{lem:c2forQ} remains unchanged.}
\end{remark}

\subsection{Lower bounds for the path density}
\begin{proposition}\label{prop:unifbetterthanparallel}
Let $\Omega \subset \Re^d$ be a convex centered symmetric body. Let $P \in \Par$ be a
Nyquist trajectory set composed of lines parallel to $q \in \Re^d \setminus \{0\}$.
Then $\ell^{-}(P) \geq |\Omega \cap q^\perp|$.
\end{proposition}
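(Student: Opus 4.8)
The plan is to reduce the statement to a sampling problem in dimension $d-1$. Since path density, Beurling density, and all the relevant geometric quantities are invariant under rotations of $\mathbb{R}^d$ (and $\bom$ transforms covariantly), I would first assume without loss of generality that $q=e_d$. Write $V := \Omega\cap e_d^\perp$, viewed as a centered symmetric convex body in $\mathbb{R}^{d-1}$, and $\Lambda := P\cap e_d^\perp \subset \mathbb{R}^{d-1}$. By Lemma~\ref{lemma_dens_parallel}, $\ell^-(P)=D^-(\Lambda)$, so it suffices to show $D^-(\Lambda)\geq |V|$.

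Fix $\delta\in(0,1)$. I would first produce a genuine $d$-dimensional ``slab'' inside $\Omega$ built over a slightly shrunken section: since $V\times\{0\}\subseteq\Omega$, Lemma~\ref{lemma_conv_1} gives $(1-\delta)(V\times\{0\})\subseteq(1-\delta)\Omega\subseteq\Omega^\circ$, and by compactness there is $\epsilon=\epsilon(\delta)>0$ with $(1-\delta)V\times[-\epsilon,\epsilon]\subseteq\Omega$. Choose once and for all a nonzero $\phi\in L^2(\mathbb{R})$ with $\widehat{\phi}$ supported in $[-\epsilon,\epsilon]$. Then for any $g\in\mathcal{B}_{(1-\delta)V}(\mathbb{R}^{d-1})$, the tensor product $f(x',x_d):=g(x')\phi(x_d)$ has $\widehat{f}=\widehat{g}\otimes\widehat{\phi}$ supported in $(1-\delta)V\times[-\epsilon,\epsilon]\subseteq\Omega$, so $f\in\bom$ with $\|f\|_{\lrd}^2=\|g\|^2\|\phi\|^2$.

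Next I would exploit the lower sampling inequality. By condition~\Con{recon2} there is a uniformly discrete set $\Lambda_0\subset P$, with separation constant $r>0$ and lower sampling bound $A>0$, that stably samples $\bom$. Since $\Lambda_0\subset P$ and $P$ is a union of lines parallel to $e_d$ through the points of $\Lambda$, every point of $\Lambda_0$ is of the form $(\mu,t)$ with $\mu\in\Lambda$; set $T_\mu:=\{t\in\mathbb{R}:(\mu,t)\in\Lambda_0\}$. Applying the sampling inequality to the $f$ above and grouping terms by line gives
\[
A\,\|g\|^2\|\phi\|^2 \;\leq\; \sum_{(\mu,t)\in\Lambda_0}|g(\mu)|^2|\phi(t)|^2 \;=\; \sum_{\mu\in\Lambda}|g(\mu)|^2\sum_{t\in T_\mu}|\phi(t)|^2 .
\]
Each $T_\mu$ is uniformly discrete in $\mathbb{R}$ with separation $\geq r$ (because $\Lambda_0$ is), and $\phi$ is bandlimited, so the Plancherel--P\'olya inequality yields a constant $C=C(\epsilon,r)$, \emph{independent of $\mu$}, with $\sum_{t\in T_\mu}|\phi(t)|^2\leq C\|\phi\|^2$. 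Substituting and cancelling $\|\phi\|^2>0$ gives $\sum_{\mu\in\Lambda}|g(\mu)|^2\geq (A/C)\,\|g\|^2$ for all $g\in\mathcal{B}_{(1-\delta)V}(\mathbb{R}^{d-1})$, i.e.\ $\Lambda$ satisfies the lower sampling inequality for $\mathcal{B}_{(1-\delta)V}$ in $\mathbb{R}^{d-1}$. Landau's necessary density condition for stable sampling~\cite{lan67} then gives $D^-(\Lambda)\geq |(1-\delta)V|=(1-\delta)^{d-1}|V|$, and letting $\delta\to0$ yields $D^-(\Lambda)\geq|V|=|\Omega\cap q^\perp|$, as desired.

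I expect the delicate points to be twofold. First, the uniform Plancherel--P\'olya bound: for a fixed bandlimited $\phi$ one needs the Bessel constant to be controlled uniformly over \emph{all} uniformly discrete sets of separation $\geq r$, since only then is the bound on $\sum_{t\in T_\mu}|\phi(t)|^2$ independent of the line index $\mu$ (which is what makes the grouping argument close). Second, one must use a version of Landau's theorem that requires only the lower sampling inequality: we do not assert, and do not need, that $\Lambda$ itself forms a frame or even is relatively separated — only that it ``samples from below''. The tensor-product reduction and the geometric thickening step via Lemma~\ref{lemma_conv_1} are routine.
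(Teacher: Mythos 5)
Your proposal is correct and follows essentially the same route as the paper: rotate so $q=e_d$, tensor a $(d-1)$-dimensional bandlimited $g$ with a $1$-D function bandlimited to a small interval (the paper uses $\operatorname{sinc}(\varepsilon x_d)$, you use a general $\phi$), group the sampling sum by lines, bound the one-dimensional sums uniformly (the paper via the covering constant of the sampling set, you via Plancherel--P\'olya with the separation constant --- both work), and invoke Landau's necessary density condition in $\mathbb{R}^{d-1}$ for the shrunken section before letting $\delta\to 0$. The two ``delicate points'' you flag are exactly the ones the paper handles, so no gap.
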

\begin{proof}
After a rotation, we may assume without loss of generality that $q =
e_d=(0,0, \dots, 0,1)$.
Denote $\Omega \cap q^\perp = \Omega_0 \times \{0\}$ with $\Omega_0 \subseteq \Rst^{d-1}$.
Let $\delta\in(0,1)$ and consider the set $(1-\delta) \Omega$. By Lemma \ref{lemma_conv_1},
$(1-\delta) \Omega \subseteq \Omega^\circ$. Hence $(1-\delta) \Omega$ and $\partial \Omega$ are two disjoint
compact sets and consequently
\begin{align*}
\varepsilon := d((1-\delta) \Omega, \partial\Omega) >0.
\end{align*}
This implies that
\begin{align*}
(1-\delta) \Omega_0 \subseteq
\Omega_0^\varepsilon := \set{x \in \Omega_0}{d((x,0), \partial\Omega)
  \geq \varepsilon} \subseteq \Re ^{d-1} ,  
\end{align*}
where $d((x,0), \partial\Omega)$ denotes the Euclidean distance from $(x,0)$ to the set $\partial\Omega$.
Let $\Lambda \subseteq \Rst^{d-1}$ be the set at which the lines in
$P$ intersect the hyperplane $e_d^\perp$, i.e.,
$P = \sett{(\lambda,t): t \in \Rst}= \Lambda \times \Re $.
Since $P$ is a Nyquist trajectory set, assumption~\Con{recon2} implies the
existence of  a sampling set  $\Gamma \subseteq \Rdst$ for
$\bom$ whose points belong to the trajectories in $P$. Hence $\Gamma \subseteq \Lambda \times \Rst$.
For each $\lambda \in \Lambda$, let $I_\lambda := \set{t \in
  \Rst}{(\lambda,t) \in \Gamma}$.

Let $\nu_d(\Gamma) = \max _{x\in \rd } \# (\Gamma \cap  Q_{1/2}(x) )$ be
the covering constant of $\Gamma \subset \Re ^d$. Since $\Gamma $ is
a set of stable sampling for $\bom$, its upper Beurling density $D^{+}(\Gamma)$ is finite and
consequently $\nu_d(\Gamma) < +\infty$.
Hence, for all $\lambda \in \Lambda$,
\begin{align*}
\nu_1(I_\lambda) = \nu_d(\{\lambda\} \times I_\lambda) \leq
\nu_d(\Gamma)  < \infty \, .
\end{align*}
Let $g \in L^2(\Rst^{d-1})$ be bandlimited on  $\Omega_0^\varepsilon$ and set $f(x) =
g(x_1,x_2,\ldots,x_{d-1})\text{sinc}(\varepsilon x_d)$ with
$\text{sinc} (x) = \tfrac{\sin (\pi x)}{\pi x}$ as usual. Since $\Omega
_0^\epsilon \times [-\epsilon, \epsilon ]\subset \Omega $, we have  $f \in \bom$. Using the fact that $\Gamma$ is a
sampling
set for $\bom$ we have
\begin{align*}
\varepsilon^{-1} \norm{g}_2^2 = \norm{f}_2^2
&\lesssim \sum_{\gamma \in \Gamma} \abs{f(\gamma)}^2
=
\sum_{\lambda \in \Lambda} \abs{g(\lambda)}^2 \sum_{t \in I_\lambda} \abs{\text{sinc}(\varepsilon t)}^2
\\
&\lesssim \varepsilon^{-1} \nu_d(\Gamma ) \sum_{\lambda \in \Lambda} \abs{g(\lambda )}^2.
\end{align*}
Hence $\norm{g}_2^2 \lesssim \sum_{\lambda \in \Lambda} \abs{g(\lambda)}^2$ for every $g$ bandlimited to
$\Omega_0$. By Landau's result on necessary density conditions for sampling \cite{lan67} we deduce that
$D^{-}(\Lambda) \geq |\Omega_0^\varepsilon|$. Thus, by Lemma \ref{lemma_dens_parallel}, it follows that
\begin{align*}
\ell^{-}(P)=D^{-}(\Lambda) \geq |\Omega_0^\varepsilon| \geq
\abs{(1-\delta)\Omega_0}
=(1-\delta)^{d-1} \abs{\Omega_0}.
\end{align*}
 The conclusion follows because  $\delta >0$ was arbitrary and
 $|\Omega _0| = |\Omega \cap q^\perp |$.
\end{proof}

\subsection{Reduction to sampling in each section}
To prove that equality holds in \eqref{eq:homq} we must show that there are Nyquist trajectories with path density
arbitrarily close to the volume of the section of $\Omega$ through the origin. The following proposition shows that
this problem can be reduced to finding sampling sets for each section
of $\Omega $  with uniform bounds. Precisely, for $t\in \Re $ let
\begin{align*}
\Omega _t = \{ x \in \Re ^{d-1} : (x,t) \in \Omega \}  \subset \Re^{d-1}
\end{align*}
be the section of $\Omega $ at height $t$. Then $\Omega = \bigcup
_{t\in \Re } \big(\Omega _t \times \{t\}\big)$.

\begin{proposition}
\label{prop_sections}
Let $\Omega \subseteq \Rdst$ be a closed set.
Assume that $\Lambda=\sett{\lambda_k: k \geq 1} \subset \Re ^{d-1} $
is a set of stable sampling  for $\mathcal{B}_{\Omega_t} \subseteq L^2(\Re ^{d-1})$ with uniform bounds
$0<A \leq B< \infty$ for all $t \in \Rst$,  then for every $f \in {L^2(\Rdst)}$ with
$\supp(\widehat{f}) \subseteq \Omega$
\begin{align}
\norm{f}_2^2 \asymp \sum_{k \geq 1} \int_{\Rst} \abs{f(\lambda_k,t)}^2 dt.
\end{align}
If in addition $\Omega$ is compact, then there exists a lattice
$\Gamma \subseteq  \Re $,
such that
\begin{align}
\norm{f}_2^2 \asymp \sum_{k \geq 1} \sum_{\gamma \in \Gamma}
\abs{f(\lambda_k,\gamma)}^2 \, .
\end{align}
\end{proposition}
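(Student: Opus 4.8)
The plan is to diagonalize the problem by a partial Fourier transform in the last coordinate, which converts sampling on the hyperplanes $\{\lambda_k\}\times\mathbb{R}$ into sampling the fibers $\mathcal{B}_{\Omega_\tau}$ simultaneously in all $\tau$. Write $y=(x,t)\in\mathbb{R}^{d-1}\times\mathbb{R}$ and $\omega=(\eta,\tau)$ accordingly, and let $\mathcal{F}_t$ be the unitary on $L^2(\mathbb{R}^d)$ that Fourier-transforms the last variable only, so that the full transform factors as $\mathcal{F}=\mathcal{F}_x\circ\mathcal{F}_t$. Put $F:=\mathcal{F}_t f\in L^2(\mathbb{R}^d)$. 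Since $\mathcal{F}_x F=\widehat f$ is supported in $\Omega$, a standard slicing (Fubini) argument shows that for a.e.\ $\tau$ the slice $F(\cdot,\tau)$ belongs to $L^2(\mathbb{R}^{d-1})$ and has $(d-1)$-dimensional Fourier transform $\widehat f(\cdot,\tau)$, which is supported in $\Omega_\tau$; hence $F(\cdot,\tau)\in\mathcal{B}_{\Omega_\tau}$ for a.e.\ $\tau$.

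\textbf{Step 1 (the continuous statement).} The idea is to apply the hypothesis fiberwise and integrate in $\tau$. For a.e.\ $\tau$ the uniform sampling inequality gives $A\int_{\mathbb{R}^{d-1}}|F(x,\tau)|^2\,dx\le\sum_k|F(\lambda_k,\tau)|^2\le B\int_{\mathbb{R}^{d-1}}|F(x,\tau)|^2\,dx$; here we use that point evaluations are well-defined on $\mathcal{B}_{\Omega_\tau}$, which is implicit in the hypothesis and automatic when $\Omega$ is compact. Integrating in $\tau$, on the two outer terms Plancherel in $t$ yields $\int_{\mathbb{R}}\int_{\mathbb{R}^{d-1}}|F(x,\tau)|^2\,dx\,d\tau=\|F\|_2^2=\|f\|_2^2$. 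For the middle term, Tonelli interchanges $\sum_k$ and $\int_{\mathbb{R}}d\tau$; for fixed $k$, the function $\tau\mapsto F(\lambda_k,\tau)$ is the Fourier transform of $t\mapsto f(\lambda_k,t)$ (in the non-compact case one simply \emph{defines} the latter as the inverse transform of the former, which lies in $L^2(\mathbb{R})$ precisely because the sum is finite), so Plancherel gives $\int_{\mathbb{R}}|F(\lambda_k,\tau)|^2\,d\tau=\int_{\mathbb{R}}|f(\lambda_k,t)|^2\,dt$. Assembling the three identities yields $A\|f\|_2^2\le\sum_k\int_{\mathbb{R}}|f(\lambda_k,t)|^2\,dt\le B\|f\|_2^2$.

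\textbf{Step 2 (discretization along the lines, $\Omega$ compact).} Choose $R>0$ with $\Omega\subseteq\mathbb{R}^{d-1}\times[-R,R]$. Then for each $k$ the one-variable function $g_k:=f(\lambda_k,\cdot)$ is bandlimited to $[-R,R]$: its Fourier transform $\tau\mapsto F(\lambda_k,\tau)$ vanishes wherever $\Omega_\tau=\emptyset$, i.e.\ outside $[-R,R]$. Fix a single lattice $\Gamma=c\mathbb{Z}$ with $c\le\tfrac1{2R}$; by the classical Shannon sampling theorem $\Gamma$ is a sampling set for $\mathcal{B}_{[-R,R]}$ with frame bounds $A',B'>0$ depending only on $c$ and $R$ (hence not on $k$), so $A'\int_{\mathbb{R}}|g_k(t)|^2\,dt\le\sum_{\gamma\in\Gamma}|g_k(\gamma)|^2\le B'\int_{\mathbb{R}}|g_k(t)|^2\,dt$. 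Summing over $k$ and combining with Step 1 gives $\|f\|_2^2\asymp\sum_k\sum_{\gamma\in\Gamma}|f(\lambda_k,\gamma)|^2$ with constants $AA'$, $BB'$. The only delicate points are the measurability and integrability of the fiberwise quantities and the interpretation of $f$ on the hyperplanes $\{\lambda_k\}\times\mathbb{R}$ (a non-issue for compact $\Omega$, where $f$ is continuous); everything else is Plancherel and Tonelli, and Step 2 is routine once one notes that compactness of $\Omega$ furnishes one common bandwidth, hence one lattice $\Gamma$ with $k$-independent constants.
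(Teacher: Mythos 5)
Your argument is correct and is essentially the paper's own proof in different notation: the paper also reduces to the fibers via a partial Fourier transform (working with $g=\widehat f$ and $G_k(t)=\widehat{g_t}(\lambda_k)$, which is your $F(\lambda_k,\cdot)$ up to reflection), applies the uniform fiberwise sampling inequality, integrates with Fubini/Plancherel, and handles the discrete case by noting that each $f(\lambda_k,\cdot)$ is bandlimited to a common compact interval so one sufficiently dense lattice works with $k$-independent bounds. Your Step 2 just spells out the Shannon-sampling detail that the paper leaves as a one-line remark.
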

\begin{proof}
Set   $g = \widehat{f}$ and $g_t(x')= g(x',t)$ for  $x=(x',t) \in
\Rst^{d-1}\times\Rst$ . Then $\supp(g) \subseteq \Omega$ and $\supp
(g_t) \subset \Omega _t$. We further define the partial Fourier
transform
\begin{align*}
G_k(t):= \int_{\Rst^{d-1}} g(x',t) e^{2\pi {\sf i} \langle \lambda_k,
  x'\rangle} \, dx' = \hat{g_t}(\lambda _k) .
\end{align*}

Since
\begin{align*}
\int_{\Rst} G_k(t) e^{2\pi {\sf i} w t} \,dt =
\int_{\Rdst} g(x',t) e^{2\pi {\sf i} \langle \lambda_k, x'\rangle}
e^{2\pi {\sf i} w  t } \, dx' \, dt=f(\lambda_k,w) \, ,
\end{align*}
Plancherel's theorem yields
\begin{align*}
\int_\Rst \abs{G_k(t)}^2 dt =\int_\Rst \abs{f(\lambda_k,w)}^2 \, dw.
\end{align*}
Using the support property $\supp(g_t) \subseteq \Omega_t$, we obtain
for almost all $t\in \Re $ that
\begin{align*}
\sum_k \abs{G_k(t)}^2 &= \sum_k \abs{\widehat{g_t}(-\lambda_k)}^2
\asymp \int_{\Rst^{d-1}} \abs{g_t(x')}^2 \,dx'
\\
&= \int_{\Rst^{d-1}} \abs{g(x',t)}^2 \,dx' \, ,
\end{align*}
with constants independent of $t$ by assumption. Finally,
\begin{align*}
\sum_{k \geq 1} \int_{\Rst^{d-1}} \abs{f(\lambda_k,t)}^2 dt
&= \sum_{k \geq 1} \int_\Rst \abs{G_k(t)}^2 dt
= \int_\Rst \sum_{k \geq 1} \abs{G_k(t)}^2 dt
\\
&\asymp
\int_\Rst \int_{\Rst^{d-1}} \abs{g(x',t)}^2 \,dx' dt
= \norm{g}_2^2 = \norm{f}_2^2.
\end{align*}
If in addition $\Omega$ is compact, then $f$ is bandlimited to a compact set
and the integrals involving $f$ can be replaced by sums over a
suitably dense lattice.
\end{proof}
\begin{remark} {\rm 
Proposition \ref{prop_sections} applies to spectra of the form
\begin{align*}
\Omega = \sett{(x',t)\in \Rst\times\Rst: g_1(x') \leq t \leq g_2(x')}
\end{align*}
with two continuous functions $g_1, g_2$. 
This set can have a very large projection onto the last coordinate
while $\abs{g_2(x')-g_1(x')}$ remains small. }
\end{remark}
\subsection{Universal sampling sets}
In order to prove Theorem \ref{th:parallel},  we need to find, for
each centered symmetric convex body $\Omega$ and each 
direction $q \not= 0$, a stable Nyquist trajectory for $\bom$
consisting of lines parallel to $q$ and with a  path-density close to
the measure of the central section of $\Omega$ by $q^\perp$. After a
rotation, we may assume that $q = e_d$ and analyze the  horizontal sections $\Omega_t := \sett{(x_1, \ldots, x_{d-1}):
(x_1, \ldots, x_{d-1}, t) \in \Omega}\subseteq \mathbb{R}^{d-1}$ of
$\Omega $. According to  Proposition \ref{prop_sections}, we need to  find
a set $\Lambda \subseteq \Rst^{d-1}$, such that (a) its   Beurling
density is close to $\abs{\Omega_0}$ and (b) $\Lambda $ is
simultaneously a sampling set for all  spaces of functions bandlimited
$ \mathcal{B}_{\Omega_t}$ for all $t$ with uniform sampling bounds.

In the special  case when  $\Omega$ is contained in an ``oblique''  cylinder, i.e.,
$\Omega _t \subseteq  tv + \Omega _0$ for some vector $v\in
\mathbb{R}^{d-1}$ and  all  $t$ (Figure~\ref{fig:sampR2c}),  it suffices to find a  sampling set only for
 $\mathcal{B}_{\Omega _0}$  with density close to the  critical
 one. This problem was already  solved  in \cite{marzo}.

 In general, the horizontal sections $\Omega _t$ are \emph{not}
 contained in translates of  the central section $\Omega _0$. As a simple example we
 mention the regular octahedron and two sections perpendicular to
 $(1,1,1)$.  The octahedron fits into a cylinder with a cross-section
 that is strictly  larger than the central \emph{minimal} cross-section (see Figure
\ref{fig_secs}). Therefore  the
 simple argument sketched above 
 does not work. To solve the general case, we need the concept of
 universal sampling sets, as introduced in~\cite{olul08,
mame10}.
\begin{figure}
\centering
\subfigure[The central section by the plane $x+y+z=0$ (shaded hexagon).]{
\includegraphics[height=2.15 in]{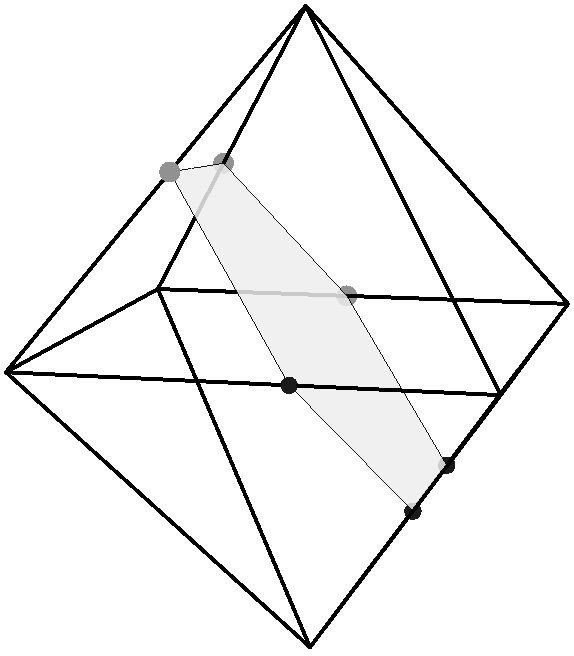}
}
\subfigure[The parallel section by the plane $x+y+z=1$ is a triangle
(top face of
the octahedron) is not contained in the central one (shaded hexagon).]{
\centering
\includegraphics[height=2.15 in]{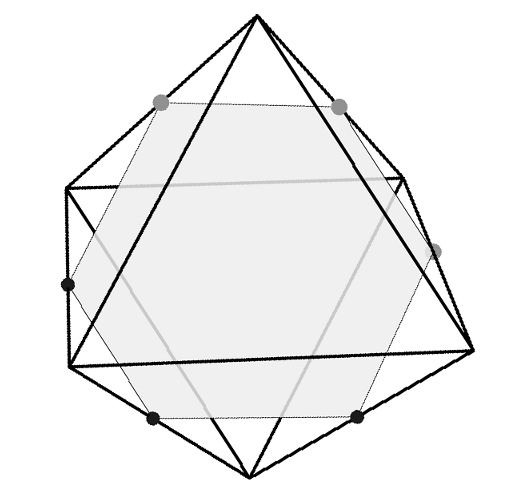}
}
\caption{The regular octahedron $\sett{(x,y,z) \in \Rst^3: \abs{x}+\abs{y}+\abs{z} \leq 1}$ and two parallel sections.}
\label{fig_secs}
\end{figure}

Given $\eta>0$, a \emph{$\eta $-universal sampling set} $\Lambda$ is a set
with uniform density $\eta$ that is a sampling set for $\bom$, for all compact spectra $\Omega \subseteq \Rdst$
with $\abs{\Omega} < \eta$. It is known that for all $\eta>0$ there
exist  universal sampling sets \cite{olul08,
mame10}. For example, in dimension $d=1$ the set $\{ n + \{\sqrt{2}n\} \} : n\in \mathbb{Z}\}$ is a universal sampling
set with density $\eta
=1$ (with $\{x\} = x - \lfloor x \rfloor$ denoting
the fractional part of $x$). 
 On the other hand, if the requirement that $\Omega$ be compact is dropped, universal sampling sets do not exist
\cite{olul08}.

A universal $\eta$-sampling set is a  set of stable sampling  for all compact spectra $\Omega$ with $\abs{\Omega} <
\eta$,
but the frame bounds may depend on $\Omega$. We now argue that when the spectra consist of sections of a compact
convex body, then these bounds can be chosen to be uniform. We need the following technical lemma,
whose proof is deferred to  Section \ref{sec_proof_lemma_sections}.

\begin{lemma}[Continuity of the sections]
\label{lemma_sections}
Let $\Omega \subseteq \Rdst$ be a convex and compact set, $t \in \Rst$, and $\varepsilon>0$. Then there exists
$\delta>0$ such that for all $s \in (t-\delta,t+\delta)$
\begin{align*}
\Omega_{s} \subseteq \Omega_t + B_\epsilon  ^{d-1} \, .
\end{align*}
\end{lemma}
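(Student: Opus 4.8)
The plan is to argue by contradiction using compactness. Suppose the statement fails: then there is a convex compact $\Omega$, a height $t$, and an $\varepsilon>0$ such that for every $n$ there is $s_n$ with $|s_n-t|<1/n$ and a point $x_n\in\Omega_{s_n}$ with $x_n\notin\Omega_t+B_\varepsilon^{d-1}$, i.e.\ $d(x_n,\Omega_t)\geq\varepsilon$ (interpreting $\Omega_t\subseteq\Re^{d-1}$ as a subset of the slice at height $t$). Since $(x_n,s_n)\in\Omega$ and $\Omega$ is compact, after passing to a subsequence we may assume $(x_n,s_n)\to(x_\infty,t)$ for some $x_\infty\in\Re^{d-1}$; note the last coordinate of the limit is $t$ because $s_n\to t$. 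By closedness of $\Omega$ we get $(x_\infty,t)\in\Omega$, so $x_\infty\in\Omega_t$. But $d(x_n,\Omega_t)\geq\varepsilon$ for all $n$ forces $d(x_\infty,\Omega_t)\geq\varepsilon$, contradicting $x_\infty\in\Omega_t$. Hence the lemma holds.

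A subtlety to handle cleanly: one must allow for the possibility that $\Omega_t$ is empty (if $t$ lies outside the projection of $\Omega$ onto the last coordinate), in which case the claim $\Omega_s\subseteq\Omega_t+B_\varepsilon^{d-1}$ reads $\Omega_s\subseteq\emptyset$. But if $t$ is not in the closed projection $\clP(\Omega)$ of $\Omega$ onto the $x_d$-axis, then since that projection is compact there is a whole neighborhood $(t-\delta,t+\delta)$ disjoint from it, so $\Omega_s=\emptyset$ for all such $s$ and there is nothing to prove. If $t$ is in the projection but $\Omega_t$ happens to be empty — impossible here, since $\Omega$ closed implies $\Omega_t$ closed and the projection being attained at $t$ means $\Omega_t\neq\emptyset$. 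So the only genuine case is $\Omega_t\neq\emptyset$, which is exactly the contradiction argument above. (Convexity of $\Omega$ is in fact not needed for this lemma; only compactness is used. It is stated for convex $\Omega$ because that is the setting in which it is applied.)

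The only real work is the topological bookkeeping — identifying $\Omega_s\subseteq\Re^{d-1}$ with the slice $\{(x,s):x\in\Omega_s\}\subseteq\Re^d$ and checking that convergence of points in $\Re^d$ with last coordinates tending to $t$ gives a limit in the slice at height $t$ — and there is no genuine obstacle. The one point to be careful about is that the distance $d(x_n,\Omega_t)$ is computed in $\Re^{d-1}$, while the convergence is extracted in $\Re^d$; but since $s_n\to t$, the $\Re^d$-distance from $(x_n,s_n)$ to the set $\Omega_t\times\{t\}$ differs from the $\Re^{d-1}$-distance $d(x_n,\Omega_t)$ by at most $|s_n-t|\to0$, so the lower bound $\varepsilon$ survives in the limit. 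This is the step where one should write the inequality explicitly rather than wave hands.
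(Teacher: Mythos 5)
Your proof is correct and follows the same basic strategy as the paper's: argue by contradiction, use compactness of $\Omega$ to extract a convergent subsequence of $(x_n,s_n)$, and note that the limit lies in the slice at height $t$. The difference is in how the contradiction is closed. You observe directly that the limit $x_\infty$ belongs to $\Omega_t$, so that $d(x_\infty,\Omega_t)=0$ while continuity (indeed $1$-Lipschitzness) of $y\mapsto d(y,\Omega_t)$ forces $d(x_\infty,\Omega_t)\geq\varepsilon$; this uses only compactness. The paper instead takes a detour: it also selects nearest points $y_n\in\Omega_t$ with $|x_n-y_n|=r_n$, passes to a limit $y$, and invokes convexity to place the midpoint $z=(x+y)/2$ in $\Omega_t$ with $|x_n-z|<r_n$ eventually, contradicting the definition of $r_n$. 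Your remark that convexity is not actually needed is therefore accurate -- the paper's own intermediate assertion that $x\in\Omega_t$ already suffices to conclude. Your explicit treatment of the degenerate case $\Omega_t=\emptyset$ (a whole neighborhood of $t$ misses the compact projection of $\Omega$ onto the last coordinate, so nearby slices are empty) is also more careful than the paper's bare ``without loss of generality.'' The one point you flag -- comparing the $\mathbb{R}^{d-1}$ distance with the $\mathbb{R}^d$ distance -- can be sidestepped entirely by noting that $x_n\to x_\infty$ already in $\mathbb{R}^{d-1}$ and applying the Lipschitz bound there, but your workaround via $|s_n-t|\to 0$ is equally valid.
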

We now show that the sections of a convex compact set admit a universal sampling set with uniform stability
bounds.

\begin{proposition}
\label{prop_unif_bounds}
Let $\Omega \subseteq \Rdst$ be a convex and compact set and let
\begin{align*}
\eta > \max_{t\in\Rst}{\abs{\Omega_t}}.
\end{align*}
Let $\Lambda$ be an $\eta$-universal sampling set. Then $\Lambda$ is a sampling set for
all $\Omega_t$, $t \in \Rst$, with sampling bounds uniform in $t$.
\end{proposition}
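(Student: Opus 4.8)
The plan is to prove the two sampling inequalities separately, the upper (Bessel) bound being essentially automatic and the lower bound carrying the real content. For the upper bound, observe that since $\Omega$ is compact it lies in some ball $B_R^d$, so every section satisfies $\Omega_t \subseteq B_R^{d-1}$ and hence $\mathcal{B}_{\Omega_t} \subseteq \mathcal{B}_{B_R^{d-1}}$ for all $t$. A universal sampling set has finite density and is in particular relatively separated, so the standard Bessel-type estimate for bandlimited functions yields a constant $B = B(R, \nu(\Lambda))$ with $\sum_k \abs{g(\lambda_k)}^2 \leq B \norm{g}_2^2$ for all $g \in \mathcal{B}_{B_R^{d-1}}$; restricting to each subspace $\mathcal{B}_{\Omega_t}$ keeps the same $B$, which is thus uniform in $t$.

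For the lower bound the key elementary remark is \emph{monotonicity of the lower sampling bound}: if $A\norm{f}_2^2 \leq \sum_k \abs{f(\lambda_k)}^2$ holds for all $f \in \mathcal{B}_E$ and $E' \subseteq E$, then, since $\mathcal{B}_{E'} \subseteq \mathcal{B}_E$, the same inequality with the same $A$ holds for all $f \in \mathcal{B}_{E'}$. Consequently it suffices, for each $t_0$ in the compact set $T := \set{t\in\Rst}{\Omega_t \neq \emptyset}$ (which is the projection of $\Omega$ onto the last coordinate), to produce an open interval $J_{t_0} \ni t_0$ and a compact set $E_{t_0}$ with $\abs{E_{t_0}} < \eta$ such that $\Omega_t \subseteq E_{t_0}$ for every $t \in J_{t_0}$. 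Indeed, $\eta$-universality of $\Lambda$ then makes it a sampling set for $\mathcal{B}_{E_{t_0}}$ with some lower bound $A_{t_0} > 0$, and by the monotonicity remark $A_{t_0}$ is a lower sampling bound for $\mathcal{B}_{\Omega_t}$ for all $t \in J_{t_0}$.

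To build such an $E_{t_0}$, I would first enlarge the central section slightly: since $\bigcap_{\epsilon>0}\big(\Omega_{t_0} + B_\epsilon^{d-1}\big) = \Omega_{t_0}$ is the intersection of a decreasing family of compact sets, continuity of Lebesgue measure from above gives $\abs{\Omega_{t_0} + B_\epsilon^{d-1}} \to \abs{\Omega_{t_0}} \leq \max_{t}\abs{\Omega_t} < \eta$ as $\epsilon \downarrow 0$. Fix $\epsilon = \epsilon(t_0) > 0$ with $\abs{\Omega_{t_0} + B_\epsilon^{d-1}} < \eta$ and put $E_{t_0} := \Omega_{t_0} + B_\epsilon^{d-1}$. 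Lemma \ref{lemma_sections} (continuity of the sections) then furnishes $\delta = \delta(t_0) > 0$ with $\Omega_t \subseteq \Omega_{t_0} + B_\epsilon^{d-1} = E_{t_0}$ for all $t \in J_{t_0} := (t_0-\delta, t_0+\delta)$, as required.

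It remains to assemble the pieces: $T$ is compact, so finitely many intervals $J_{t_1}, \ldots, J_{t_N}$ cover it, and $A := \min\{A_{t_1}, \ldots, A_{t_N}\} > 0$ is then a lower sampling bound valid for $\mathcal{B}_{\Omega_t}$ for every $t \in T$, while for $t \notin T$ the space $\mathcal{B}_{\Omega_t}$ is trivial and nothing is needed. Combined with the uniform Bessel bound $B$ this gives $A\norm{g}_2^2 \leq \sum_k \abs{g(\lambda_k)}^2 \leq B\norm{g}_2^2$ for all $g \in \mathcal{B}_{\Omega_t}$, uniformly in $t$. I expect the only genuinely delicate point to be the interplay between the monotonicity remark and the verification that the slightly enlarged section $E_{t_0}$ stays below the universal threshold $\eta$; everything else is the already-established continuity lemma (Lemma \ref{lemma_sections}) together with a routine compactness argument.
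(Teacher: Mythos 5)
Your proof is correct and follows essentially the same route as the paper's: enlarge each section slightly while keeping its measure below $\eta$, apply Lemma \ref{lemma_sections} to obtain a neighborhood of parameters whose sections are contained in the enlarged one, cover the compact parameter range by finitely many such neighborhoods, and use monotonicity of sampling bounds under inclusion of spectra. The only cosmetic difference is that you obtain the uniform upper (Bessel) bound from a single ball containing all sections, whereas the paper simply takes the maximum of the finitely many upper bounds arising from the cover.
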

\begin{proof}
Let $I \subseteq \Rst$ be compact interval such that
$\Omega \subseteq \Rst^{d-1} \times I$. Let $t \in I$. Since $\Omega_t$ is closed,
there exists $\varepsilon_t>0$ such that
\begin{align*}
\abs{\Omega_t + B_{\varepsilon_t}} < \eta.
\end{align*}
We let $\widetilde{\Omega}_t := \Omega_t + B_{\varepsilon_t}$ denote the slightly enlarged section.

With this notation, by Lemma \ref{lemma_sections}, there exists $\delta_t>0$ such that
\begin{align}
\Omega_s \subseteq \widetilde{\Omega}_t,
\mbox{ if } s \in (t-\delta_t,t+\delta_t).
\end{align}
 The family of intervals $\{(t-\delta_t,t+\delta_t): t \in I\}$
is an open cover of $I$. Then, by compactness,  $I \subseteq \bigcup
_{k=1}^N (t_k-\delta_{t_k},t_k+\delta_{t_k}) $  for finitely many
$t_k \in \Re $.
Hence, for every $s \in I$, there exists $k \in \{1, \ldots, N\}$ such that
\begin{align}
\label{eq_inclusion}
\Omega_s \subseteq \widetilde{\Omega}_{t_k}.
\end{align}
Since $\abs{\widetilde{\Omega}_{t_k}} < \eta$ for $k=1,\dots, N$,
the universal sampling property implies that $\Lambda$ is a sampling set for
$\mathcal{B}_{\widetilde{\Omega}_{t_k}}$ with bounds $0<A_k \leq B_k < \infty$. Let
\begin{align*}
A &:= \min\{A_1, \ldots, A_N\},
\\
B &:= \max\{B_1, \ldots, B_N\}.
\end{align*}
Hence, $\Lambda$ is a sampling set for $\mathcal{B}_{\widetilde{\Omega}_{t_k}}$ with bounds $A,B$
for all $k=1, \ldots, N$. Since, according to \eqref{eq_inclusion}, every section
$\Omega_s$ is contained in some set $\widetilde{\Omega}_{t_k}$, it follows that
$\Lambda$ is a sampling set with bounds $A,B$ for all $\mathcal{B}_{\Omega_s}$ with $s \in I$.
Note finally that $\Omega_s=\emptyset$, for $s \notin I$. This completes the proof.
\end{proof}
\subsection{Upper path density bounds}
With Proposition \ref{prop_unif_bounds} we can now show the estimates
 \eqref{eq:homq} for the necessary path density for convex spectra.
\begin{proof}[Proof of Theorem \ref{th:parallel}]
From Proposition \ref{prop:unifbetterthanparallel} it follows that
\begin{align*}
\inf_{P\in\Hom^q} \ell(P) \geq
\inf_{P\in \Par^q} \ell^{-}(P) \geq |\Omega \cap q^\perp|.
\end{align*}
Let us show that all these inequalities are actually equalities.
Assume without loss of generality that $q=e_d= (0,\ldots,0,1)$ and note that since
$\Omega$ is convex and symmetric the section through the origin is the
one with maximal area. This is a consequence of the Brunn-Minkowski inequality, see for example \cite{ga02}.
Given a number $\eta$ satisfying
\begin{align*}
\eta > |\Omega \cap q^\perp|,
\end{align*}
let $\Lambda \subseteq \Rst^{d-1}$ be a $\eta$-universal sampling set and let
$P$ be a set of lines parallel to $q$ that go through $\Lambda$.
Since $\Lambda$ possesses finite (uniform) density,  $P$ satisfies
condition~\Con{path3} by Lemma \ref{lem:c2forQ}.
In addition, the fact that $\Lambda$ possesses a uniform density and Lemma
\ref{lemma_dens_parallel} imply that
$\ell(P)=D(\Lambda ) = \eta$ and that $P$ is homogeneous.  Propositions \ref{prop_unif_bounds} and
\ref{prop_sections} imply that $P$ is a Nyquist trajectory set. This shows that
$\inf_{P\in\Hom^q} \ell(P) \leq \eta$. The conclusion follows by letting $\eta$ tend to $|\Omega \cap q^\perp|$.
\end{proof}

\section{Optimizing over arbitrary trajectory sets}\label{sec:disc}
We now consider the problem of designing trajectory sets without
requiring the trajectories to be straight lines.
\subsection{Ill-posedness of the  unconstrained problem}
In the following proposition we show that the optimization problem  \eqref{eqn:minpd} is ill-posed by constructing a
sequence of trajectory sets in $\nom$  with arbitrarily small  path
density.
\begin{proposition}
\label{prop:infzero}
Let $\Omega \subseteq \Rst^2$ be a compact set. For every $\epsilon
>0$  there exists a trajectory set $P\in \nom$, such that $\ell ^+ (P)
<\epsilon $.  Thus,
\begin{equation*}
\inf_{P \in \nom} \ell^+(P) = 0.
\end{equation*}
\end{proposition}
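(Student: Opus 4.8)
The plan is to construct, for every $\epsilon>0$, a trajectory set $P\in\nom$ whose path density is below $\epsilon$ by using \emph{long, sparse excursions}: take a fixed stable sampling set $\Lambda\subseteq\Rst^2$ for $\bom$ (which exists since $\Omega$ is compact, e.g.\ a sufficiently dense lattice), and then join its points by a single rectifiable curve in a way that wanders very far out into the plane between consecutive visits. The key observation is that condition~\Con{recon2} only requires that \emph{some} uniformly discrete subset of $P$ be a stable sampling set — it does not constrain the rest of the trajectory. So if we route the connecting curve so that in any ball $B_a^2(x)$ the curve spends only a sublinear-in-area amount of its length ``locally'' while the bulk of its length is incurred far away, the path density — which is a $\limsup$ of (length in $B_a^2(x)$)$/|B_a^2|$ — can be made arbitrarily small.

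Concretely, I would fix $\Lambda=\delta_0\Zst^2$ with $\delta_0$ small enough that $\Lambda$ is a sampling set for $\bom$, enumerate $\Lambda=\{\lambda_n\}_{n\ge1}$ in order of increasing norm, and then build a curve $\gamma$ that visits $\lambda_1,\lambda_2,\dots$ in order but, between visiting $\lambda_n$ and $\lambda_{n+1}$, first travels radially outward to distance $R_n$ (with $R_n\to\infty$ very fast), loops around, and comes back. The trajectory set $P$ is then the single trajectory $\{\gamma(\Rst)\}$. One must check: (i) $\gamma$ restricted to any finite interval is rectifiable (clear, it is piecewise linear or piecewise circular); (ii) $\Lambda\subseteq\gamma(\Rst)$ is uniformly discrete and a stable sampling set — this is \Con{recon2}; (iii) condition~\Con{path3} — but since $P$ is already a single rectifiable curve, one may simply take $\alpha$ to be the portion of $\gamma$ meeting $B_a^2(x)$ suitably reparametrized, so $\delta(a)\equiv0$ works, provided we argue that $P\cap B_a^2(x)$ is contained in a single sub-curve of bounded total length equal to $\clD^P(a,x)$ (care is needed because $\gamma$ may enter and leave $B_a^2(x)$ many times; one can concatenate the finitely many pieces by small connecting arcs lying inside $B_a^2(x)$ — this is exactly the kind of argument in Lemma~\ref{lem:c2forQ}, only easier here, and one absorbs the extra connectors into $\delta(a)=o(a^2)$). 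The crux is (iv): estimating $\clD^P(a,x)$, the length of $\gamma$ inside $B_a^2(x)$.

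For the path-density estimate, the idea is that the outward excursion of "radius" $R_n$ deposits length of order $R_n$ near $\lambda_n$ but, by choosing the radial rays to fan out in many distinct directions and choosing the growth $R_n$ to outpace the enumeration, one arranges that inside any fixed large ball $B_a^2(x)$ only $O(a^2)$ of the $\lambda_n$ lie there, each contributing only $O(a)$ worth of excursion length that actually stays inside $B_a^2(x)$ (the rest of each excursion escapes to radius $R_n\gg a$ and contributes nothing to $\clD^P(a,x)$). This gives $\clD^P(a,x)=O(a)\cdot O(a^2)$, which is too crude. The real mechanism for making the density \emph{small} (not just finite) is different: one fixes a large parameter $M$ and takes the base sampling lattice to be $M\Zst^2$ translated-and-refined only \emph{along} the curve — i.e., one uses a lattice sparse enough that it is \emph{not} a sampling set, then adds the required extra samples only by letting the curve wiggle densely in tiny neighborhoods. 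Here is the cleaner route: choose a stable sampling set $\Lambda$ for $\bom$ contained in a union of \emph{short segments} — replace each point $\lambda$ of a crude lattice by a cluster of $\gamma$ passing through it — so that the curve realizing $\Lambda$ can be made to have length $\le\epsilon\cdot(\text{area})$ asymptotically, because the ``sampling work'' is done by points that are cheap to connect while the curve is mostly idle (straight, covering ground only once). I would spell this out by: picking $\Lambda$ a lattice of covolume $v$ small enough for stable sampling; building $\gamma$ as a ``serpentine'' that in each unit square does an amount of wiggling $\sim\sqrt{v}\cdot(\#\text{points in that square})\sim 1/\sqrt{v}$; and observing that actually the total length in a ball $B_a^2(x)$ is $\sim a^2/\sqrt{v}$, so the density is $\sim 1/\sqrt{v}$ — which does \emph{not} go to zero.

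The honest difficulty — and where I expect the real content to lie — is that a single rectifiable curve that passes through a set of \emph{positive} lower Beurling density necessarily has positive lower path density (length $\gtrsim$ number of points $\times$ separation $\gtrsim$ area); so the proposition cannot be proved with a fixed-separation sampling set. The resolution must exploit that stable sampling sets need not have bounded covering density pointwise in the sense of forcing length: one takes sampling sets that are \emph{extremely sparse on average in one region and compensated elsewhere}, or — the approach I would actually pursue — one uses a sampling set $\Lambda$ whose points lie on a \emph{sparse family of lines} spaced $\sim 1/\epsilon$ apart but with samples extremely dense \emph{along} each line, so that $P$ (the lines plus short connectors) has path density $\sim\epsilon$ by Lemma~\ref{lemma_dens_parallel}-type reasoning, and then let $\epsilon\to0$. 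For this to yield a \emph{sampling} set one needs a line-sampling configuration (in the spirit of \cite{GS01,BH89}) that remains stable even as line spacing grows — which for a general compact $\Omega\subseteq\Rst^2$ (not convex!) one can obtain by a modulation/tiling trick: cover $\Omega$ by finitely many translates of a small box, sample each associated band along lines, and superimpose. So the key step, and the main obstacle, is producing for arbitrarily large spacing $\Delta$ a stable line-sampling set for $\bom$ using lines $\Delta$ apart; once that is in hand, Lemma~\ref{lemma_dens_parallel}, Lemma~\ref{lem:c2forQ} (for \Con{path3}, since these are parallel lines of finite cross-section density) and \Con{recon2} give $P\in\nom$ with $\ell^+(P)=1/\Delta<\epsilon$, and letting $\Delta\to\infty$ finishes the proof and shows $\inf_{P\in\nom}\ell^+(P)=0$.
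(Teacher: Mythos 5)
Your final plan --- put the samples on a sparse family of parallel lines spaced $\Delta\to\infty$ apart, compensating by sampling very densely \emph{along} each line --- cannot work, and the step you single out as ``the main obstacle'' is in fact an impossibility rather than a deferred technicality. If a stable sampling set for $\bom$ lies on lines parallel to $q$, the argument of Proposition \ref{prop:unifbetterthanparallel} (test against $f(x)=g(x')\sinc(\varepsilon x_d)$ and sum out the longitudinal samples) shows that the cross-sectional set $\Lambda=P\cap q^\perp$ must itself satisfy Landau's necessary density condition for a section of $\Omega$; densifying the samples along each line buys nothing in the transverse direction. After the standard reduction of enlarging $\Omega$ to the cube $[-1/2,1/2]^2$ (which only makes the sampling problem harder), this forces the line spacing to stay bounded and gives $\ell^{+}(P)\geq\ell^{-}(P)\geq 1$ --- precisely the content of Theorem \ref{th:parallel}. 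The same obstruction, with a constant depending on $\Omega$, applies to any compact $\Omega$ with nonempty interior, so the modulation/tiling remark does not rescue the plan. Your first two ideas are abandoned by you, correctly, so the proposal contains no working construction.

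The idea you are missing is \emph{bunched (periodic nonuniform) sampling}: Nyquist-rate samples can be concentrated on segments of arbitrarily small total length while keeping the \emph{average} density above $\abs{\Omega}$. The paper takes a finite set $F_n\subset[0,1/n]$ with $\#F_n=2n$ and periodizes it with period $n$ to obtain $\Gamma_n\subset\Rst$ with $D^{-}(\Gamma_n)=2>1$, so that $\Lambda_n=\Gamma_n\times\Zst$ is a separated stable sampling set for $\mathcal{B}_{[-1/2,1/2]^2}$ even though it lies on horizontal ``hairs'' of length $1/n$ attached at the points $(nj,k)$. Adjoining the vertical lines $n\Zst\times\Rst$ to connect the hairs (so that \Con{path3} holds with $o(a^2)$ overhead) yields $P_n\in\nom$ with $\ell^{+}(P_n)\lesssim 1/n$. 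Your ``honest difficulty'' paragraph correctly observes that the separation of the sampling set must shrink as $\epsilon\to 0$; the right way to exploit this is to cluster the samples into short above-Nyquist-density bunches whose total length is negligible, not to densify them along full lines, where the extra samples are wasted.
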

\begin{proof}
By enlarging $\Omega$ if necessary, we can assume that it is a cube. Since the statement to be proved
is invariant under dilations we further assume that $\Omega = [-1/2,1/2]^2$. For each $n \geq 1$ we construct a
trajectory set $P_n$, in such a way that $\ell^+(P_n) \longrightarrow 0$, as $n \longrightarrow \infty$.

The counterexample is given by the path $P_n$ resulting from the set
\begin{align}
\label{eq_lines}
\Big(n \mathbb{Z} \times \Re \Big)  \cup \Big((n\mathbb{Z} + [0,\tfrac{1}{n}]) \times
\mathbb{Z} \Big) \, ,
\end{align}
which is the the union of vertical lines with spacing $n$
and small horizontal segments emerging at the point $(nj,k), j,k\in
\mathbb{Z}$.  See Figure \ref{fig:hairs}.
\begin{figure}
\centering
\includegraphics[scale=0.5]
{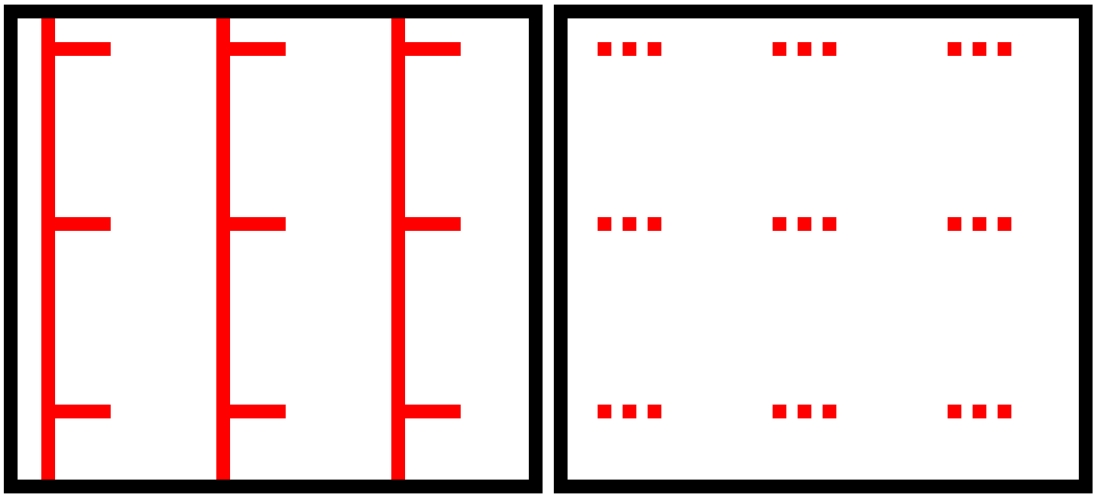}
\caption{Left: the path trajectory $P_n$.  Right: a set of stable sampling contained in the trajectory set. }
\label{fig:hairs}
\end{figure}
This construction ensures that $\ell^+(P_n) \lesssim 1/n$. Clearly
$P_n$ satisfies condition \Con{path3}. It remains to show that $P_n$ contains a sampling set
for $\mathcal{B}_{\Omega}$.

Let $F_n \subseteq [0,1/n]$ be a finite set of cardinality $2n$ and
$\Gamma_n := \{nk+t: k \in \Zst, t \in F_n\}$ its periodization with
period $n$. Then $\Lambda_n=\Gamma_n\times \Zst$ is separated and
contained in $P_n$. Since $D^{-}(\Gamma_n) = 2 >1$,  it follows that $\Gamma_n$ is a sampling
set for $\mathcal{B}_{[-1/2,1/2]}$, and consequently
$\Lambda_n$ is a sampling set for $\mathcal{B}_{\Omega}$.
\end{proof}

\begin{remark}
{\rm A similar example can be constructed in dimension $d$.}
\end{remark}

\begin{remark} {\rm 
The path density of a Nyquist trajectory $P \in \nom $  is always strictly positive,
thus the infimum in 
Proposition \ref{prop:infzero} is never attained. To see this, choose
a uniformly discrete subset $\Lambda \subseteq P$ that is a set of
sampling for $\bom $ (by condition   \Con{recon2}).  Let 
$\delta := \inf\sett{\abs{\lambda-\lambda'}: \lambda,\lambda' \in \Lambda,
\lambda \not= \lambda'} >0$ be the separation of $\Lambda $. 
 Since $\Lambda $
is a set of sampling for $\bom $, Landau's density result asserts that
$D^-(\Lambda ) \geq |\Omega |$~\cite{lan67}. This means that,  for fixed
$\eta , 0< \eta < |\Omega |$, and  sufficiently large $a >0$,  we have 
\begin{align*}
 \# (\Lambda \cap B^d_a(x)) \geq \eta \abs{B^d_a(x)} \quad \forall
 x\in \rd .
\end{align*}
For  $a $ sufficiently large  and $x \in \Rdst$,  let $\alpha: [0,1] \to \Rdst$ 
be the curve granted by condition~\Con{path3} in Definition \ref{trajset}.
Then  $\alpha ([0,1])  \supset P \cap B^d_a(x) \supset \Lambda \cap
B^d_a(x)$. Since the minimum distance between  points in $\Lambda$  is at least
$\delta$, it follows that 
\begin{align*}
\clD^P(a,x) + \littleo(a^d)
= \length(\alpha) \geq \delta N_\Lambda(a,x)
\geq \delta \eta \abs{B^d_a(x)}.
\end{align*}
Hence, }
\begin{align*}
\ell ^{-} (P)=\liminf_{a\longrightarrow \infty}
\inf_{x \in \Rdst}
\frac{\clD^P(a,x)}{\abs{B^d_a(x)}} \geq \delta \eta >0.
\end{align*}
\end{remark}
We conclude from Proposition \ref{prop:infzero} 
that the optimization problem  \eqref{eqn:minpd} which was first posed in \cite{unnvet13TIT} has a trivial solution. In
other words, for every compact set $\Omega$ it is possible to design a stable Nyquist trajectory set
for $\bom$ with arbitrarily small path density. Although at first glance this result may look counter-intuitive, a
closer
look at the sequence of trajectory sets in the counter-example reveals
that the condition number  $\frac{B}{A}$ of the sampling set from
\eqref{eqn:frameboundsa} diverges to $\infty$. Thus although we have a stable trajectory set, the stability margin may
be arbitrarily bad.

\subsection{Trajectory sets with given stability parameters}
One way to address the ill-posedness of this problem is to restrict the optimization to trajectory sets that contain
stable sampling sets with given stability parameters $A$ and $B$.
 In this section we show that this problem is indeed well-posed by identifying a
non-zero lower bound on the path density for every trajectory set in $\nomab$.

In order to obtain a lower bound on the path density we exploit the
key fact that the size of the largest hole of a sampling set is
determined by the condition number $B/A$ \cite{iosped00}.

\begin{proposition}
\label{prop:gaps}
\mbox{}

\begin{itemize}
\item[(a)] Let $\Omega \subseteq \Rdst$ be a compact set with a smooth
  boundary and surface measure $\sigma (\partial \Omega )$.  Let $\Lambda \subset \Rdst$ be a
sampling set
for $\mathcal{B}_\Omega$ with stability bounds $A,B$. Then $\Lambda$ intersects every
cube $x+ [-R,R]^d$,
where
\begin{align*}
R = C \frac{B \sigma (\partial\Omega)}{A \abs{\Omega}},
\end{align*}
and $C$ is a constant that depends only on $d$.
\item[(b)] If $\Omega=[-1/2,1/2]^d$, then $R$ may chosen explicitly as
\begin{align*}
R=\bound.
\end{align*}
\end{itemize}
\end{proposition}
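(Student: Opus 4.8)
The key input is the result of Iosevich--Pedersen (or the variant in \cite{iosped00}) relating the size of the largest gap of a sampling set to the condition number $B/A$ of the frame: if $\Lambda$ is a sampling set for $\mathcal{B}_\Omega$ with bounds $A,B$, and if some cube $x_0 + [-R,R]^d$ contains no point of $\Lambda$, then one can produce a test function $f \in \mathcal{B}_\Omega$ which is concentrated on that cube — it has a large fraction of its $L^2$-mass inside the cube — thereby forcing $\sum_{\lambda}|f(\lambda)|^2$ to be small compared to $\|f\|^2$, and hence $A/B$ small. Turning this around gives an upper bound on $R$ in terms of $B/A$ and quantities measuring how ``spread out'' a function bandlimited to $\Omega$ must be, which is exactly where $\sigma(\partial\Omega)/|\Omega|$ enters. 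Concretely, the test function is obtained by taking a smooth bump $\widehat{\varphi}$ supported in $\Omega$ (rescaled appropriately) and translating its spectrum; the ratio $\sigma(\partial\Omega)/|\Omega|$ controls the cost of smoothing $\mathbf{1}_\Omega$ so that $\varphi$ decays fast enough to be essentially supported in a cube of the stated size.

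For part~(a), I would proceed as follows. First, fix a sampling set $\Lambda$ with bounds $A,B$ and suppose for contradiction that some cube $Q = x_0+[-R,R]^d$ misses $\Lambda$. Build $f$ with $\widehat f$ supported in $\Omega$ and with $\int_Q |f|^2 \geq (1-c)\|f\|_2^2$ for a small absolute constant $c$; this requires a smooth cutoff of $\Omega$ whose ``transition region'' has width $\sim |\Omega|/\sigma(\partial\Omega)$ (the smooth boundary hypothesis is what lets us carry out this mollification with the surface measure as the controlling quantity). Then, since $\Lambda\cap Q = \emptyset$ and $f$ is concentrated on $Q$, estimate $\sum_{\lambda\in\Lambda}|f(\lambda)|^2 \leq \sum_{\lambda\notin Q}|f(\lambda)|^2$, which by a standard Plancherel--Polya / relatively-separated-set argument (using that $\Lambda$ has finite covering constant, itself a consequence of $D^+(\Lambda)<\infty$, which holds because $\Lambda$ is a sampling set) is $\lesssim \int_{\Rdst\setminus Q}|f|^2 + (\text{tail terms}) \leq c'\|f\|_2^2$. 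Comparing with the lower sampling bound $A\|f\|_2^2 \leq \sum_\lambda |f(\lambda)|^2$ gives $A \leq c' B$ with $c'$ a function of $R\sigma(\partial\Omega)/|\Omega|$ that tends to $0$ as this quantity grows; inverting yields $R \leq C\,\frac{B}{A}\,\frac{\sigma(\partial\Omega)}{|\Omega|}$ with $C=C(d)$.

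For part~(b), the spectrum $\Omega=[-1/2,1/2]^d$ is a product, so the construction tensorizes: it suffices to do $d=1$ and take tensor powers. In one dimension one chooses an explicit test function — essentially a raised-cosine / Fej\'er-type kernel band-limited to $[-1/2,1/2]$ — for which the mass-concentration estimate and the summation-over-$\Lambda$ estimate can be carried out with explicit constants rather than asymptotic ones, and optimizing the resulting inequality gives exactly $R = \bound$. The main obstacle, and the part that needs the most care, is the concentration estimate for the test function: one must construct $f\in\mathcal B_\Omega$ that is genuinely well-localized in a cube of side $\sim |\Omega|/\sigma(\partial\Omega)$ times $B/A$, with \emph{explicit} control in part~(b), and simultaneously control the Plancherel--Polya tail sum $\sum_{\lambda\notin Q}|f(\lambda)|^2$ uniformly over all relatively separated $\Lambda$ with the given sampling bounds; getting the constant $\frac{d\pi^{2d-2}}{2^{2d-1}}$ right is a matter of choosing the one-dimensional kernel optimally and bookkeeping the tensor powers.
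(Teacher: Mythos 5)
For part (a) your outline follows the Iosevich--Pedersen gap argument, which is exactly what the paper does: it simply cites \cite{iosped00} and gives no further detail, so your sketch is at an acceptable level there. The substantive issue is part (b), where the paper's proof is structurally different from yours and where your plan has a genuine gap. The paper does \emph{not} build a test function concentrated in the putative gap. Following Nitzan--Olevskii \cite{niol12}, it works with the reproducing kernel $h=\sinc$ of $\mathcal{B}_{[-1/2,1/2]^d}$ and exploits the pointwise frame identity $A\le\sum_{\lambda}|h(x-\lambda)|^2\le B$. The upper bound plus $h\ge(2/\pi)^d$ on $Q_{1/2}(0)$ gives the counting estimate $\#(\Lambda\cap Q_{1/2}(x))\le(\pi^2/4)^dB$; integrating the lower bound over $Q_{1/2}(0)$, interchanging sum and integral, and inserting the tail bound $\int_{\Rdst\setminus Q_r}|h|^2\le 2d/(\pi^2 r)$ yields $A\le(\pi^2/4)^d B\cdot\frac{2d}{\pi^2}(R-1/2)^{-1}$ whenever $Q_R(0)$ misses $\Lambda$, which is precisely $R\le\bound$. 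No Plancherel--P\'olya inequality and no concentrated test function are needed.

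Your part (b) plan has two concrete problems. First, the step ``$\sum_{\lambda\notin Q}|f(\lambda)|^2\lesssim\int_{\Rdst\setminus Q}|f|^2+(\text{tail terms})$'' requires an \emph{explicit} Plancherel--P\'olya constant, which in turn requires an explicit bound on the covering constant $\nu(\Lambda)$ in terms of $B$; you invoke only the qualitative fact $D^+(\Lambda)<\infty$, which cannot produce a numerical constant. (The paper's Claim~2 is exactly the missing quantitative ingredient, and it comes from the upper frame bound applied to the sinc kernel, not from general density considerations.) Second, the claim that optimizing a raised-cosine/Fej\'er-type kernel ``gives exactly $R=\bound$'' is unsubstantiated and almost certainly false as stated: the constant $\frac{d\pi^{2d-2}}{2^{2d-1}}$ is visibly an artifact of the sinc kernel --- it factors as $(\pi^2/4)^d\cdot\frac{2d}{\pi^2}\cdot\frac{1}{2^{?}}$ with $(\pi^2/4)^d$ coming from $\min_{Q_{1/2}}h=(2/\pi)^d$ and $2d/\pi^2$ from the $|t|^{-2}$ tail of $\sinc^2$ --- so a different kernel would produce a different (not obviously comparable) bound. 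Your approach could certainly be made to yield \emph{some} explicit $R=O(B/A)$, but as written it neither closes the quantitative loop nor arrives at the particular constant asserted in the statement.
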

\begin{proof}
Part (a) is a simplified version of the main result in
\cite{iosped00}. The result in~\cite{iosped00} is more general and  also covers the case of spectra
with fractal boundaries.

Part (b) follows from explicit estimates. In Section \ref{app:gap}
we give a full argument based on \cite{niol12}.
\end{proof}

For a measurable set $E \subseteq \Rdst$ we define a quantity $\Delta
_E $ by
\begin{align}
\label{eq_deff_deltae}
\Delta_E := \sup_{q \in \Rdst \setminus \{0\}} \abs{\clP_{q^\perp } E} .
\end{align}
This quantity is the volume in $\Re ^{d-1}$ of the maximal
projection  of $E$ onto a hyperplane. It satisfies the following 
invariance  properties:
\begin{align}
  \Delta_{E+x}& =\Delta_E     & \qquad \forall x \in \Rdst \, ,\label{slide_c1} \\
\Delta_{(1+\delta)E}& =(1+\delta)^{d-1}\Delta_E  & \qquad \forall \delta>0\,
. \label{slide_c2}
\end{align}
The following technical lemma uses $\Delta_E$ to
bound the volume  covered by the translates of a convex set along a smooth curve. The proof
can be found in Section \ref{sec_proof_of_lemma_slide_b}.
\begin{lemma}
\label{lemma_slide_b}
Let $E \subseteq \Rdst$ be a compact and convex set and
let $\alpha:[0,L] \to \Rdst$
be a rectifiable curve. Let $F \subseteq [0,L]$ be a finite set and consider the set
\begin{align}
E_F := \bigcup_{t \in F} E + \alpha(t).
\end{align}
Then $\abs{E_F} \leq \abs{E} + \ell (\alpha)\Delta_E$.
\end{lemma}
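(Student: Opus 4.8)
The plan is to enumerate $F = \{t_1 < t_2 < \dots < t_m\}$ in increasing order, add the translates $E+\alpha(t_i)$ one at a time, and control the volume contributed by each new translate. Put $E_{F,j} := \bigcup_{i=1}^{j}\bigl(E + \alpha(t_i)\bigr)$, so that $\abs{E_{F,1}} = \abs{E}$ and $E_{F,m} = E_F$. By induction on $j$ I would establish
\begin{align*}
\abs{E_{F,j}} \;\le\; \abs{E} + \Bigl(\sum_{i=1}^{j-1}\norm{\alpha(t_{i+1}) - \alpha(t_i)}\Bigr)\Delta_E .
\end{align*}
For $j=m$ this already gives the lemma: the polygonal length $\sum_{i=1}^{m-1}\norm{\alpha(t_{i+1}) - \alpha(t_i)}$ is, after adjoining the endpoints $0$ and $L$ to the partition, a sum of exactly the type appearing in the definition of arc length, hence is $\le \ell(\alpha)$.

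For the inductive step, write $E_{F,j} = E_{F,j-1}\cup(E+\alpha(t_j))$, so that $\abs{E_{F,j}} = \abs{E_{F,j-1}} + \abs{(E+\alpha(t_j))\setminus E_{F,j-1}}$. Since $E+\alpha(t_{j-1})\subseteq E_{F,j-1}$, the newly added region is contained in $(E+\alpha(t_j))\setminus(E+\alpha(t_{j-1}))$, which is a translate of $(E+v)\setminus E$ for $v := \alpha(t_j)-\alpha(t_{j-1})$ and therefore has the same Lebesgue measure. Thus everything reduces to the geometric estimate
\begin{align*}
\abs{(E+v)\setminus E} \;\le\; \norm{v}\,\Delta_E \qquad\text{for every }v\in\Rdst,
\end{align*}
valid for any compact convex $E$; combined with the induction hypothesis this closes the induction.

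This geometric estimate is the crux of the argument, and the only place where convexity is used. I would prove it by slicing along the direction $u := v/\norm{v}$: writing a generic point as $y + su$ with $y\in u^\perp$ and $s\in\Rst$, convexity of $E$ makes each fibre $E_y := \{s: y+su\in E\}$ an interval, and Fubini gives $\abs{E} = \int_{\clP_{u^\perp}E}\abs{E_y}\,dy$. Since $v$ is parallel to $u$, the translate $E+v$ has the same projection $\clP_{u^\perp}E$ and fibre $E_y + \norm{v}$ over $y$; hence the fibre of $(E+v)\setminus E$ over $y$ is the interval $(E_y+\norm{v})\setminus E_y$, of length at most $\norm{v}$. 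Integrating over $y\in\clP_{u^\perp}E$ yields $\abs{(E+v)\setminus E}\le \norm{v}\,\abs{\clP_{u^\perp}E} = \norm{v}\,\abs{\clP_{v^\perp}E}\le\norm{v}\,\Delta_E$ by the definition \eqref{eq_deff_deltae} of $\Delta_E$. (Equivalently, the same fibrewise computation yields the prism identity $\abs{E+[0,v]} = \abs{E} + \norm{v}\,\abs{\clP_{v^\perp}E}$ for the Minkowski sum with the segment $[0,v]$, and one then uses $E\cup(E+v)\subseteq E+[0,v]$.) Measurability of the fibre endpoints and applicability of Fubini are immediate since $E$ is compact and convex, so no further care is needed there.
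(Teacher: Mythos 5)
Your proof is correct and follows essentially the same route as the paper: you order the points of $F$, telescope the union so that each new translate contributes at most $\abs{(E+v)\setminus E}$ with $v$ the successive displacement, and prove the key estimate $\abs{(E+v)\setminus E}\le \norm{v}\,\abs{\clP_{v^\perp}E}\le \norm{v}\,\Delta_E$ by exactly the Fubini/fibre argument the paper isolates as Lemma~\ref{lemma_slide}. The only cosmetic difference is that you phrase the telescoping as an induction while the paper writes $E_F=E_0\cup\bigcup_k(E_k\setminus E_{k-1})$ and uses subadditivity directly.
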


We now prove the main proposition that relates gaps and the path density.

\begin{proposition}
\label{prop:optpdforgap}
Let $E$ be a convex compact set $E \subseteq \Rdst$ with $0 \in
E^\circ$ and let $P  = \{p_i: i \in \bI\}$ be a trajectory set
satisfying condition~\Con{path3}.
If  the translates of $E$ along the trajectories in $P$ cover $\Re^d$, i.e.,
\begin{equation}
\label{eqn:pcovers}
P+E = \bigcup_{\substack{t \in \Re\\  i \in \bI}} E+p_i(t) = \Re^d \, ,
\end{equation}
then
\[
\ell^{-}(P) \geq \frac{1}{\Delta_E} .
\]
\end{proposition}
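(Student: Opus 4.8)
The plan is to estimate the covered volume $|P+E|$ inside a large ball from above using Lemma~\ref{lemma_slide_b} applied to the single curve granted by condition~\Con{path3}, and to compare this with the volume of the ball itself since the translates cover $\Re^d$. First I would fix $a \gg 1$ and $x \in \Rdst$, and apply condition~\Con{path3} to obtain a rectifiable curve $\alpha:[0,1]\to\Rdst$ with $\length(\alpha) = \clD^P(a,x) + \delta(a)$, where $\delta(a) = \littleo(a^d)$, such that $P \cap B_a^d(x) \subset \alpha([0,1])$. The key geometric observation is this: if $y \in B_{a-\rho}^d(x)$ where $\rho$ is chosen so that $E \subseteq B_\rho^d$, then $y \in P+E$ means $y = p_i(t) + e$ for some point $p_i(t)$ on a trajectory and some $e \in E$; since $\|y - p_i(t)\| = \|e\| \le \rho$, the point $p_i(t)$ lies in $B_a^d(x)$, hence on the curve $\alpha$. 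Therefore $B_{a-\rho}^d(x) \subseteq \alpha([0,1]) + E$.

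The subtle point is that Lemma~\ref{lemma_slide_b} is stated for a \emph{finite} set $F \subseteq [0,L]$, giving $|E_F| \le |E| + \ell(\alpha)\Delta_E$ for $E_F = \bigcup_{t\in F}(E+\alpha(t))$, whereas here I need the bound for the full image $\alpha([0,1])+E = \bigcup_{t \in [0,1]}(E+\alpha(t))$. To bridge this, I would take an increasing sequence of finite sets $F_n \subseteq [0,1]$ whose union is dense in $[0,1]$; since $\alpha$ is continuous and $E$ is compact, $\bigcup_n (E + \alpha(F_n))$ is dense in $\alpha([0,1])+E$, and the latter is closed (being the continuous image of the compact set $[0,1]\times E$ under addition), so $\alpha([0,1])+E = \overline{\bigcup_n (E+\alpha(F_n))}$. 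By monotone convergence (or outer regularity of Lebesgue measure applied to the closed set, noting it has finite measure), $|\alpha([0,1])+E| = \lim_n |E + \alpha(F_n)| \le |E| + \ell(\alpha)\Delta_E$. Combining with the inclusion above,
\begin{align*}
|B_{a-\rho}^d(x)| \le |\alpha([0,1])+E| \le |E| + \length(\alpha)\,\Delta_E = |E| + \big(\clD^P(a,x) + \delta(a)\big)\Delta_E.
\end{align*}

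Rearranging gives $\clD^P(a,x) \ge \frac{1}{\Delta_E}\big(|B_{a-\rho}^d(x)| - |E|\big) - \delta(a)$, and dividing by $|B_a^d|$,
\begin{align*}
\frac{\clD^P(a,x)}{|B_a^d|} \ge \frac{1}{\Delta_E}\cdot\frac{|B_{a-\rho}^d|}{|B_a^d|} - \frac{|E|}{\Delta_E |B_a^d|} - \frac{\delta(a)}{|B_a^d|}.
\end{align*}
Now taking the infimum over $x$ and then $\liminf$ as $a \to \infty$: the ratio $|B_{a-\rho}^d|/|B_a^d| = (1-\rho/a)^d \to 1$, the term $|E|/(\Delta_E |B_a^d|) \to 0$, and $\delta(a)/|B_a^d| \to 0$ since $\delta(a) = \littleo(a^d)$ and $|B_a^d|$ is a constant times $a^d$. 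These estimates are uniform in $x$, so $\ell^{-}(P) = \liminf_{a\to\infty}\inf_x \frac{\clD^P(a,x)}{|B_a^d|} \ge \frac{1}{\Delta_E}$, which is the claim. The main obstacle is the passage from the finite-set version of Lemma~\ref{lemma_slide_b} to the continuous image, which requires the topological/measure-theoretic argument sketched above; everything else is a routine volume comparison with the boundary-layer correction $\rho$ controlled by the size of $E$.
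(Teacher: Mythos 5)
Your overall strategy is the same as the paper's: bound $\abs{S}$ for $S=E+\alpha([0,1])$ from below by $\abs{B^d_{a-\rho}(x)}$ using the covering hypothesis together with \Con{path3}, from above by $\abs{E}+\length(\alpha)\,\Delta_E$ using Lemma~\ref{lemma_slide_b}, and then divide by $\abs{B^d_a}$ and let $a\to\infty$. The geometric inclusion $B^d_{a-\rho}(x)\subseteq \alpha([0,1])+E$, the uniformity in $x$ of the error terms, and the final limit computation are all correct and match the paper.

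There is, however, a genuine gap in your passage from the finite-set statement of Lemma~\ref{lemma_slide_b} to the full union over $t\in[0,1]$. Monotone convergence yields $\lim_n\abs{E+\alpha(F_n)}=\abs{\bigcup_n\left(E+\alpha(F_n)\right)}$, the measure of the \emph{union}; it says nothing about the measure of the \emph{closure} of that union. A set can be dense in, and have strictly smaller measure than, its closure (an increasing union of finite unions of intervals can be dense in $[0,1]$ with total measure $\tfrac12$, e.g.\ the complement of a fat Cantor set), so ``$\bigcup_n(E+\alpha(F_n))$ is dense in the closed set $S$'' does not give $\abs{S}=\lim_n\abs{E+\alpha(F_n)}$; outer regularity is likewise of no help, since it approximates a set from outside by open supersets rather than from inside by the sets you control. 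Indeed the set inclusion $S\subseteq E+\alpha(D)$ (with $D=\bigcup_n F_n$) can genuinely fail at boundary points: take $d=1$, $E=[0,1]$, $\alpha(t)=t$ and $0\notin D$. The repair is exactly what Lemma~\ref{lemma_conv_1} is in the paper for: since $0\in E^\circ$, one has $E\subseteq(1+\delta)E^\circ$, so the open sets $(1+\delta)E^\circ+\alpha(t)$, $t\in[0,1]$, cover the compact set $S$; a finite subcover indexed by some finite $F\subseteq[0,1]$ exists, and Lemma~\ref{lemma_slide_b} applied to $(1+\delta)E$ gives
\begin{align*}
\abs{S}\leq (1+\delta)^d\abs{E}+(1+\delta)^{d-1}\length(\alpha)\,\Delta_E ,
\end{align*}
after which letting $\delta\to 0$ recovers the inequality $\abs{S}\le\abs{E}+\length(\alpha)\,\Delta_E$ that you wanted. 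With that one step replaced, the rest of your argument goes through verbatim.
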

\begin{proof}
Since $E$ is compact, there exists $R>0$ such that $E \subseteq B_{R}^d$.
Let  $a \geq R$ and $x \in \Rdst$ be arbitrary.
Since $P$ satisfies condition~\Con{path3},
there exists a continuous rectifiable curve $\alpha:[0,1] \to \Rdst$ such that (i)  $\alpha$
contains the entire portion of $P$ inside $B_a^d(x)$ and (ii)
$\length(\alpha)= \clD^P(a,x) + \littleo(a^d)$.

Let us consider the set
\begin{align*}
S = E + \alpha([0,1]).
\end{align*}
We  estimate $|S|$ in two different ways.  Firstly,
if $p_i(t) \notin B_a^d(x)$, then $B_{R}^d(p_i(t))  \cap B_{a-R}^d(x)
= \emptyset$. In view of \eqref{eqn:pcovers}  we have
$$
B_{a-R}(x) \subset \bigcup _{p_i(t) \in B_a^d(x)} E+p_i(t)  \subseteq
S \, .$$
Consequently,
\begin{align}
\label{eq_bound_alpha_a}
\abs{B_{a-R}^d(x)}=  (a-R)^d |B_1^d|  =  a^d |B_1^d|  - O(a^{d-1}) = |B_a^d(x)| - O(a^{d-1}) \leq \abs{S}.
\end{align}
Secondly, since  $S$ is a sum of two compact sets, $S$  is  compact.
Let $\delta \in (0,1)$ and consider the (open) set $(1+\delta)E^\circ$. By Lemma \ref{lemma_conv_1},
$E \subseteq (1+\delta)E^\circ$. Consequently,
\begin{align*}
\sett{(1+\delta)E^\circ + \alpha(t): t \in [0,1]}
\end{align*}
is an open cover of $S$, and there exists
a finite set $F \subseteq [0,1]$ such that
\begin{align*}
S \subseteq \bigcup_{t \in F} (1+\delta)E +\alpha(t).
\end{align*}
Using Lemma \ref{lemma_slide_b} and \eqref{slide_c1},  \eqref{slide_c2} it follows that
\begin{align*}
\abs{S} &\leq \abs{\bigcup_{t \in F} (1+\delta)E+\alpha(t)}
\leq \abs{(1+\delta)E} + \length(\alpha) \Delta_{(1+\delta)E}
\\
&= (1+\delta)^d\abs{E} + \length(\alpha) (1+\delta)^{d-1}\Delta_{E}.
\end{align*}
Combining this estimate with \eqref{eq_bound_alpha_a} we deduce that
\begin{align*}
|B_a^d(x)| \leq O(a^{d-1}) + (1+\delta)^d\abs{E} + \length(\alpha) (1+\delta)^{d-1}\Delta_{E},
\qquad \delta \in (0,1).
\end{align*}
Since this inequality holds for all $\delta >0$ and $o(a^d)$ is
independent of $\delta $ by assumption~\Con{path3}, we obtain
\begin{align*}
|B_a^d(x)| \leq O(a^{d-1}) + \abs{E} + \length(\alpha) \Delta_{E},
\qquad
a \geq R, x \in \Rdst.
\end{align*}
Recalling that $\length(\alpha)= \clD^P(a,x) + \littleo(a^d)$ we obtain
\begin{align*}
|B_a^d(x)| \leq O(a^{d-1}) + \abs{E} + \clD^P(a,x) \Delta_{E}
+ \littleo(a^d)\Delta_{E},
\qquad
a \geq R, x \in \Rdst.
\end{align*}
Therefore,
\begin{align*}
\ell^{-}(P) = \liminf_{a \to \infty} \frac{\inf_{x\in\Re^d} \clD^P(a,x)}{\abs{B_a^d(x)}}
\geq \frac{1}{\Delta_E},
\end{align*}
as claimed.
\end{proof}
\begin{remark}
{\rm In \cite{beu89}, Beurling gave sufficient conditions for a non-uniform collection of points $\Lambda \subseteq
\Rdst$
to form a stable sampling set for the class of bandlimited functions in high dimensions. These are expressed in terms of
a covering condition: $\Lambda + E=\rd $ for a certain convex set $E$ associated with the spectrum support of the
signals. On the other hand, Proposition \ref{prop:optpdforgap} gives a condition on a trajectory that is necessary for
it to contain a sampling set $\Lambda$ satisfying $\Lambda + E=\rd$.}
\end{remark}

We finally prove the main estimate on the density of paths that contain sampling sets with given stability parameters.
\begin{thm}
\label{th:infab}
Let $\Omega \subseteq \Rdst$ be a compact  set with smooth boundary. Then
\begin{equation*}
\inf_{P \in \nomab} \ell^{-}(P) \geq C_d \left( \frac{A |\Omega|}{B
    \sigma (\partial \Omega )} \right)^{d-1},
\end{equation*}
where $C_d$ is a constant that depends only on $d$.

If $\Omega=[-1/2,1/2]^2$, then explicitly
\begin{equation*}
\inf_{P \in \nomab} \ell^{-}(P) \geq \frac{A}{\pi^2\sqrt{2}B}.
\end{equation*}
\end{thm}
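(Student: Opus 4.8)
The plan is to derive the bound from two results already established: the hole estimate for sampling sets with prescribed stability parameters (Proposition~\ref{prop:gaps}) and the path-density lower bound for trajectory sets whose convex translates cover $\Rdst$ (Proposition~\ref{prop:optpdforgap}). The convex set playing the role of $E$ in the latter will be a cube whose half-side $R$ equals the size of the largest hole permitted by the stability bounds $A,B$.

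First I would fix an arbitrary $P\in\nomab$ and apply condition~\ConAB{reconAB} to obtain a uniformly discrete set $\Lambda\subset P$ that is a set of stable sampling for $\bom$ with bounds $A,B$. By Proposition~\ref{prop:gaps}(a), $\Lambda$ meets every cube $x+[-R,R]^d$ with $R=C\,B\,\sigma(\partial\Omega)/(A\abs{\Omega})$. Since such a cube is symmetric about its centre, $\lambda\in x+[-R,R]^d$ is equivalent to $x\in\lambda+[-R,R]^d$, so $\Lambda+[-R,R]^d=\Rdst$. As the points of $\Lambda$ lie on the trajectories of $P$, this covering is inherited by $P$, i.e.\ $P+[-R,R]^d=\Rdst$ in the sense of \eqref{eqn:pcovers}. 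Since $P$ also satisfies condition~\Con{path3} (being an element of $\nomab$), Proposition~\ref{prop:optpdforgap}, applied with the convex compact body $E=[-R,R]^d$ (which contains $0$ in its interior), yields
\[
\ell^{-}(P)\ \geq\ \frac{1}{\Delta_{[-R,R]^d}}.
\]

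The remaining point is to evaluate $\Delta_{[-R,R]^d}$. Writing $[0,1]^d=\sum_{i=1}^{d}[0,e_i]$ as a Minkowski sum, its orthogonal projection onto a hyperplane $u^\perp$ with $\abs{u}=1$ is the zonotope $\sum_{i=1}^{d}[0,\clP_{u^\perp}e_i]$, whose $(d-1)$-volume equals $\sum_{i=1}^{d}\abs{u_i}=\norm{u}_1$; this is maximized at $u=d^{-1/2}(1,\dots,1)$ with value $\sqrt d$. Hence $\Delta_{[0,1]^d}=\sqrt d$, and by the translation invariance and $(d-1)$-homogeneity of $\Delta$ (cf.\ \eqref{slide_c1}--\eqref{slide_c2}) we get $\Delta_{[-R,R]^d}=(2R)^{d-1}\sqrt d$. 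Substituting the value of $R$ gives
\[
\ell^{-}(P)\ \geq\ \frac{1}{(2C)^{d-1}\sqrt d}\left(\frac{A\abs{\Omega}}{B\,\sigma(\partial\Omega)}\right)^{d-1},
\]
which is the claimed inequality with $C_d=\big((2C)^{d-1}\sqrt d\big)^{-1}$; taking the infimum over $P\in\nomab$ finishes the general case.

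For the explicit bound with $\Omega=[-1/2,1/2]^2$ I would run the same argument but feed in the sharp hole size $R=\bound$ from Proposition~\ref{prop:gaps}(b); at $d=2$ this is $R=\tfrac12+\tfrac{\pi^2}{4}\tfrac BA$, and $\Delta_{[-R,R]^2}=2\sqrt2\,R$, so $\ell^{-}(P)\geq(2\sqrt2\,R)^{-1}=A/\big(\sqrt2\,(A+\tfrac{\pi^2}{2}B)\big)$. Since any sampling inequality forces $A\leq B$, we have $A+\tfrac{\pi^2}{2}B\leq(1+\tfrac{\pi^2}{2})B\leq\pi^2B$, so the bound weakens to $\ell^{-}(P)\geq A/(\pi^2\sqrt2\,B)$, as asserted. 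The only real obstacle is the geometric identity $\Delta_{[0,1]^d}=\sqrt d$, which is where the dimensional constant originates; should one prefer to sidestep the zonotope computation, the elementary remark that a cube's shadow is covered by the projections of its $d$ ``upper'' facets gives $\Delta_{[-R,R]^d}\leq d\,(2R)^{d-1}$, which still proves the theorem with a worse $C_d$ and, for $d=2$, still recovers the stated bound $A/(\pi^2\sqrt2\,B)$.
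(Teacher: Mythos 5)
Your proposal is correct and follows essentially the same route as the paper: Proposition~\ref{prop:gaps} gives the hole size $R$, hence $P+Q_R=\Rdst$, and Proposition~\ref{prop:optpdforgap} then yields $\ell^{-}(P)\geq 1/\Delta_{Q_R}$, with the scaling \eqref{slide_c2} producing the $(d-1)$-power of $A\abs{\Omega}/(B\sigma(\partial\Omega))$. The only additions are cosmetic: you supply the zonotope computation $\Delta_{[0,1]^d}=\sqrt d$ (the paper leaves the constant $\Delta_{[-1/2,1/2]^d}$ unevaluated in general and just quotes $\sqrt2$ for $d=2$), and in the explicit case you absorb the additive $\tfrac12$ in $R$ via $A\leq B$ rather than via the paper's bound $R\leq\tfrac{\pi^2}{2}\tfrac BA$; both give the stated constant.
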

\begin{proof}
Since $P$ contains a sampling set with stability parameters $A,B>0$,
Proposition~\ref{prop:gaps} implies that $\Lambda \cap Q_R(x) \neq \emptyset $
for all $x\in \Re ^d$ and with $R = C \frac{B \sigma (\partial\Omega )}{A
  \abs{\Omega}}$. Then $\Lambda + Q_R = \Re ^d$ and thus also $P +
Q_R = \Re ^d$. By Proposition \ref{prop:optpdforgap} we obtain that
$\ell ^-(P) \geq 1/\Delta _{Q_R}$. Since
by \eqref{slide_c2}
\begin{align}
\label{eq_delta_R}
\Delta_{Q_R} = (2R)^{d-1} \Delta_{[-1/2,1/2]^d} = 2^{d-1}
\Delta_{[-1/2,1/2]^d} C^{d-1} \Big(\frac{B \sigma (\partial\Omega)}{A
  \abs{\Omega}}\Big)^{d-1} \, ,
\end{align}
the conclusion follows.
For the case $\Omega=[-1/2,1/2]^2$ we use the exact value $\Delta_{[-1/2,1/2]^2}=\sqrt{2}$
and the explicit estimate for $R$ from Proposition \ref{prop:gaps}:
\begin{align*}
R = \frac{1}{2} + \frac{\pi^2}{4}\frac{B}{A} \leq \frac{\pi^2}{2}\frac{B}{A}.
\end{align*}
\end{proof}

\section{Conclusion}\label{sec:conc}
We have studied the problem of designing trajectories for sampling bandlimited spatial fields using mobile
sensors. We have identified trajectory sets composed of parallel lines that (i) possess minimal path density and
(ii) admit the stable reconstruction of bandlimited fields from measurements taken on these trajectories. We also have
shown that the problem of minimizing the  path density is ill-posed if we allow arbitrary trajectory sets that admit
stable
reconstruction. As a positive result  we have shown  that the problem is well-posed if we restrict  the trajectory sets
to contain a stable sampling set with given stability parameters.

We point out that, for the results presented here, the assumption that the spectrum of the signals is convex is not
essential, but a matter of convenience. Indeed, in most results the convexity of $\Omega$ can be replaced by a suitable
assumption on the regularity of its boundary (eg. Lemma \ref{lemma_sections}). In Theorem \ref{th:parallel} the
convexity of $\Omega$ is used to guarantee that the maximal area of the cross-sections by hyperplanes is attained by a
hyperplane that goes through the origin. For non-convex spectra, a characterization analogous to the one in
Theorem \ref{th:parallel} should consider cross-sections by arbitrary hyperplanes.

This work opens up several possible research directions. One question is whether we can solve the problem
\eqref{eqn:minpdAB} exactly. This would require a tight lower bound on the path density of every trajectory set in
$\nomab$. Another interesting variation concerns trajectory sets
consisting of arbitrary, not necessarily parallel lines and the
necessary path density.

\section*{Acknowledgment}
K.\  Gr\"ochenig was partially supported by National Research Network S106 SISE and by the project P 26273-N25 of the
Austrian Science Fund (FWF). J.\ L.\  Romero gratefully acknowledges support from the project M1586-N25 of the Austrian
Science Fund (FWF) and from an individual Marie Curie fellowship, within the 7th. European Community Framework program,
under grant PIIF-GA-2012-327063. J.\ Unnikrishnan and M.\ Vetterli were  supported by ERC Advanced Investigators Grant:
Sparse Sampling: Theory, Algorithms and Applications – SPARSAM – no. 247006.

\section{Some technical tools and proofs}
\label{sec_tec}
\subsection{Translations and projections of convex sets.}
\begin{lemma}
\label{lemma_slide}
Let $E \subseteq \Rdst$ be a compact convex set and $q\in \Re ^d
\setminus \{0\}$. Then
\begin{align*}
\abs{(E+q) \setminus E} \leq \abs{\clP_{q^\perp } E} \norm{q}_2.
\end{align*}
\end{lemma}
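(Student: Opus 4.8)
The plan is to prove Lemma~\ref{lemma_slide} by writing the volume of $(E+q)\setminus E$ as a layer-cake integral over slices taken in the direction of $q$, and then bounding each slice by its projection onto $q^\perp$ multiplied by the length of the sliver swept out in the $q$-direction. After a rotation we may assume $q = \norm{q}_2\, e_d$, so that $q^\perp = e_d^\perp \cong \mathbb{R}^{d-1}$ and translation by $q$ is translation by $\norm{q}_2$ in the last coordinate. For $x'\in\mathbb{R}^{d-1}$ write $E_{x'} = \{t\in\mathbb{R} : (x',t)\in E\}$ for the vertical fiber of $E$ over $x'$; by convexity of $E$ each nonempty fiber $E_{x'}$ is a closed interval $[a(x'),b(x')]$.

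First I would observe that $(E+q)\setminus E$ has vertical fiber over $x'$ equal to $\big(E_{x'}+\norm{q}_2\big)\setminus E_{x'}$. For a single interval $[a,b]$ one has $\abs{\big([a,b]+s\big)\setminus[a,b]} = \min(s,\, b-a)\le s$ for any $s\ge 0$ (the shifted interval is $[a+s,b+s]$, and its intersection with $[a,b]$ is $[a+s,b]$ when $s\le b-a$, empty otherwise, so the part that falls outside has length $s$ in the first case and $b-a\le s$ in the second). Hence for every $x'$ in the projection $\clP_{q^\perp}E$ we get $\abs{\big(E_{x'}+\norm{q}_2\big)\setminus E_{x'}} \le \norm{q}_2$, and this fiber is empty for $x'\notin\clP_{q^\perp}E$ since then $E_{x'}=\emptyset$.

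Then I would apply Fubini's theorem: since $E$ (and $E+q$) are measurable,
\begin{align*}
\abs{(E+q)\setminus E} = \int_{\mathbb{R}^{d-1}} \abs{\big(E_{x'}+\norm{q}_2\big)\setminus E_{x'}}\, dx' \le \int_{\clP_{q^\perp}E} \norm{q}_2 \, dx' = \abs{\clP_{q^\perp}E}\,\norm{q}_2,
\end{align*}
which is exactly the claimed bound. Note that the rotation used at the start is an isometry of $\mathbb{R}^d$ that carries $q$ to $\norm{q}_2 e_d$, preserves Lebesgue measure on $\mathbb{R}^d$, carries $q^\perp$ to $e_d^\perp$, and carries the orthogonal projection $\clP_{q^\perp}$ to $\clP_{e_d^\perp}$ while preserving $(d-1)$-dimensional measure, so passing to this normalized position costs nothing.

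The only point requiring any care — and it is a mild one rather than a genuine obstacle — is the measurability bookkeeping needed to invoke Fubini and to make sense of the fiber-wise one-dimensional identity simultaneously for almost every $x'$; this is immediate here because $E$ is compact and convex, so $E$ and $E+q$ are compact, the fibers $E_{x'}$ are intervals depending measurably on $x'$, and the functions $x'\mapsto \abs{E_{x'}}$ and $x'\mapsto\abs{(E_{x'}+\norm{q}_2)\setminus E_{x'}}$ are Borel measurable. Convexity is not strictly essential for the final inequality, but it makes the fiber computation transparent (each fiber is a single interval rather than a general measurable set), which is all we need.
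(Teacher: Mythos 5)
Your proof is correct and follows essentially the same route as the paper's: rotate so that $q$ is a multiple of $e_d$, use convexity to see that each vertical fiber of $E$ is an interval, bound the fiber of $(E+q)\setminus E$ by $\norm{q}_2$, and integrate over $\clP_{q^\perp}E$ via Fubini. No issues.
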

\begin{proof}
  By applying a suitable rotation, we may assume without loss of
  generality  that $q=\alpha e_d=(0,\ldots,0,\alpha)$ for some
  $\alpha > 0$. Then the projection of $E$ onto the hyperplane
  determined by $q$  is simply
$$
P_{q^\perp} E = \{ (x',0) \in \Re ^{d-1} \times \Re : (x', t) \in E\}
\, .
$$
For $x' \in P_{q^\perp} E$ we set
$\tau _- (x') = \min \{ t: (x', t) \in E \}$ and $\tau _+ (x') = \max
\{ t: (x', t) \in E \}$. Since $E$ is compact, the minima and maxima
exist; and since $E$ is convex, the line segments $\{ (x', t): \tau _-
(x') \leq t \leq \tau _+ (x')\}$ are contained in $E$, so that
$$
E = \{ (x',t) \in \Re ^d: x' \in P_{q^\perp} E,  \tau _- (x') \leq t
\leq \tau _+ (x') \} \, .
$$
Consequently
\begin{align*}
&(E+q) \setminus E = (E+\alpha e_d)  \setminus E
\\
& \qquad = \{ (x',t) \in \Re ^d: x' \in
P_{q^\perp} E,  t \in [\tau _- (x') +\alpha ,  \tau _+ (x')+\alpha ]
\setminus [\tau _- (x')  ,  \tau _+ (x')]\} \, ,
\end{align*}
and each fibre over $x'$  has length $\leq \alpha = \|q\|_2$. Now
using Fubini's theorem, we obtain that
\begin{align*}
  |(E+q) \setminus E| &= \int _{\Re ^d } 1_{(E+q) \setminus E} (x', t) \,
  dx' dt \\
&= \int _{P_{q^\perp} E} \int _{\Re} 1_ {[\tau _- (x') +\alpha ,  \tau _+ (x')+\alpha ]
\setminus [\tau _- (x')  ,  \tau _+ (x')]}(t) dt \, dx' \\
&\leq \alpha \int _{P_{q^\perp} E} 1 \, dx' = \alpha |P_{q^\perp} E| =
|P_{q^\perp} E|  \|q\|_2 \, ,
\end{align*}
as claimed.
\end{proof}

\subsection{Spectral gaps for the square. Proof of Proposition \ref{prop:gaps}(b).}
\label{app:gap}
The following proposition - that is part (b) of Proposition \ref{prop:gaps}, restated for convenience -
gives an explicit estimate for the gap of sampling sets for the spectrum $[-1/2,1/2]^d$. Its
proof is inspired by the simple proof of Laudau's necessary conditions for sampling and interpolation given in
\cite{niol12}.

\begin{proposition*}
Let $\Omega:=[-1/2,1/2]^d$ and assume that $\Lambda$ is a sampling set
for $\mathcal{B}_\Omega$ with bounds $A,B$. Then $\Lambda$ intersects
every cube $Q_R(x)= [-R,R]^d+x$,
where
\begin{align}
\label{eq_R}
R = \frac{1}{2}+ \frac{2d}{\pi^2} \frac{B}{A} \Big(\frac{\pi^2}{4}\Big)^d.
\end{align}
\end{proposition*}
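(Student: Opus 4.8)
The plan is to argue by contradiction. Suppose $\Lambda$ avoids some cube $Q_R(x_0)=x_0+[-R,R]^d$; after a translation assume $x_0=0$, so that $\Lambda\cap[-R,R]^d=\emptyset$, and then derive an upper bound for $R$ that is violated by the value \eqref{eq_R}. The workhorse will be the reproducing kernel of $\mathcal{B}_\Omega=\mathcal{B}_{[-1/2,1/2]^d}$,
\[
k_z(r)=\prod_{j=1}^d\sinc(r_j-z_j),\qquad f(z)=\ip{f}{k_z},\quad \norm{k_z}_2^2=k_z(z)=1,
\]
together with the elementary bounds $\sinc(t)\ge\sinc(\tfrac12)=\tfrac2\pi$ for $\abs t\le\tfrac12$ (monotonicity of $\sinc$ on $[0,\tfrac12]$) and $\sinc(t)^2\le\min\{1,\pi^{-2}t^{-2}\}$ for all $t$.

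First I would turn the \emph{upper} sampling bound into an a priori bound on the local density of $\Lambda$: applying $\sum_\lambda\abs{f(\lambda)}^2\le B\norm f_2^2$ to $f=k_z$ and using $\abs{k_z(\lambda)}^2\ge(2/\pi)^{2d}$ for every $\lambda$ in the unit cube $z+[-\tfrac12,\tfrac12]^d$, one gets
\[
\#\bigl(\Lambda\cap(z+[-\tfrac12,\tfrac12]^d)\bigr)\le B\,(\pi^2/4)^d\qquad\text{for all }z .
\]
This is where the factor $(\pi^2/4)^d$ in \eqref{eq_R} originates, and it uses only that $\{k_\lambda\}$ is a Bessel sequence -- no separation of $\Lambda$ is assumed. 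Then I would feed the test function $k_0$, concentrated at the centre of the forbidden cube, into the \emph{lower} bound: $A=A\norm{k_0}_2^2\le\sum_{\lambda\in\Lambda}\abs{k_0(\lambda)}^2$, and since $\Lambda\cap[-R,R]^d=\emptyset$ the right-hand side runs only over $\lambda$ with $\norm\lambda_\infty\ge R$.

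To estimate this tail I would partition $\Re^d$ into the unit cells $m+[-\tfrac12,\tfrac12]^d$, $m\in\Zst^d$, bound the number of points of $\Lambda$ in each cell by $B(\pi^2/4)^d$, use the product structure $\sup_{\text{cell }m}\abs{k_0}^2=\prod_j\sup_{\abs{t-m_j}\le1/2}\sinc(t)^2$, and retain only the cells meeting $\{\norm r_\infty\ge R\}$, i.e.\ those with $\norm m_\infty\ge R-\tfrac12$. A union bound over which coordinate is ``far'' reduces the estimate to the one-dimensional tail $\sum_{\abs m\ge R-1/2}\sup_{\abs{t-m}\le1/2}\sinc(t)^2$, which by $\sinc(t)^2\le\pi^{-2}t^{-2}$ and comparison with $\int_R^\infty t^{-2}\,dt=R^{-1}$ is at most a constant times $\tfrac{2}{\pi^2 R}$; this supplies the $\tfrac2{\pi^2}$ and the decay in $R$. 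Putting the three steps together yields $A\le c\, B(\pi^2/4)^d\, d\, R^{-1}$, hence an upper bound for $R$ of the form $\tfrac12+\tfrac{2d}{\pi^2}\tfrac BA(\pi^2/4)^d$; choosing $R$ as in \eqref{eq_R} then contradicts the sampling inequality, so $\Lambda$ meets $Q_R(x_0)$. The additive $\tfrac12$ comes from passing from ``a cell meets the complement of $[-R,R]^d$'' to ``$\norm m_\infty\ge R-\tfrac12$'', and the exact multiplicative constant from evaluating the relevant series (e.g.\ $\sum_{m\ge1}(2m-1)^{-2}=\pi^2/8$ and telescoping its tail).

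The main obstacle is precisely this last tail estimate in the absence of any regularity of $\Lambda$: a priori $\Lambda$ is only a set of stable sampling with constants $A,B$, so $\sum_{\norm\lambda_\infty\ge R}\abs{k_0(\lambda)}^2$ must be controlled with no separation constant in hand -- which is exactly what the covering bound of the first step provides. The remaining work is purely quantitative: arranging the constants so that the threshold is the explicit value \eqref{eq_R} rather than merely $O\bigl(\tfrac BA(\pi^2/4)^d d\bigr)$.
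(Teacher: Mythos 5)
Your strategy is essentially the paper's: the reproducing kernel $h=\sinc$, the local counting bound $\#(\Lambda\cap Q_{1/2}(z))\le B(\pi^2/4)^d$ obtained from the upper frame bound (this is exactly the paper's Claim 2, and you are right that no separation hypothesis is needed), and a tail estimate for $\sinc^2$ outside a large cube. The gap is in the last step, and it is quantitative but real: the proposition asserts the \emph{explicit} value of $R$ in \eqref{eq_R}, and the tail estimate you describe does not produce it. Testing the lower frame inequality against the single function $k_0$ forces you to bound $\sum_{\|\lambda\|_\infty\ge R}|h(\lambda)|^2$ by $B(\pi^2/4)^d\sum_{\|m\|_\infty\ge R-1/2}\sup_{Q_{1/2}(m)}|h|^2$, i.e.\ by an upper Riemann sum rather than an integral. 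This loses twice: the one-dimensional tail $\sum_{m\ge M}(m-\tfrac12)^{-2}\le (M-1)^{-1}$ only gives a denominator $R-\tfrac32$ rather than $R-\tfrac12$, and, more seriously, in each of the $d-1$ ``near'' coordinates the full-line sum $\sum_{m\in\Zst}\sup_{|t-m|\le 1/2}\sinc(t)^2\le 1+\tfrac{2}{\pi^2}\sum_{m\ge1}(m-\tfrac12)^{-2}=2$ replaces $\int_\Rst\sinc^2=1$, so your union bound carries an extra factor $2^{d-1}$ in the main term. Your final sentence, asserting that the constants can be ``arranged'' to give exactly \eqref{eq_R}, is therefore unsubstantiated as the argument stands.

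The missing idea is an averaging step. Instead of evaluating the lower bound at the single point $0$, integrate the pointwise inequality $A\le\sum_{\lambda}|h(x-\lambda)|^2$ over $x\in Q_{1/2}(0)$. Interchanging sum and integral and substituting gives
\begin{align*}
A \;\le\; \int_{\Rdst}|h(y)|^2\,\#\bigl(\Lambda\cap Q_{1/2}(-y)\bigr)\,dy
\;\le\; B\Big(\tfrac{\pi^2}{4}\Big)^d\int_{\bigcup_{\lambda}Q_{1/2}(\lambda)}|h(y)|^2\,dy ,
\end{align*}
and if $\Lambda\cap Q_R(0)=\emptyset$ the last integral runs over the complement of $Q_{R-1/2}(0)$, where the exact computation $\int_{\Rdst\setminus Q_r(0)}|h|^2\le \tfrac{2d}{\pi^2 r}$ (a genuine integral, with each near coordinate contributing $\int\sinc^2=1$, not $2$) yields precisely $R\le\tfrac12+\tfrac{2d}{\pi^2}\tfrac{B}{A}(\tfrac{\pi^2}{4})^d$. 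With this one modification your outline becomes the paper's proof; without it you only prove the proposition with a strictly larger $R$.
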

\begin{proof}
Since every translation of $\Lambda$ is also a sampling set for $\mathcal{B}_\Omega$ with bounds $A,B$,
it suffices to show that $\Lambda$ intersects $[-R,R]^d$, where $R$ is given by \eqref{eq_R}.
Let $h(x):=\sinc(x)=\Pi_{k=1}^d\frac{\sin(\pi x_k)}{\pi x_k}$, so
$\widehat{h}=1_\Omega$.
We start by noting some facts.

{\bf Claim 1}.
\begin{align*}
A \leq \sum_{\lambda \in \Lambda} \abs{h(\cdot-\lambda)}^2 \leq B.
\end{align*}
\begin{proof}[Proof of the claim]
Note that
\begin{align*}
\sum_{\lambda \in \Lambda} \abs{h(x-\lambda)}^2
=\sum_{\lambda \in \Lambda} \abs{\ip{h(\cdot-\lambda)}{h(\cdot-x)}}^2.
\end{align*}
Since $\sett{h(\cdot-\lambda):\lambda\in\Lambda}$ is a frame with bounds $A,B$
and $\norm{h}_2=1$, the conclusion follows.
\end{proof}

{\bf Claim 2}.
\begin{align*}
\# (\Lambda\cap Q_{1/2}(x)) \leq \Big(\frac{\pi^2}{4}\Big)^d \,  B.
\end{align*}
\begin{proof}[Proof of the claim]
Since $\frac{\sin \pi t}{\pi t} \geq 2/\pi $ for $|t| \leq 1/2$, we
have $h(x) \geq (2/\pi )^d$ for $x\in [-1/2,1/2]^d =
Q_{1/2}(0)$. Therefore we obtain
\begin{align*}
\Big(\frac{4}{\pi^2}\Big)^d \# (\Lambda\cap Q_{1/2}(x))
\leq \sum_{\lambda \in \Lambda} \abs{h(\lambda-x)}^2 \leq B \norm{h(\cdot-x)}_2^2 = B.
\end{align*}
\end{proof}

{\bf Claim 3}.
\begin{align*}
\int_{\Rdst \setminus Q_{r}(0)} \abs{h(x)}^2 dx \leq \frac{2d}{\pi^2 r},
 \qquad \forall r>0 \, .
\end{align*}
\begin{proof}[Proof of the claim]
Since $\sinc(x)=\sinc(x_1)\dots\sinc(x_d)$ and each one-dimensional sinc
is normalized in $L^2$,  we estimate
\begin{align*}
\int_{x \in \Rdst, \abs{x}_\infty > r} \abs{\sinc(x)}^2 dx
&\leq \sum_{k=1}^d
\int_{x \in \Rdst, \abs{x_k}> r} \abs{\sinc(x)}^2 dx
\\
&= \sum_{k=1}^d \int_{t \in \Rst:\abs{t} > r} \abs{\sinc(t)}^2 dt
\\
&\leq 2d \int_r^\infty \frac{1}{(\pi t)^2} dt = \frac{2d}{\pi^2r}.
\end{align*}
\end{proof}
Combining the claims we get
\begin{align*}
A = A \abs{Q_{1/2}(0)} &\leq \int_{Q_{1/2}(0)} \sum_{\lambda \in \Lambda} \abs{h(x-\lambda)}^2 dx
= \int_{\Rdst} \abs{h(x)}^2 \sum_{\lambda \in \Lambda}
1_{Q_{1/2}(\lambda)}(x) dx 
\\
&= \int_{\Rdst} \abs{h(x)}^2 \# (\Lambda\cap Q_{1/2}(-x)) dx 
\\
&\leq \Big(\frac{\pi^2}{4}\Big)^d  B \int_{\bigcup_{\lambda \in \Lambda} Q_{1/2}(\lambda)} \abs{h(x)}^2 dx.
\end{align*}
Now assume that $\Lambda \cap Q_R(0) = \emptyset$. Then, for every $\lambda \in \Lambda$,
$Q_{1/2}(\lambda) \cap Q_{R-1/2}(0) = \emptyset$. Therefore,
\begin{align*}
A &\leq \Big(\frac{\pi^2}{4}\Big)^d
B \int_{\Rdst \setminus Q_{R-1/2}(0)} \abs{h(x)}^2 dx
\leq \frac{2d}{\pi^2} B \Big(\frac{\pi^2}{4}\Big)^d (R-1/2)^{-1}.
\end{align*}
Hence, $R\leq \bound$.

This means that $\Lambda$ must intersect $[-R,R]^d$ if $R>\bound$, as desired.
(Since $\Lambda$ is closed, it also follows that $\Lambda$ intersects $[-R,R]^d$
for $R=\bound$).
\end{proof}

\subsection{Continuity of sections of convex sets. Proof of Lemma \ref{lemma_sections}.}
\label{sec_proof_lemma_sections}
\begin{proof}[Proof of Lemma \ref{lemma_sections}]
Without loss of generality let us assume that $\Omega_t \not= \emptyset$.
Suppose that the conclusion does not hold. Then there exists a sequence of real numbers
$\{t_n: n \geq 1\}$ such that $t_n \longrightarrow t$ and
\begin{align*}
\Omega_{t_n} \not\subseteq \Omega_t + B_\varepsilon.
\end{align*}
Hence there exist points $x_n \in \Omega_{t_n}$ such that
\begin{align}
\label{eq_dist}
r_n:=d(x_n, \Omega_t)=\inf\{\abs{x_n-y}: y \in \Omega_t\} \geq \varepsilon.
\end{align}
Since $\Omega_t$ is closed, there exists
$y_n \in \Omega_t$ such that $\abs{x_n-y_n} = r_n$.

Consider the sequences $\{(x_n,t_n): n \geq 1\}, \{(y_n,t): n \geq 1\} \subseteq \Omega$.
By passing to subsequences we may assume that both of them are convergent:
\begin{align*}
&(x_n,t_n) \longrightarrow (x,t),
\\
&(y_n,t) \longrightarrow (y,t).
\end{align*}
Hence, $x,y \in \Omega_t$. In addition,
by \eqref{eq_dist}, $r:=\abs{x-y}=\lim_n \abs{x_n-y_n}=\lim_n r_n \geq \varepsilon>0$.

Since $\Omega$ is convex,  so is $\Omega_t$. Consequently, $z=(x+y)/2 \in \Omega_t$.
Let us estimate
\begin{align*}
\abs{x_n-z}\longrightarrow \abs{x-z}=r/2 < r=\lim_n r_n.
\end{align*}
Therefore, there exist $n \in \Nst$ such that $\abs{x_n-z} < r_n$. Since $z \in \Omega_t$,
this contradicts the fact that $r_n=d(x_n, \Omega_t)$.
\end{proof}

\subsection{Sliding convex sets. Proof of Lemma \ref{lemma_slide_b}.}
\label{sec_proof_of_lemma_slide_b}
\begin{proof}[Proof of Lemma \ref{lemma_slide_b}]
Let us enumerate the points of $F$ as
$0 \leq t_0 < \ldots < t_N \leq L$. Without loss of generality we further assume that
$\alpha(t_k) \not= \alpha(t_j)$, for $k\not=j$ (Indeed, if $\alpha(t_k) = \alpha(t_j)$, for some $k \not= j$,
then we may remove $t_j$ from the set $F$ without altering the set $E_F$.). Let us consider the sets
$E_k:=E+\alpha(t_k)$.

For $1 \leq k \leq N$, let $q_k := \alpha(t_{k-1})-\alpha(t_k)$. By Lemma \ref{lemma_slide}, it follows that
\begin{align*}
\abs{E_k \setminus E_{k-1}} &\leq \abs{\clP_{q_k^\perp } (E)} \| \alpha(t_k)-\alpha(t_{k-1})\|_2.
\end{align*}
Since $\alpha(t_{k-1}) \not= \alpha(t_k)$, $q_k \not=0$, for all $k$. Hence, considering
the vectors $q'_k:=\norm{q_k}^{-1} q_k$ we see that
\begin{align*}
\abs{\clP_{q_k^\perp } (E)}
= \abs{\clP_{(q'_k)^\perp } (E)} \leq
\Delta_{E}.
\end{align*}
Therefore,
\begin{align*}
\abs{E_k \setminus E_{k-1}} \leq \Delta_E \| \alpha(t_k)-\alpha(t_{k-1})\|_2,
\qquad 1 \leq k \leq N.
\end{align*}
Let us decompose $E_F$ as
\begin{align*}
E_F := E_0 \cup \bigcup_{k=1}^{N} (E_k \setminus E_{k-1}).
\end{align*}
Hence,
\begin{align*}
\abs{E_F} &\leq \abs{E_0} + \sum_{k=1}^{N} \abs{E_k \setminus E_{k-1}}
\\
&\leq \abs{E} + \sum_{k=1}^{N} \Delta_E \|\alpha(t_k)-\alpha(t_{k-1})\|_2
\leq \abs{E} + \Delta_E \length(\alpha).
\end{align*}
as claimed.
\end{proof}
\bibliographystyle{abbrv}

\end{document}